\definecolor{shadecolor}{gray}{.85}
\newtheorem{theorem}{\bf Theorem}
\newtheorem{lemma}{\bf Lemma}
\newtheorem{corollary}{\bf Corollary}
\newtheorem{definition}{\bf Definition}
\newtheorem{remark}{\bf Remark}
\def\hY{\hat{Y}}
\def\hy{\hat{y}}
\def\hu{\hat{u}}
\def\hS{\hat{S}}
\def\hx{\hat{x}}
\def\hs{\hat{s}}
\def\tR{\tilde{R}}
\def\bX{\mathbf{X}}
\def\styp{\mathcal{A}_{\epsilon}^n}
\def\typ{\mathcal{T}_{\epsilon}^n}
\def\n{\nonumber\\}
\def\be{\begin{equation}}
\def\ee{\end{equation}}
\def\bes{\begin{equation*}}
\def\ees{\end{equation*}}
\def\beq{\begin{eqnarray}}
\def\eeq{\end{eqnarray}}
\def\beqs{\begin{eqnarray*}}
\def\eeqs{\end{eqnarray*}}
\def\ma{{\mathcal A}}
\def\mb{{\mathcal B}}
\def\cc{{\mathcal C}}
\def\mm{{\mathcal M}}
\def\mn{{\mathcal N}}
\def\ms{{\mathcal S}}
\def\mv{{\mathcal V}}
\def\mx{{\mathcal X}}
\def\my{{\mathcal Y}}
\def\mz{{\mathcal Z}}
\def\e{\mathbb{E}}
\def\tv#1{\left\|#1\right\|_1}
\def\apx#1{\stackrel{#1}{\approx}}
\def\iH{\underline{H}}
\def\ind{\mathbbmss{1}}
 \def\clap#1{\hbox to 0pt{\hss#1\hss}}
\providecommand{\keywords}[1]{\textbf{{Index terms---}} #1}
\begin{document}

\date{}
\author{Mohammad~Hossein~Yassaee, Mohammad~Reza~Aref~and~Amin~Gohari\thanks{\scriptsize \noindent 
The authors are with the Information Systems and Security Lab (ISSL), Department of Electrical Engineering, Sharif University of Technology, Tehran, Iran (e-mails: yassaee@ee.sharif.edu; \{aref,aminzadeh\}@sharif.edu). This work is supported by Iran-NSF under grant No. 92-32575. This paper was presented in part at ISIT 2012.}
}
\title{Achievability Proof via Output Statistics of Random Binning}

\maketitle
\begin{abstract}
 This paper {introduces} a new and ubiquitous framework for establishing achievability results in \emph{network information theory} (NIT) problems. The framework uses random binning arguments and is based on a duality between channel and source coding problems. {Further,} the framework uses pmf approximation arguments instead of counting and typicality. This allows for proving coordination and \emph{strong} secrecy problems where  certain statistical conditions  on the distribution of random variables need to be satisfied. {These statistical conditions include  independence between messages and eavesdropper's observations in secrecy problems and closeness to a certain distribution (usually, i.i.d. distribution) in coordination problems. One important feature of the framework is to enable one {to} add an eavesdropper and obtain a result on the secrecy rates ``for free."}

We make a case for generality of the framework by studying examples in the variety of settings  containing channel coding, lossy source coding, joint source-channel coding, coordination, strong secrecy, feedback and relaying. {In particular, by investigating the framework for the lossy source coding problem over broadcast channel, it is shown that the new framework provides a simple alternative scheme to \emph{hybrid} coding scheme. Also, new results on secrecy rate region (under strong secrecy criterion) of wiretap broadcast channel and wiretap relay channel are derived. In a set of accompanied papers, we have shown the usefulness of the framework to establish achievability results for coordination problems including interactive  channel simulation, coordination via relay and channel simulation via another channel.}
\end{abstract}
\keywords{Random binning, achievability, network information theory, strong secrecy, duality.}

\section{Introduction}
Random coding and random binning are widely utilized in
achievability proofs of the network information theory (NIT) problems.
Random coding is a coding technique that is commonly used to prove the existence of a good codebook (which is a subset of the product set $\mx_{[1:T]}^n:=\prod_{i=1}^T \mx_i^n$), while random binning is a coding technique that partitions the product set into bins with desired properties. Existing achievability proofs for {NIT problems are based on a repeated use of random coding and random binning.} 


\par In this paper, we provide an achievability framework which uses only random binning, by converting {NIT} problems into {\emph{certain}} source coding problems. Let us begin by the problem of sending a message $M$ over a channel $p(y|x)$. Traditional random coding considers an encoder $X^n(M,F)$ and a decoder $\hat{M}(Y^n,F)$ where $F$ is a common randomness, independent of $M$, available to both the transmitter and the receiver. R.v. $F$ represents the random nature of codebook generation. Since the probability of error is evaluated by averaging over all realizations of $F$, one can find $f$ such that $X^n(M,F=f)$ and $\hat{M}(Y^n, F=f)$ form appropriate encoder and decoder. In our framework, we depart from this by first generating $n$ i.i.d.\ copies of $X^n$ and $Y^n$; then we take both $F$ and $M$ to be functions of $X^n$ such that $F$ becomes nearly independent of $M$. Note that we still have the property that $p(y^n|x^n,F=f)=p(y^n|x^n)$ and $p(m|F=f)\apx{}p(m)$ meaning that $X^n(M,F=f)$ and $\hat{M}(Y^n, F=f)$ are legitimate choices as \emph{stochastic} encoder and decoder. We construct $F$ and $M$ as random partitions (binnings) of $X^n$. The question then arises that under what conditions two random bin indices are independent (as in the case of $F$ and $M$), and what is the sufficient condition for recovering $X^n$ from $Y^n$ and a bin index $F$.

In Section \ref{s:motivation}, we discuss the preceding argument and further bring up a key and novel duality between the channel coding problem and secret-key agreement (SK) problem (in the source-model sense) by interpreting $M$  as the key  and $F$ as the pubic message. In particular we discuss how an achievability proof for each of these problems can be converted to an achievability proof for the other one.  It turns out that the two questions in the end of the previous paragraph relate to the secrecy and reliability constraints in the SK agreement problem.

Since the SK agreement problem is a source coding problem with secrecy constraint, and has been previously studied using random binning ideas, the duality gives a proof for the point to point channel coding problem by means of random binning only.

To associate an appropriate source coding problem to a given problem, one needs to answer questions similar to the ones for the point-to-point channel coding problem, i.e. the independence and reliability constraints ({for point-to-point channel coding problem, we had the independence constraint {on $F$ and $M$}  and the reliability constraint {of  recovering $X^n$ from $Y^n$ and $F$}}).  To answer these questions in a more general framework, in Section \ref{s:1}, we prove two main theorems on approximating the joint pmf {(or statistics)} of the bin indices in a distributed random binning. We study properties of random binning in two extreme regimes, namely, when the binning rates are low and high. In the first case, we observe that if the rates of a distributed random binning are sufficiently small, the bin indices are nearly jointly independent, uniformly distributed and independent of a non-binned source $Z^n$. {We call this theorem ``Output Statistics of Random Binning (OSRB) theorem}". This result generalizes the one for the channel intrinsic randomness \cite{bloch}. The second case is the SW region, which shows that if the rates of distributed binning are sufficiently large, the outputs of random binning are enough to recover the sources. {Since the framework deals with the output statistics of random binning, we call the framework as OSRB framework.}

\subsection{Particular features of OSRB}
The proposed framework differs from traditional techniques in the following significant ways:
\begin{itemize}
\item It uses random binning only.
\item  It brings part of the randomness of random codebook
generation from the background into the foreground as
an explicit random variable.
\item It is not based on notions of ``counting" size of typical sets, or typicality decoding. Instead, it uses probability approximation in the sense of vanishing total variation {distance}. This has important implications in problems of secrecy and coordination, as discussed in subsection \ref{subsection:advantage}.
\item The technique allows us to add secrecy for free. Thus, for instance, going from traditional point-to-point communication problem to the wiretap channel problem is immediate.
\item The advantage of the conversion to an \emph{{appropriate}} source coding problem is that we only have \emph{one} copy of the random variables; all the messages and preshared randomness are next constructed as random bins of these i.i.d.\ rv's. However a direct approach to the channel coding problem requires dealing with a large codebook containing \emph{lots of} codeword sequences.
\item  While the traditional techniques view superposition coding and Marton coding as distinct coding constructions, in our framework the two constructions are nothing but two different ways of specifying the set of i.i.d.\ rv's we are binning. Thus, the new framework unifies the two coding strategies, for it \emph{only} uses random binning.
\end{itemize}

\subsection{Advantages of the proposed method}\label{subsection:advantage}
The proposed method has a simple structure (using only random binning), and can solve some problems much easier than the traditional techniques; see \cite{me2, me3} for two examples that are not included in this manuscript. These examples consider the problems of channel simulation and coordination. In \cite{me2}, we find an exact computable characterization of a multi-round channel simulation problem for which only inner and outer bounds were known previously. In coordination problems \cite{cuff},  we want to generate random variables whose joint distribution is close to a desired i.i.d.\ distribution in total variation distance. Traditional techniques (such as packing and covering lemmas) commonly address the probability of error events. This is not general enough to cover all of the total variation distance constraints that show up in the coordination problems. In such cases, one has to come up with new proof techniques. One particular case is the resolvability (or soft covering) lemma used by Cuff (see \cite[Lemma IV. 1]{cuff-trans}, {\cite{wcm,han-verdu}}).

In addition to what discussed above, the framework leads to more rigorous and simpler proofs for secrecy problems. In secrecy problems one has to deal with certain equivocation rates. Generally speaking there are two main techniques for proving lower bounds on equivocation rates: one is to prove existence of ``good" codebooks with given properties, which are then used to compute the equivocation rates. This approach was originally used by Csiszar and Korner in \cite{Csiszar}. The second approach is to compute the expected value of equivocation rates over codebooks, and prove existence of a ``good" codebook with large equivocation rate (in the same way that a codebook with small probability of error is identified).  Some existing works on secrecy follow the second approach in a non-rigorous way. Instead of defining a random variable for the random codebook and conditioning the equivocation rates by that, they use the unconditioned distribution to calculate the equivocation rates. The recent book by El Gamal and Kim \cite{elgamal} uses the second approach in a rigorous way. However, in some scenarios, calculation of equivocation rates conditioned on the codebook random variable can be involved. {We observe that the OSRB framework leads to simple proofs in such cases.} In fact, we show that whenever one solves a problem without secrecy constraint using OSRB framework, he can get a solution for this problem with addition of a secrecy constraint \emph{for free}! Moreover, we can \emph{directly}  prove strong secrecy results for multi terminal scenarios.

\subsection{Related previous works}

Some connections between certain source coding and channel coding problems have been observed in previous works. Slepian and Wolf, in their seminal paper on the lossless source coding \cite{sw}, interpreted the achievability of the rate $R=H(X|Y)$ for compressing the source $X^n$ at rate $R$ to a destination with access to the source $Y^n$, through a channel coding problem. In contrast, Csiszar and Korner, obtained an achievability proof for multiple access channels (MAC) through the distributed source coding problem of Slepian and Wolf \cite{csiszar}. In a recent work \cite{renner}, Renes and Renner showed the achievability of the channel capacity via a combination of Slepian-Wolf (SW) coding and privacy amplification. The main theme in these works is that the set of sequences mapped to the same index through SW coding constitutes a good channel code, and hence, we have a decomposition of the product set into the channel codebooks. However, these works do not provide a systematic and ubiquitous framework for proving achievability results.

Some of the ideas in this work were inspired by the work of Cuff \cite{cuff-trans}. These include use of pmfs as random variables, preserving joint statistics and reverse encoders. However the two frameworks have significant differences in terms of codebook construction and proofs. We consider our framework simpler and more general for the following reason: Cuff's framework is not easily applicable to complicated network structures (such as coordination with relay \cite{me3}), since if one were to extend Cuff's ``soft-covering" lemma to these scenarios, one has to define various mutual soft-covering lemmas and various codebook constructions (just like the traditional mutual covering lemmas). Further, binning provides a common framework and bypasses the need for proving mutual covering lemmas. 

Our approach for proving strong secrecy results resembles the resolvability techniques \cite{hayashi,bloch1}, but to best of our knowledge, resolvability techniques are not developed for multi terminal scenarios except for one work on MACs by Steinberg \cite{Steinberg}. The latter result has been used in \cite{me-itw10} and \cite{bloch2} 
 to prove strong secrecy results for multiple-access wiretap channels and two-way wiretap channels, respectively. However our approach  is able to deal with strong secrecy in general multi terminal scenarios.  There is also another approach for proving  strong secrecy results using the ideas of  privacy amplification of Maurer and Wolf \cite{Maurer}. In this technique, one first proves weak secrecy for a problem and then employs privacy amplification to extract secret message or key in the strong sense. We are not aware if this technique has been extended to the multi-terminal setting. Regardless, in this technique one needs to prove weak secrecy which may be difficult in general multi-terminal setting using traditional techniques. In contrast, the OSRB framework leads to a simple and direct proof for strong secrecy in multi-terminal setting.    

There are connections between the OSRB framework and recent \emph{hybrid coding} approach of Minero, Lim and Kim \cite{hybrid}. In fact, the OSRB framework implicitly employs hybrid coding by its construction. This is discussed in details in Remark \ref{rmkhybridcoding}.

It was brought to our attention by Muramatsu that a structure similar to OSRB based on random binning (more generally, hash functions) has been used in his works \cite{jun:channel,jun:wt12}. While the use of random binning in the works of Muramatsu et al. is similar to ours (in particular its use of binning to get Marton coding), obtaining superposition coding part of Marton's inner bound via binning has been left as an open problem (this can be done by binning nested sets of variables in our framework) \cite{jun:bc}. More importantly, our construction of stochastic encoders-decoders (based on pmf decompositions and using the terms in the decomposition to define encoders) differs from the ones in these works. Further,  Muramatsu et al.'s works use typicality lemmas and counting approximation tools, whereas we use pmf approximation together with the idea of preserving joint pmf among rv's in the source coding and channel coding forms of a problem. Lastly, we apply our framework to a much wider range of problems including those with feedback and relay, and also prove new achievability results. On the other hand, Muramatsu et al.'s works are interested in designing practical codes, whereas we are not.

The OSRB framework is inspired by certain duality between channel coding and the source model SK problem. Broadly speaking, there are  two kinds of duality in the literature, namely \emph{functional} duality and \emph{operational} duality. Functional duality is the duality between formal expressions of the primal and dual problems, e.g. the duality between the mutual information terms in the channel capacity and rate-distortion functions. This type of duality was first pointed out by Shannon between source and channel coding problems \cite{shannon59}. Other examples include duality between source coding with side information and channel coding with state information \cite{cover,pradhan}, duality between packing and covering lemmas and binning and multicoding \cite{elgamal}. The functional duality does not provide an explicit relation between solutions of the primal and the dual problems. On the other hand, operational duality provides a way to construct a solution (a code) for the primal problem using a solution for the dual problem. Operational duality was explored in \cite{verdu-gupta} for lossy compression and channel coding problems, showing that a certain channel decoder can be used as a lossy compressor. The duality used in the OSRB is an operational one.


\subsection{Organization}
This paper is organized as follows: in Section \ref{s:motivation}, we illustrate the main idea of converting a channel coding problem to a source coding problem by showing an interesting duality between the channel coding and SK agreement problems. We also discuss in Subsection \ref{sub:free} how one can obtain secrecy for free from the proof for a problem without secrecy constraint. In Section \ref{s:1}, we state the main theorems to approximate pmfs.    In Section \ref{s:2}, we begin by demonstrating our approach for some primitive problems of NIT, i.e. channel coding and lossy source coding problems, before getting into our new results. Moreover, we show that the achievability proof for channel coding problem can be extended for free to an achievability proof for wiretap channel. We also illustrate how our framework can be used to prove channel (network) synthesis problems by applying our framework to the original channel synthesis problem \cite{cuff-trans}, studied by Cuff, and apply our approach to complicated networks with more than two users. In Subsection \ref{sub:3r} we apply our framework to obtain a new achievable rate region for the problem of three receiver wiretap broadcast channel under a strong secrecy criterion. In Subsection \ref{sub:BT}, we re-prove the achievable rate region for the problem of distributed lossy compression, due to Berger and Tung. In Subsection \ref{sub:jscc}, the OSRB framework is applied to the problem of lossy coding over broadcast channels. In Subsection \ref{sub:nnc}, we show the applicability of OSRB framework to multi-hop networks. To do this, we consider relay channel and re-prove the noisy network coding (NNC) \cite{NNC} inner bound for this problem. We also easily extend the proof to get an extension of NNC inner bound for the problem of wiretap relay channel with strong secrecy criterion, which was not known before. In Section \ref{s:3}, we discuss connections between our framework and the covering  lemma in a multivariate setup by observing that the set of typical sequences can be decomposed into covering 
 codebooks.
\subsection{Notations} In this paper,
we use $X_{\ms}$ to denote $(X_j:j\in\ms)$,
$p^U_{\ma}$ to denote the uniform distribution over the set $\ma$ and $p(x^n)$ to denote the the i.i.d.\ pmf $\prod_{i=1}^np(x_i)$, unless otherwise stated. The total variation between two pmf's $p$ and $q$ on the same alphabet $\mx$ , is defined by $\tv{p(x)-q(x)}:=\frac{1}{2}\sum_x|p(x)-q(x)|$.

\begin{remark} Similar to \cite{cuff-trans}, in this work we frequently use the concept of \emph{random} pmfs which we denote by capital letters (e.g. $P_X$). For any countable set $\mx$, let $\Delta^{\mx}$ be the probability simplex for distributions on $\mx$. A random pmf $P_X$ is a probability distribution over $\Delta^{\mx}$. In other words, if we use $\Omega$ to denote the sample space, the mapping $\omega\in \Omega \mapsto P_X(x;\omega)$ is a random variable for all $x\in\mx$ such that $P_X(x;\omega)\geq 0$ and $\sum_{x}P_X(x;\omega)=1$ for all $\omega$. Thus, $\omega\mapsto P_X(\cdot;\omega)$ is a vector of random variables, which we denote by $P_X$. We can definite $P_{X,Y}$ on product set $\mx\times\my$ in a similar way. We note that we can continue to use the law of total probability with random pmfs (e.g. to write $P_X(x)=\sum_{y}P_{XY}(x,y)$ meaning that $P_X(x;\omega)=\sum_yP_{XY}(x,y;\omega)$ for all $\omega$) and the conditional probability pmfs (e.g. to write $P_{Y|X}(y|x)=\frac{P_{XY}(x,y)}{P_X(x)}$ meaning that $P_{Y|X}(y|x;\omega)=\frac{P_{XY}(x,y;\omega)}{P_X(x;\omega)}$ for all $\omega$).
\end{remark}
\section{Motivation}\label{s:motivation}
The key technique used in the OSRB is to covert an primary problem to a dual problem such that the statistics (i.e. the joint distribution of the r.v.'s) of the primary problem and the dual problem are almost identical. The dual problem is more tractable than the primary one. Solving the dual problem implies a solution for the primary problem. We illustrate this technique by showing a duality between channel coding for a point to point (PTP) channel and the secret key agreement (source model) problem. Indeed, we show how one can use the (Shannon's) achievability proof of channel coding to obtain an achievability proof for the secret key problem and vice versa. Indeed, this duality yields previously unknown results about the source model problem as discussed in Remark \ref{rmk1ll}.
\subsection{Duality between channel coding  and secret key agreement}
\subsubsection{Shannon's achievability proof results in a SK achievability proof} \par
Consider Shannon's achievability proof for the problem of sending a \emph{uniform} message $M$ of rate $R$ over a DMC channel $p_{Y|X}$. In his proof, Shannon used a random codebook $\cc=\{X^n(m)\}_{m=1}^{2^{nR}}$ in which the codewords are generated independently according to an i.i.d.\ pmf $\prod_{i=1}^np_X(x_i)$. The codebook is  shared between the encoder and the decoder. Thus, the random codebook can be viewed as a shared randomness. Given the message $M$ and the codebook $\cc$, the encoder sends $X^n(M,\cc)$ over the channel. The decoder uses his observation $Y^n$ and the codebook $\cc$ to estimate the transmitted message. Shannon showed that the error probability, averaged over the random codebooks, is small; therefore there exists a \emph{good} codebook with a negligible error probability. Let $p_{M\cc X^nY^n}$ be the induced pmf on the message, codebook, channel input and the channel output. The following observations are useful in the rest of this subsection:
\begin{itemize}
\item The codebook $\cc$ and the message $M$ are independent; thus, $p_{M\cc}=p^U_{M}p_{\cc}$. Hence, conditioned on an instance of the codebook, the uniformity of the message is not disturbed.
\item While for a fixed codebook the channel input distribution is uniform over the codewords, the i.i.d.\ generation of codebook makes the input distribution i.i.d., that is, $p_{X^n}(x^n)=\prod_i p_X(x_i)$.
\item The Markov chain $M,\cc-X^n-Y^n$ holds and the channel is DMC. Thus, the joint distribution of channel input and channel output is i.i.d., that is, $p_{X^nY^n}(x^n,y^n)=\prod_i p_{XY}(x_i,y_i)$. Moreover we have $p_{M\cc X^nY^n}=p^U_{M}p_{\cc}p_{X^n|M,\cc}p_{Y^n|X^n}=p_{X^n}p_{M\cc |X^n}p_{Y^n|X^n}$.
\end{itemize}

\begin{figure}
\centering\includegraphics[width=.55\linewidth]{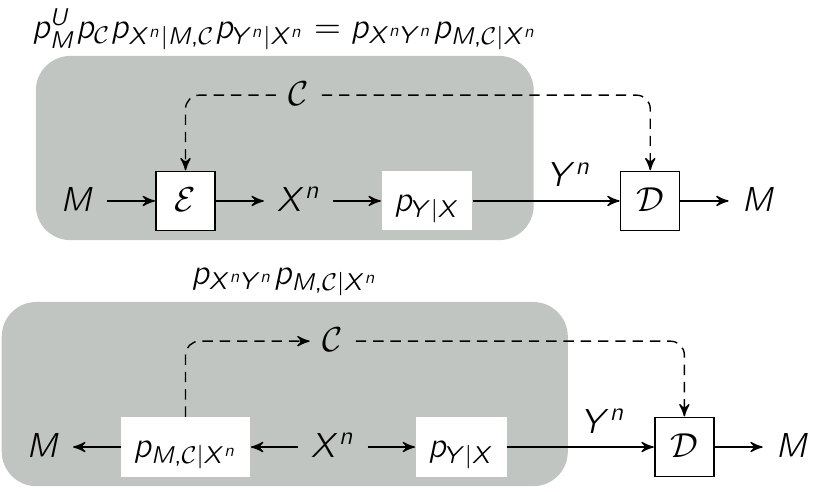}
        \caption{\small (Top) Shannon's achievability proof with random codebook as a pre-shared randomness. Here, the codebook and the message are independent. Randomness in the codebook makes the input and output jointly i.i.d.\ (Bottom) Source model SK problem with the i.i.d.\ correlated sources $X^n$ and $Y^n$. Reversing the encoder gives a feasible encoder for the SK problem, in which $M$ and $\cc$ take the roles of key and public message, respectively. Since the joint distribution of all r.v.'s is preserved, the uniformity of the message (key) and the independence between  key (message) and public message (random codebook) are also preserved.}\label{fig:sh}
\end{figure}

These observations are illustrated in the top diagram of Fig. \ref{fig:sh}. To convert Shannon's achievability proof to a SK achievability proof, we proceed as follows. Since $p_{M\cc X^n}=p_{X^n}p_{M\cc |X^n}$,  one can think of this as passing an i.i.d.\ source $X^n$ through a reverse encoder $p_{X^n}p_{M\cc |X^n}$ to obtain $M$ and $\cc$. This is depicted in the bottom diagram of Fig. \ref{fig:sh} where we have changed the direction of the arrows to reflect this change of order. Moreover, as $(X^n,Y^n)$ are jointly i.i.d., one can consider $p_{M\cc |X^n}$ as an encoder for the SK problem in which $\cc$ and $M$ are the public message and key, respectively. For decoding, we take the decoder of channel coding problem and use it for the SK problem. Observe that the joint distribution of r.v.'s in the channel coding problem and the SK problem are equal; thus these models are equivalent. In particular,
\begin{itemize}
\item  The key $M$ and the public message $\cc$ are independent.
\item The error probability of decoding of the key $M$ is equal to that of channel coding. Thus, if the error probability of the channel coding is negligible, then the error probability of SK problem is also negligible. This shows that the rate $I(X;Y)$ is achievable.
\end{itemize}
To sum this up, Shannon's achievability proof results in a SK achievability proof. Further, we have \emph{complete} independence between the key and the public message.
\begin{remark}\label{rmk1ll}
Although the preceding argument is used to prove the SK achievability result in the asymptotic regime for the i.i.d.\ sources, it can be applied to one shot (single-use) regime. To see this, one can replace the i.i.d.\ sources $X^n$ and $Y^n$ with sources $X$ and $Y$, generate codebook according to $p_X$ instead of the i.i.d.\ $p_{X^n}$ and use $p_{M\cc |X}$ instead of $p_{M\cc |X^n}$. Then, the error probability of Shannon's achievability proof and SK achievability proof are the same and we have \emph{complete} independence between the key and the public message. In addition, applying this result to general sources in the asymptotic regime implies that the key-rate $\underline{I}(X;Y)$ is achievable using its achievability for channel coding with general input-output (see \cite{book:han} for a definition of a general input-output channel). This potentially improves on the previous random binning bound  $\underline{H}(X)-\overline{H}(X|Y)$ in \cite{bloch}. More importantly, this proof technique is not restricted to discrete sources and can be applied to any correlated sources with abstract alphabets.
\end{remark}
\subsubsection{SK achievability proof results in an achievability proof for channel coding problem}\label{subs:bin}
The traditional SK achievability proof is based on a random binning argument. Similarly, we  show that a random binning argument can be used to prove the achievability part of the PTP channel coding problem. In SK agreement problem, we have  i.i.d.\ copies of correlated sources $(X^n,Y^n)$. The traditional  SK achievability proof uses two random bin indices of the source $X^n$ to obtain the public message $F$ and the key $M$. The relation among r.v.'s is depicted in the top diagram of Fig. \ref{fig:bin}. The random bin $F$ serves as a Slepian-Wolf (SW) index with rate $R_F>H(X|Y)$. It enables the receiver to recover $X^n$ with high probability. Through this, it can recover $M$ as a bin index of $X^n$. Next, we consider the channel coding counterpart. Again, one can interpret the key as the message and the public message as the shared randomness. We use reverse encoder $P_{X^n|MF}$ obtained from random binning as a stochastic encoder for the channel coding problem. Also, we use the decoder of SK problem as a channel decoder. The relation among r.v.'s in the channel coding counterpart is depicted in the bottom diagram of Fig. \ref{fig:bin}. If the joint distribution of $M$ and $F$ is equal to the $P_{MF}$ (induced by random binning), then the joint distribution of all r.v.'s in the SK problem and its channel coding counterpart are equal which implies that the error probability  of channel coding problem is negligible. To get away with shared randomness, one can find a good instance $F=f$ of the shared randomness such that $\Pr(\texttt{error}|F=f)$ is also negligible. However, conditioned on $F=f$, the distribution $P_{M|F=f}$ may be disturbed and it is not necessarily uniform. Therefore, we are interested to finding constraint on the rates of $M$ and $F$ such that the following properties hold:
\begin{figure}
\centering\includegraphics[width=.55\linewidth]{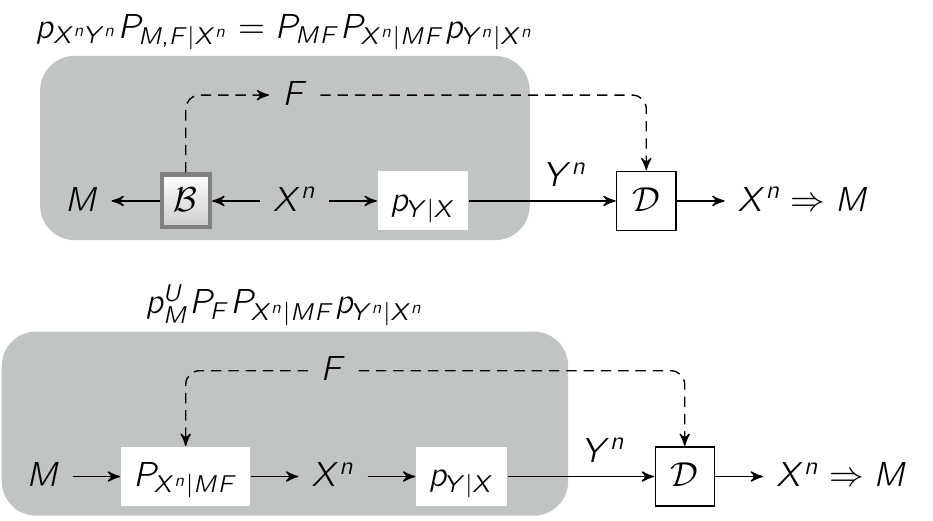}
        \caption{ (Top) SK achievability using random binning. $F$ is a SW bin index with rate  $R_F>H(X|Y)$. This results in the reliable decoding of the source $X^n$ and consequently reliable decoding of the key $M$. (Bottom) Channel coding counterpart of SK problem. Reversing the encoder gives a feasible encoder for the channel coding problem, in which $M$ and $F$ take the role of message and shared randomness, respectively. The SK problem and its channel coding counterpart are equivalent, if the secrecy requirements of SK problem are satisfied; that is, $P_{MF}\apx{} p^U_MP_F$. The constraint $R_F+R_M<H(X)$ is sufficient to guarantee this approximation.}\label{fig:bin}
\end{figure}

\begin{itemize}
\item $M$ is almost a uniform random variable,
\item $M$ and $F$ are almost independent. This ensures that conditioned on an instance $F=f$, the uniformity of the message is not disturbed.
\end{itemize}
These two properties are the secrecy requirements of the SK problem. Using a result of \cite{csiszar:96,bloch}, one can see that these two properties hold as long as $R_F+R_M<H(X)$.

The above argument is a common one used in the OSRB framework. We always associate a source coding problem to a given problem, calling it ``the source coding side of problem". In this simple example, the top diagram of Fig. \ref{fig:bin} is the source coding side of the channel coding problem. We then convert the associated source coding side to the main problem using appropriate reverse encoders, with one exception; here we have added a shared randomness to the main problem. We then find constraints that the joint distribution of r.v.'s in the main problem and the source coding side are approximately equal. Next we find constraints that satisfy the desired properties such as reliability and secrecy in the source coding side. Finally, we remove the shared randomness  without disturbing the desired properties.

The advantage of conversion to a source coding problem is that in the source coding side of the problem we only have one copy of {i.i.d.} random variables. In the source coding side of the problem discussed above, we started from a single i.i.d.\ copy of $X^n,Y^n$. All the other rv's (i.e. $M$ and $F$) are random bins of these i.i.d.\ rv's. However if we were to directly attack the channel coding problem, we had to create a codebook of size $2^{nR}$ containing lots of $x^n$ sequences. This may not seem significant in this simple channel coding example. However, in problems involving multi-round interactive communication with several auxiliary random variables (e.g. [4]), it is desirable to have just a single i.i.d.\ repetition of all the original and auxiliary random variables in our framework (rather than having many i.i.d.\ copies of these random variables related to each other through superposition or Marton coding type structures). Once we take a single i.i.d.\ copy, all the messages and pre-shared randomness (such as $F$) can be constructed as random bins of these i.i.d.\ rv's. Traditional coding techniques start with the messages and then create the many codewords. Here, we are reversing the order by starting from a single i.i.d.\ copy of the original and auxiliary rv's and constructing the messages as bin indices afterwards.

 \subsection{Secrecy is free!}\label{sub:free}
 One advantage of the proposed framework is to solve secrecy problems for free!, in the sense that once a  problem without secrecy constraint is solved, the corresponding problem with secrecy constraint can be solved with minor modifications. To illustrate this, we show how the achievability proof of the channel coding problem using random binning gives an achievability proof for wiretap channel problem for free.

 \begin{figure}
\centering\includegraphics[width=.55\linewidth]{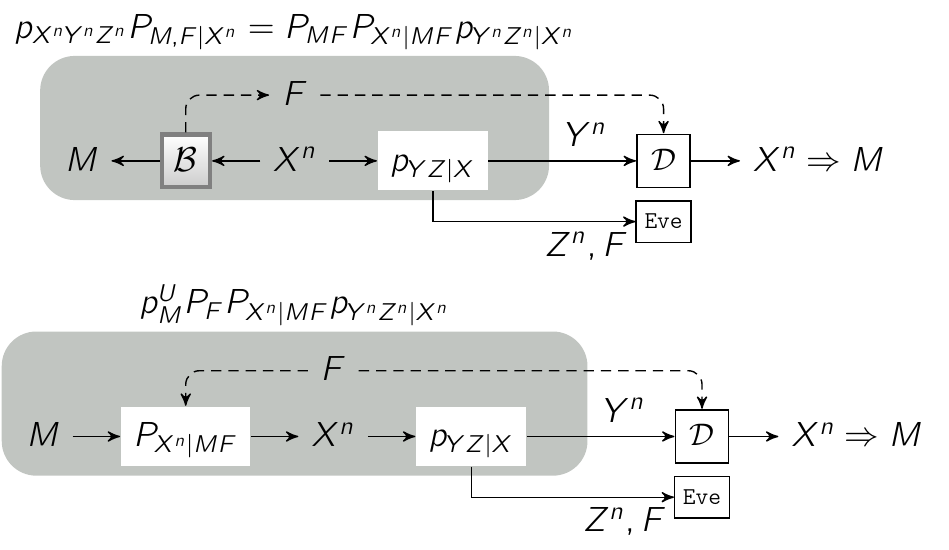}
        \caption{{\small (Top) Source coding side of the wiretap channel. Here, the eavesdropper has access to the shared randomness $F$ in addition to channel output $Z^n$. (Bottom) Wiretap channel. We need to have equivalence between wiretap channel and its source coding side and the secrecy constraint $M\bot(F,Z^n)$. For these to happen, it suffices to have mutual independence among $M$, $F$ and $Z^n$. This holds as long as $R_F+R_M<H(X|Z)$.}}\label{fig:binsec}
\end{figure}
Consider the diagrams in the Fig. \ref{fig:binsec} which are the same as the ones in the Fig. \ref{fig:bin} for channel coding problem with one exception; we have added eavesdropper to this figure. Following the argument used in the subsection \ref{subs:bin}, we have the equivalence between the source coding side of the problem (the top diagram) and the wiretap channel (the bottom diagram), i.e. the joint distribution of all r.v.'s are approximately equal, as long as $R_F+R_M<H(X)$. Thus, it suffices to ensure the secrecy constraint for the source coding side  and it will automatically hold for the channel coding side of the problem. Since $F$ is a shared randomness, eavesdropper has access to it. Thus, the secrecy requirement is the independence between  $M$ and $(F,Z^n)$ available at the eavesdropper (it is worth to note that conditioned on an instance of $F=f$, the independence between $M$ and $Z^n$ conditioned on $F=f$ is satisfied.). In fact, we obtain a constraint on the rates of $M$ and $F$ such that $M$, $F$ and $Z^n$ are almost mutually independent. This immediately implies the desired independence. It turns out that this condition holds as long as $R_F+R_M<H(X|Z)$. Comparing this with the constraint $R_F+R_M<H(X)$ for the channel coding problem without secrecy constraint (coming from the independence of $M$ and $F$ without $Z^n$), we observe that $Z$ is only added to the conditioning part of the entropy in the constraint. This is a common phenomenon in secrecy problems. Having solved a problem without secrecy using OSRB, the corresponding problem with secrecy can be solved by adding eavesdropper's information to the conditioning part of appropriate constraints appearing in the solution of the problem without secrecy; thus, our remark that secrecy is free in the OSRB framework.

Finally, the reliability constraint $R_F>H(X|Y)$ and the secrecy constraint $R_F+R_M<H(X|Z)$ give the achievability of the rate $R_M<I(X;Y)-I(X;Z)$. The achievability of more general formula $R_M<I(U;Y)-I(U;Z)$ can be proved using the combination of channel prefixing technique and the above argument.
\section{Output statistics of random binning}\label{s:1}
Let $(X_{[1:T]},Z)$ be  a {discrete memoryless correlated sources} distributed according to a joint pmf $p_{X_{[1:T]},Z}$ on a countably infinite set $\prod_{i=1}^T \mx_i\times \mz$. A distributed  random  binning consists of a set of random mappings $\mb_i: \mx_i^n\rightarrow [1:2^{nR_i}]$, $i\in[1:T]$, in which $\mb_i$ maps each sequence of $\mx_i^n$ uniformly and independently to the set $[1:2^{nR_i}]$. We denote the random variable $\mb_t(X_t^n)$ by $B_t$. {Also we denote the realization of $B_t$ by $b_t$.} A random distributed  binning induces the following \emph{random pmf} on the set $\mx_{[1:T]}^n\times\mz^n\times\prod_{t=1}^T [1:2^{nR_t}]$,
\[
P(x^n_{[1:T]},z^n,b_{[1:T]})=p(x_{[1:T]}^n,z^n)\prod_{t=1}^T\ind\{\mb_t(x_t^n)=b_t\},
\]
where we have used capital $P$ to indicate the probabilistic random binning, implying that the pmf induced on $x^n_{[1:T]},z^n,b_{[1:T]}$ is random. One can easily verify that $(B_1,\cdots,B_T)$ are uniformly distributed and mutually independent of $Z^n$ in the mean, that is
\bes
\label{eq:apx}
\e P(z^n,b_{[1:T]})=2^{-n\sum_{t=1}^TR_t}p(z^n)=p(z^n)\prod_{t=1}^T p^U_{[1:2^{nR_t}]}(b_t).
\ees
\par The following theorem finds constraints on the rate-tuple $(R_1,\cdots,R_T)$, such that the preceding observation about the mean holds for almost any realization of the distributed binning. We will be using this theorem frequently in the proofs. A more general form of this theorem is provided and proved in Appendix \ref{apx:osrb}.
\begin{theorem}\label{thm:re}
If for each $\ms\subseteq [1:T]$, the following constraint holds
\begin{align}
\sum_{t\in\ms}R_t<H(X_{\ms}|Z),
\end{align}
then as $n$ goes to infinity, we have
\be
\e_{{\mb}}\tv{P(z^n,b_{[1:T]})-p(z^n)\prod_{t=1}^T p^U_{[1:2^{nR_t}]}(b_t)}\rightarrow 0,
\ee
{where $\mb$ is the set of all random mappings, i.e. $\mb=\{\mb_i:i\in[1:T]\}$}.
\end{theorem}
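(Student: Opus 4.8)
The plan is to combine a truncation to typical sequences with a second-moment (variance) estimate, generalizing the $T=1$ intrinsic‑randomness argument of \cite{bloch}. Write $R:=\sum_{t=1}^{T}R_t$, let $\typ$ denote the set of ($\epsilon$-)typical sequences with respect to $p_{X_{[1:T]},Z}$ and the relevant marginals, and define the truncated sub-pmf $\tilde p(x^n_{[1:T]},z^n):=p(x^n_{[1:T]},z^n)\ind\{(x^n_{[1:T]},z^n)\in\typ\}$ together with the random sub-pmf $\tilde P(z^n,b_{[1:T]}):=\sum_{x^n_{[1:T]}}\tilde p(x^n_{[1:T]},z^n)\prod_{t=1}^{T}\ind\{\mb_t(x^n_t)=b_t\}$. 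By the triangle inequality,
\[
\tv{P(z^n,b_{[1:T]})-p(z^n)\prod_t p^U_{[1:2^{nR_t}]}(b_t)}\le\tv{P-\tilde P}+\tv{\tilde P-\e_\mb\tilde P}+\tv{\e_\mb\tilde P-p(z^n)\prod_t p^U_{[1:2^{nR_t}]}(b_t)}.
\]
Summing out the index-tuple $b_{[1:T]}$ makes the binning indicators disappear, so the first and third terms do not depend on $\mb$ at all and both equal $\tfrac12\Pr\{(X^n_{[1:T]},Z^n)\notin\typ\}$ (using $\e_\mb\tilde P(z^n,b_{[1:T]})=2^{-nR}\sum_{x^n}\tilde p(x^n,z^n)$ and $p(z^n)\prod_t p^U_{[1:2^{nR_t}]}(b_t)=2^{-nR}p(z^n)$); this vanishes by the AEP. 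Hence the theorem reduces to the concentration statement $\e_\mb\tv{\tilde P-\e_\mb\tilde P}\to0$.

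For the concentration step I would bound, pointwise in $(z^n,b_{[1:T]})$, $\e_\mb|\tilde P-\e_\mb\tilde P|\le\sqrt{\mathrm{Var}_\mb(\tilde P(z^n,b_{[1:T]}))}$ by Jensen, and sum over $(z^n,b_{[1:T]})$, noting that $\tilde P$ is supported on typical $z^n$ and on the index set of size $2^{nR}$. The crux is the variance. Writing $A_{x^n}:=\prod_t\ind\{\mb_t(x^n_t)=b_t\}$, one has $\e_\mb[A_{x^n}]=2^{-nR}$, and for a pair $(x^n,\tilde x^n)$ the joint expectation factorizes over coordinates: with $\ms=\ms(x^n,\tilde x^n):=\{t:x^n_t=\tilde x^n_t\}$, $\e_\mb[A_{x^n}A_{\tilde x^n}]=2^{-nR}2^{-n\sum_{t\notin\ms}R_t}$, so $\mathrm{Cov}_\mb(A_{x^n},A_{\tilde x^n})=2^{-2nR}(2^{n\sum_{t\in\ms}R_t}-1)$, which vanishes when $\ms=\emptyset$. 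Expanding $2^{n\sum_{t\in\ms}R_t}-1=\sum_{\emptyset\ne\ms\subseteq[1:T]}\prod_{t\in\ms}(2^{nR_t}-1)\,\ind\{x^n_\ms=\tilde x^n_\ms\}$, collecting terms, and then using $\sum_{x^n_{\ms^c}}\tilde p(x^n_\ms,x^n_{\ms^c},z^n)\le p(x^n_\ms,z^n)\ind\{(x^n_\ms,z^n)\in\typ\}$ together with the conditional-typicality bound $p(x^n_\ms\mid z^n)\le 2^{-n(H(X_\ms|Z)-\delta_\epsilon)}$, I expect to reach
\[
\mathrm{Var}_\mb(\tilde P(z^n,b_{[1:T]}))\le 2^{-2nR}\,p(z^n)^2\!\!\sum_{\emptyset\ne\ms\subseteq[1:T]}\!\!2^{-n(H(X_\ms|Z)-\sum_{t\in\ms}R_t-\delta_\epsilon)}.
\]
By hypothesis $\sum_{t\in\ms}R_t<H(X_\ms|Z)$ for every nonempty $\ms$, so choosing $\epsilon$ (hence $\delta_\epsilon$) small enough makes all finitely many exponents strictly negative, giving $\sqrt{\mathrm{Var}_\mb(\tilde P(z^n,b_{[1:T]}))}\le 2^{-nR}p(z^n)\,2^{T/2}2^{-n\mu/2}$ for some $\mu>0$; summing over $(z^n,b_{[1:T]})$, the factor $2^{nR}$ counting the index-tuples cancels and $\sum_{z^n}p(z^n)\le1$, leaving a bound $2^{T/2}2^{-n\mu/2}\to0$.

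The step I expect to be the main obstacle is the variance computation: recognizing that $\mathrm{Cov}_\mb(A_{x^n},A_{\tilde x^n})$ depends only on the set of coordinates on which $x^n$ and $\tilde x^n$ agree, and then carrying out the bookkeeping so that, for each $\ms$, the growth $2^{n\sum_{t\in\ms}R_t}$ is dominated by the typical-set decay $2^{-nH(X_\ms|Z)}$ uniformly in $z^n$ — this is precisely where the subset constraints of the theorem enter, one strict inequality per nonempty $\ms$. A secondary technicality is that the source alphabets are only countably infinite, so one needs a notion of typicality and an AEP valid in that setting (which, I take it, is exactly what the more general statement proved in Appendix~\ref{apx:osrb} supplies), and one must keep track that the slack $\delta_\epsilon\to0$ as $\epsilon\to0$ so that all of the strict inequalities can be accommodated simultaneously.
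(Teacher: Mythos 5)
Your argument is correct, but it takes a genuinely different route from the paper's. The paper proves Theorem \ref{thm:re} in Appendix \ref{apx:osrb} through a one-shot fidelity bound: using Jensen's inequality for the convex map $x\mapsto 1/\sqrt{x}$ it shows $\e F\bigl(P(b_{[1:T]},z);p^U(b_{[1:T]})p(z)\bigr)\ge \e\sqrt{1\big/\bigl(1+\sum_{\emptyset\neq\ms\subseteq[1:T]}\mathsf{M}_{\ms}2^{-h(X_{\ms}|Z)}\bigr)}$ (Theorem \ref{thm:oneshot}), converts expected fidelity into expected total variation via Lemma \ref{le:7}, and only then specializes to i.i.d.\ sources using the entropy-typical set $\styp$ and the WLLN. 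You instead truncate to the typical set and run a second-moment, channel-resolvability-style argument: your observation that the covariance of the bin-indicator products depends only on the set of components where the two sequences agree is right, the inclusion--exclusion expansion of $2^{n\sum_{t\in\ms}R_t}-1$ organizes the pair-counting correctly, and the subset constraints make every exponent strictly negative while the $2^{nR}$ index tuples cancel the $2^{-nR}$ in the per-point bound, so the scheme closes. Both proofs need exactly the same family of subset constraints and the same one-sided entropy typicality (define $\styp$ by $\frac1n h(x^n_{\ms}|z^n)\ge H(X_{\ms}|Z)-\epsilon$ for every $\ms$; then the conditional-typicality bound you invoke holds by definition and the AEP follows from the WLLN with a union bound over the finitely many subsets, which settles the countable-alphabet worry you flag). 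The trade-off: the paper's fidelity/Jensen route gives a clean non-asymptotic one-shot inequality that is alphabet-free and extends verbatim to general sources with spectral inf-entropy (Remark \ref{rem:re}); your variance route is more elementary and, over finite alphabets with strong typicality, yields an exponentially vanishing total variation, at the cost of the truncation step and heavier bookkeeping. (Indeed, the paper's source contains a disabled, commented-out draft of essentially this variance-based proof for general sources, so the authors' published choice of the fidelity argument was a stylistic and generality-driven one rather than a sign that your route fails.)
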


\begin{remark}
In \cite{bloch}, the \emph{channel intrinsic randomness} was defined ``as the maximum random bit rate that can be extracted from a channel output independently of an input with known statistics". One can generalize this definition to the broadcast channel $p_{X_{[1:T]}|Z}$, in the sense of finding $T$ strings of random bits with rates $(R_1,\cdots,R_T)$ such that the $i-th$ string is extracted individually from the $i-th$ channel output $X_i^n$, while making sure that these random strings are mutually independent of each other and of the channel input $Z^n$. Theorem \ref{thm:re} gives an achievable rate region for this scenario and implies that random binning is sufficient to prove the achievability.\footnote{{
In fact,  \cite{bloch} considered the case for general channel with general input and the results is based on the information spectrum methods. The achievability proof in \cite{bloch} follows from \cite[Theorem 1]{csiszar:96} whose proof is based on graph-coloring. The proof of Theorem \ref{thm:re}  can be easily extended  to this general setting, 
in which one should substitute average entropy with the \emph{spectral inf-entropy} (which is defined in \cite{book:han}), to get the result for this general case. Our proof is based on a simple application of Jensen's inequality.}}
\end{remark}

Sometimes  we \emph{only} need the independence of one random bin from other random bins and $Z^n$. The following corollary provides sufficient conditions for the independence of $B_1$ from $(B_2,\cdots,B_T,Z^n)$. The proof is provided in Appendix \ref{apx:osrbcor}.
\begin{corollary}\label{cor:OSRB}
Let $\mv$ be an arbitrary subset of $[2:T]$. If for each $\ms\subseteq [2:T]-\mv$, the following constraint holds
\be
R_1+\sum_{t\in\ms}R_t<H(X_1X_{\ms}|ZX_{\mv}),\label{eq:1000}
\ee
then as $n$ goes to infinity, we have
\be
\e_{{\mb}}\tv{P(z^n,b_{[1:T]})-p^U(b_1)P(z^n,b_{[2:T]})}\rightarrow 0.
\ee
\end{corollary}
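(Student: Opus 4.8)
The plan is to derive this as a consequence of Theorem \ref{thm:re} by a suitable re-labelling of the sources. The key observation is that Theorem \ref{thm:re} already gives joint independence of \emph{all} the bin indices from the non-binned source; here we want only a weaker conclusion (independence of $B_1$ from the rest), but we want it under a weaker rate hypothesis. The natural trick is: since we only care about $B_1$, we are free to treat $X_{\mv}^n$ not as a binned source at all, but as \emph{part of the side information} that the adversary (the ``$Z$'' role) is allowed to see, and simultaneously ``merge'' the bin indices $B_t$ for $t\in[2:T]-\mv$ with $Z^n$ as auxiliary observations we do not attempt to make independent of $B_1$. Concretely, I would apply Theorem \ref{thm:re} to the two-source family $(X_1, X_{([2:T]-\mv)})$ with non-binned source $Z':=(Z, X_{\mv})$, i.e. to the setup where we bin $X_1$ at rate $R_1$ and bin $X_{\ms'}$ (for $\ms'\subseteq [2:T]-\mv$, viewed either as individual sources or, more cleanly, a general binning of the block $X_{[2:T]-\mv}$) at rates $R_t$, with $Z'$ playing the role of $Z$. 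The hypothesis \eqref{eq:1000} is exactly the condition of Theorem \ref{thm:re} for every subset $\ms$ of $[2:T]-\mv$ containing the index $1$: indeed $R_1+\sum_{t\in\ms}R_t < H(X_1 X_{\ms}\mid Z X_{\mv}) = H(X_1 X_{\ms}\mid Z')$. Note we do not need — and do not have — constraints for subsets $\ms\subseteq[2:T]-\mv$ \emph{not} containing $1$; this is precisely why the conclusion is only about $B_1$ and not about joint independence of all indices.

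Carrying this out: by Theorem \ref{thm:re} applied to this re-labelled system, $\e_{\mb}\tv{P(z^n, x_{\mv}^n, b_1, b_{[2:T]-\mv}) - p(z^n,x_{\mv}^n)\, p^U(b_1)\prod_{t\in[2:T]-\mv} p^U(b_t)} \to 0$. Here I would need to be a little careful about whether Theorem \ref{thm:re} as stated covers ``binning a block of variables jointly'' versus ``binning each coordinate separately''; the cleanest route is to invoke the ``more general form'' of the theorem promised in Appendix \ref{apx:osrb}, or simply to observe that we may as well include all of $[2:T]-\mv$ as individually binned sources and apply the theorem verbatim, since \eqref{eq:1000} gives us the constraint for exactly those subsets we need (those containing the index $1$), and we can choose the remaining rates of a fictitious refinement to be as small as we like — actually the safest is: the bound on $B_1$'s independence only invokes constraints $\ms \ni 1$, so Theorem \ref{thm:re}'s proof machinery (Jensen's inequality applied termwise) goes through using only those. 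Then I would push forward through the (deterministic) map that produces $b_{\mv}$ from $x_{\mv}^n$ and through marginalization over $x_{\mv}^n$: total variation is non-increasing under applying a fixed stochastic map (here, the channel $x_{\mv}^n \mapsto (b_{[2:T]})$ together with dropping $x_{\mv}^n$), so the $\ell_1$ distance on the right-hand side of the corollary is upper bounded by the $\ell_1$ distance we already controlled, and on the target side $p(z^n,x_{\mv}^n) p^U(b_1)\prod p^U(b_t)$ pushes forward to $p^U(b_1)\, P^{\text{ideal}}(z^n, b_{[2:T]})$. Finally I would replace this ``ideal'' marginal $P^{\text{ideal}}(z^n,b_{[2:T]})$ by the \emph{actual} random pmf $P(z^n,b_{[2:T]})$ induced by the binning, using the triangle inequality together with the fact that $\e_{\mb}\tv{P(z^n,b_{[2:T]}) - P^{\text{ideal}}(z^n,b_{[2:T]})}$ equals the marginal of the quantity we already showed vanishes (marginalizing out $b_1$ and $x_{\mv}^n$ again only decreases total variation).

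The step I expect to be the main obstacle — really the only subtle point — is matching the \emph{exact} set of rate constraints in \eqref{eq:1000} to what Theorem \ref{thm:re} consumes, i.e. being honest that we do not get, and do not need, the constraints for subsets $\ms\subseteq[2:T]-\mv$ with $1\notin\ms$. One must either appeal to the more general appendix theorem (which presumably isolates the role of a distinguished index), or else re-run the one-line Jensen argument behind Theorem \ref{thm:re} keeping only the terms indexed by subsets containing $1$ and verifying those are precisely the ones bounded by \eqref{eq:1000}; the rest of the proof is routine data-processing for total variation plus the triangle inequality, with the expectation over $\mb$ carried along by linearity throughout.
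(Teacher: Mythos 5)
Your reduction steps are fine (merging $X_{\mv}^n$ into the side information, re-appending $b_{\mv}=\mb_{\mv}(x_{\mv}^n)$ as a function of $x_{\mv}^n$, and marginalizing, all of which only decrease total variation), and you correctly identify the crux: the hypothesis \eqref{eq:1000} only constrains subsets containing the index $1$. But your resolution of that crux does not work, and this is a genuine gap rather than a technicality. The pivotal intermediate claim, $P(z^n,x_{\mv}^n,b_1,b_{[2:T]-\mv})\apx{}p(z^n,x_{\mv}^n)p^U(b_1)\prod_{t\in[2:T]-\mv}p^U(b_t)$, is not implied by \eqref{eq:1000} and is in general false: nothing in \eqref{eq:1000} prevents the rates $R_t$, $t\in[2:T]-\mv$, from being large. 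For instance, with $T=2$, $\mv=\emptyset$, take $R_1<H(X_1|X_2Z)$ and $H(X_2|Z)<R_2<H(X_1X_2|Z)-R_1$; then both constraints of the corollary hold, yet $B_2$ is far from uniform and independent of $Z^n$, so your intermediate approximation fails and the subsequent triangle-inequality step (replacing the ideal marginal by the true $P(z^n,b_{[2:T]})$) has nothing to rest on. The two fallbacks you sketch do not repair this: the ``fictitious refinement with rates as small as we like'' is not available because the $R_t$ are fixed by the statement and enter the conclusion through $P(z^n,b_{[2:T]})$; and ``re-running the Jensen argument keeping only the terms with $1\in\ms$'' does not work as stated, because Theorem \ref{thm:oneshot} bounds the fidelity between $P$ and the \emph{fixed} product pmf $p(z)p^U(b_{[1:T]})$ — the terms $\mathsf{M}_{\ms}2^{-h(X_{\ms}|Z)}$ with $1\notin\ms$ sit in the same denominator and are not small in the regime above, and simply discarding them does not convert the bound into a comparison with the \emph{random} target $p^U(b_1)P(z^n,b_{[2:T]})$; that would require a genuinely different (conditional) one-shot statement which this paper does not prove.

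What the paper actually does (Appendix \ref{apx:osrbcor}) supplies exactly the missing idea: after reducing to $\mv=\emptyset$, it argues by induction on $T$ with a case analysis. If all constraints of Theorem \ref{thm:re} happen to hold, the corollary follows from the theorem plus the triangle inequality, just as you intend. Otherwise some $\ms$ with $1\notin\ms$ has $\sum_{t\in\ms}R_t\ge H(X_{\ms}|Z)$; combining this violation with \eqref{eq:1000} gives $R_1+\sum_{t\in\mv'}R_t<H(X_1X_{\mv'}|X_{\ms}Z)$ for every $\mv'\subseteq[2:T]-\ms$, so the induction hypothesis (the corollary itself, for the smaller family with side information $(Z,X_{\ms})$) yields $P(x_{\ms}^n,z^n,b_1,b_{[2:T]-\ms})\apx{}p^U(b_1)P(x_{\ms}^n,z^n,b_{[2:T]-\ms})$; reintroducing $B_{\ms}$ as a function of $X_{\ms}^n$ and marginalizing over $x_{\ms}^n$ (part 1 of Lemma \ref{le:total}) finishes. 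In short, the over-binned subsets are absorbed into the conditioning rather than forced to be uniform, and it is this absorption — absent from your proposal — that lets one dispense with the constraints for $\ms\not\ni 1$; your write-up would need to add this (or an equivalent conditional version of the one-shot bound) to be complete.
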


 Theorem \ref{thm:re} enables us to approximate the pmf $P(z^n,b_{1:T})$. We now consider another region for which we can approximate a specified pmf. This region is the Slepian-Wolf region for reconstructing $X^n_{[1:T]}$ in the presence of $(B_{1:T},Z^n)$ at the decoder.
 As in the achievability proof of the \cite[Section 10.3.2]{elgamal}, we can define a decoder with respect to any fixed distributed binning.  We denote the decoder by the random conditional pmf $P^{SW}(\hat{x}^n_{[1:T]}|z^n,b_{[1:T]})$ (note that since the decoder is a function, this pmf takes only two values, 0 and 1).\footnote{{For a Slepian-Wolf decoder that uses a jointly typical decoder, $P^{SW}(\hat{x}^n_{[1:T]}|z^n,b_{[1:T]})=1$ if $\hat{x}^n_{[1:T]}$ is the only jointly typical sequence with $z^n$ in the bin $b_{[1:T]}$. If the \emph{unique} jointly typical sequence in the bin does not exist, then $\hat{x}^n_{[1:T]}$ is taken to be a fixed arbitrary sequence.}} Now we write the Slepian-Wolf theorem in the following equivalent form.
\begin{lemma}\label{le:sw}
If for each $\ms\subseteq [1:T]$, the following constraint holds
\be
\sum_{t\in\ms}R_t>H(X_{\ms}|X_{\ms^c}, Z),
\ee
then as $n$ goes to infinity, we have
\bes
\e_{{\mb}}\tv{P(x^n_{[1:T]},z^n,\hat{x}^n_{[1:T]})-p(x^n_{[1:T]},z^n)\ind\{\hat{x}^n_{[1:T]}=x^n_{[1:T]}\}}\rightarrow 0.
\ees
\end{lemma}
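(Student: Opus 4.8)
Lemma \ref{le:sw} is the Slepian--Wolf coding theorem with decoder side information $Z^n$, rewritten in total-variation language, so the plan is to translate the total-variation quantity into an average decoding error probability and then invoke the classical random-binning bound. For the translation, observe that both pmfs inside $\tv{\cdot}$ have the same marginal $p(x^n_{[1:T]},z^n)$ on $(x^n_{[1:T]},z^n)$, since neither the binning nor the decoder $P^{SW}$ alters the source statistics. Hence, for a fixed binning $\mb$, the total variation splits as $\sum_{x^n_{[1:T]},z^n}p(x^n_{[1:T]},z^n)$ times the total variation between the two conditional laws of $\hat{x}^n_{[1:T]}$ given $(x^n_{[1:T]},z^n)$; both of these are point masses --- one at the decoder output (as $P^{SW}$ is $\{0,1\}$-valued) and one at $x^n_{[1:T]}$ --- so their total variation is simply the indicator of a decoding error. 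Thus
\[
\e_{\mb}\tv{P(x^n_{[1:T]},z^n,\hat{x}^n_{[1:T]})-p(x^n_{[1:T]},z^n)\ind\{\hat{x}^n_{[1:T]}=x^n_{[1:T]}\}}=\bar P_e,
\]
the expected decoding error probability of the random-binning scheme with decoder $P^{SW}$.

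It then remains to show $\bar P_e\to0$, which is the standard argument (see \cite[Section 10.3.2]{elgamal}) with $P^{SW}$ taken to be the joint-typicality decoder of the footnote. Its error event is contained in: (i) $(X^n_{[1:T]},Z^n)\notin\typ$, which has vanishing probability by the law of large numbers; and (ii) the existence of a nonempty $\ms\subseteq[1:T]$ and a sequence $\tilde{x}^n_{\ms}$ with $\tilde{x}^n_t\neq X^n_t$ and $\mb_t(\tilde{x}^n_t)=\mb_t(X^n_t)$ for all $t\in\ms$, and with $(\tilde{x}^n_{\ms},X^n_{\ms^c},Z^n)\in\typ$. Conditioned on a typical $(X^n_{\ms^c},Z^n)$ there are at most $2^{n(H(X_{\ms}|X_{\ms^c},Z)+\delta(\epsilon))}$ such candidates, and --- since the maps $\{\mb_t\}$ are independent and uniform --- each collides with $X^n_{\ms}$ in all bins $t\in\ms$ with probability $2^{-n\sum_{t\in\ms}R_t}$. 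A union bound over $\ms$ and over the candidates yields
\[
\e_{\mb}[\Pr(\text{event (ii)})]\le\sum_{\emptyset\neq\ms\subseteq[1:T]}2^{-n\left(\sum_{t\in\ms}R_t-H(X_{\ms}|X_{\ms^c},Z)-\delta(\epsilon)\right)}\longrightarrow0
\]
as $n\to\infty$, for $\epsilon$ small enough, precisely because $\sum_{t\in\ms}R_t>H(X_{\ms}|X_{\ms^c},Z)$ for every $\ms$. Together, (i) and (ii) give $\bar P_e\to0$, and the lemma follows.

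I expect no conceptual obstacle: the only real content is the first step, i.e.\ recognizing that the OSRB-style total-variation statement is literally the ordinary Slepian--Wolf error bound, which makes it fit the rest of the framework. The single point deserving care is that the source alphabets are only countably infinite, so the law-of-large-numbers step in (i) and the conditional typical-set cardinality estimate in (ii) must be used in a form valid for countable alphabets of finite entropy (for instance, through a truncation argument) rather than the finite-alphabet versions.
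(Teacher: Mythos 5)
Your proposal is correct and takes essentially the same route as the paper: the paper's entire proof is the identity you establish first, namely that the expected total variation equals the expected decoding-error probability $\e_{\mb}P(\hat{X}^n_{[1:T]}\neq X^n_{[1:T]})$ (the paper derives it from $\tv{p-q}=\sum_{x:p(x)>q(x)}[p(x)-q(x)]$, you from conditioning on $(x^n_{[1:T]},z^n)$ and comparing the two point-mass conditionals, which is the same observation), followed by an appeal to the standard Slepian--Wolf random-binning achievability argument of El Gamal--Kim. You simply write out that cited error analysis explicitly, and your closing caveat about handling countably infinite alphabets via weak typicality or truncation is a sensible refinement rather than a deviation.
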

\begin{proof} From the definition of the total variation, we know that $\tv{p(x)-q(x)}=\sum_{x:p(x)> q(x)}[p(x)-q(x)]$. Using this property we can write
\begin{align}
\e\tv{P(x^n_{[1:T]},z^n,\hat{x}^n_{[1:T]})-p(x^n_{[1:T]},z^n)\ind\{\hat{x}^n_{[1:T]}=x^n_{[1:T]}\}}
&\stackrel{(a)}{=}\e\sum_{x^n_{[1:T]},z^n,\hat{x}^n_{[1:T]}:\atop \hat{x}^n_{[1:T]}\neq x^n_{[1:T]}}P(x^n_{[1:T]},z^n,\hat{x}^n_{[1:T]})\n
&=\e P(\hat{X}^n_{[1:T]}\neq X^n_{[1:T]})\rightarrow 0,
\end{align}
where (a) follows from the fact that whenever $P(x^n_{[1:T]},z^n,\hat{x}^n_{[1:T]}) > p(x^n_{[1:T]},z^n)\ind\{\hat{x}^n_{[1:T]}=x^n_{[1:T]}\}$ we must have $\ind\{\hat{x}^n_{[1:T]}=x^n_{[1:T]}\}=0$, since $P(x^n_{[1:T]},z^n,\hat{x}^n_{[1:T]})=p(x^n_{[1:T]},z^n)P(\hat{x}^n_{[1:T]}|x^n_{[1:T]},z^n)$.
\end{proof}

Sometimes we need a special case of SW theorem for recovering only one source $X_1^n$ from random bins $B_1,\cdots,B_T$ and $Z^n$. The following lemma gives sufficient conditions on this problem:
\begin{lemma}
\label{le:2000}
If for each $\ms\subseteq [2:T]$, the following constraint holds
\be
R_1+\sum_{t\in\ms}R_t>H(X_1X_{\ms}|X_{\ms^c}Z),\label{eq:2000}
\ee
then there exists an appropriate decoder such that the error probability of recovering $X_1^n$ from $(Z^n,B_{[1:T]})$ 
 tends to zero as $n\rightarrow\infty$. Equivalently, we have
\bes
\e_{{\mb}}\tv{P(x^n_{[1:T]},z^n,\hat{x}^n_{1})-p(x^n_{[1:T]},z^n)\ind\{\hat{x}^n_{1}=x^n_{1}\}}\rightarrow 0.
\ees
\end{lemma}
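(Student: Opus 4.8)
The plan is to mirror the proof of Lemma~\ref{le:sw}. First I would note that, by exactly the computation used there (conditioning, the identity $\tv{p(x)-q(x)}=\sum_{x:p(x)>q(x)}[p(x)-q(x)]$, and the factorization $P(x^n_{[1:T]},z^n,\hat x^n_1)=p(x^n_{[1:T]},z^n)P(\hat x^n_1|x^n_{[1:T]},z^n)$), the displayed total-variation statement is equivalent to producing a binning-dependent decoder $P^{SW}(\hat x^n_1|z^n,b_{[1:T]})$ whose averaged error probability $\e_\mb\,P(\hat X^n_1\neq X^n_1)$ tends to zero. So everything reduces to designing such a decoder and bounding its error, and the asserted equivalence in the lemma is precisely this reduction.

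For the decoder I would use a joint-typicality rule: on input $(z^n,b_{[1:T]})$, list all tuples $\hat x^n_{[1:T]}$ that are jointly $\epsilon$-typical with $z^n$ with respect to $p_{X_{[1:T]}Z}$ and satisfy $\mb_t(\hat x^n_t)=b_t$ for every $t$; if all of them share the same first block, output it, otherwise output a fixed sequence. With probability tending to one the true tuple $X^n_{[1:T]}$ is on this list (typical-sequence lemma, plus bin-consistency by construction), so an error forces the existence of a competing tuple $\tilde x^n_{[1:T]}$ on the list with $\tilde x^n_1\neq X^n_1$. I would then union-bound over the disagreement pattern $\ma:=\{t\in[1:T]:\tilde x^n_t\neq X^n_t\}$, which always contains $1$. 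For a fixed $\ma$, conditioned on a typical realization of $(X^n_{[1:T]},Z^n)$: the blocks $t\notin\ma$ are forced to equal $X^n_t$ and are automatically bin-consistent; the blocks $t\in\ma$ must collide with $X^n_t$ under $\mb_t$, an event of probability $2^{-nR_t}$, independent over $t\in\ma$ and over the choice of $\tilde x^n_\ma$; and the number of $\tilde x^n_\ma$ keeping $(\tilde x^n_\ma,X^n_{\ma^c},z^n)$ jointly typical is at most $2^{n(H(X_\ma|X_{\ma^c}Z)+\delta(\epsilon))}$ by the conditional-typicality lemma. Hence the pattern $\ma$ contributes at most $2^{n(H(X_\ma|X_{\ma^c}Z)+\delta(\epsilon)-\sum_{t\in\ma}R_t)}$ to the expected error, which vanishes once $\sum_{t\in\ma}R_t>H(X_\ma|X_{\ma^c}Z)$.

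To finish I would observe that writing $\ma=\{1\}\cup\ms$ with $\ms=\ma\setminus\{1\}\subseteq[2:T]$ and $\ma^c=\ms^c$ (complement taken inside $[2:T]$) turns this last requirement into exactly hypothesis \eqref{eq:2000}; since there are only finitely many patterns $\ma$, choosing $\epsilon$ small enough that the rate slacks absorb $\delta(\epsilon)$ makes $\e_\mb\,P(\hat X^n_1\neq X^n_1)\to0$, which by the first paragraph is the claim. The only step needing genuine care is the conditional-typicality count in the middle paragraph — getting the exponent $H(X_\ma|X_{\ma^c}Z)$ rather than $H(X_\ma|Z)$ by conditioning correctly on the true non-disagreeing blocks $X^n_{\ma^c}$, with a $\delta(\epsilon)$ controlled uniformly over all patterns — but this is precisely the standard Slepian--Wolf estimate used in \cite[Section~10.3.2]{elgamal} and already invoked for Lemma~\ref{le:sw}.
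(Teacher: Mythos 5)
Your proposal is correct and is essentially the argument the paper intends: the paper states Lemma \ref{le:2000} without an explicit proof, relying on the same reduction used for Lemma \ref{le:sw} (total variation equals the expected error probability via the factorization $P(x^n_{[1:T]},z^n,\hat x^n_1)=p(x^n_{[1:T]},z^n)P(\hat x^n_1|x^n_{[1:T]},z^n)$) together with the standard Slepian--Wolf-type typicality analysis in which one only insists on agreement in the first block, which is exactly what you supply, and your union bound over disagreement patterns $\ma\ni 1$ yields precisely the constraints \eqref{eq:2000}. The only cosmetic remark is that your claim of independence ``over the choice of $\tilde x^n_{\ma}$'' is neither true in general (competing tuples sharing a coordinate have identical collision events) nor needed, since the union bound uses only the per-tuple collision probability $\prod_{t\in\ma}2^{-nR_t}$.
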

\section{Achievability proof through probability approximation}\label{s:2}
In this section, we illustrate the OSRB framework in details
 through some examples. Before going through these examples, we state some useful lemmas on total variation of arbitrary (random) pmfs.
\begin{definition}\label{def:1}
For any random pmfs $P_X$ and $Q_X$ on $\mx$, we write $P_X\stackrel{\epsilon}{\approx}Q_X$ if $\e\tv{P_X-Q_X}<\epsilon$. Similarly we use $p_X\apx{\epsilon}q_X$ for two (non-random) pmfs to denote the total variation constraint $\tv{p_X-q_X}<\epsilon$.\end{definition}
\begin{definition}\label{def:0-1}
For any two sequences of random pmfs $P_{X^{(n)}}$ and $Q_{X^{(n)}}$ on $\mx^{(n)}$ (where $\mx^{(n)}$ is arbitrary and it differs from $\mx^n$ which is a cartesian product), we write $P_{X^{(n)}}\stackrel{}{\approx}Q_{X^{(n)}}$ if {$\lim_{n\rightarrow\infty}\e\tv{P_{X^{(n)}}-Q_{X^{(n)}}}=0$}. Similarly we use $p_{X^{(n)}}\apx{}q_{X^{(n)}}$ for two {sequences of} (non-random) pmfs.\end{definition}
\begin{lemma}\label{le:0-total} We have
\begin{enumerate}
\item \cite[Lemma 17]{cuff}:
$\tv{p_Xp_{Y|X}-q_{X}p_{Y|X}}=\tv{p_X-q_X}$\\
\cite[Lemma 16]{cuff}:$~~~~~~~~~~~\quad\tv{p_X-q_X}\le\tv{p_Xp_{Y|X}-q_{X}q_{Y|X}}$.
\item If $p_Xp_{Y|X}\stackrel{\epsilon}{\approx}q_Xq_{Y|X}$, then there exists $x\in\mx$ such that $p_{Y|X=x}\stackrel{2\epsilon}{\approx}q_{Y|X=x}$.
\item[] ~~$2')$ More generally the probability of the set $\{x\in\mx: p_{Y|X=x}\stackrel{\sqrt{\epsilon}}{\approx}q_{Y|X=x}\}$ under both $p_X$ and $q_X$ is at least $1-2\sqrt\epsilon$.
\item If $P_X\stackrel{\epsilon}{\approx}Q_X$ and  $P_XP_{Y|X}\stackrel{\delta}{\approx}P_XQ_{Y|X}$, then $P_{X}P_{Y|X}\stackrel{\epsilon+\delta}{\approx}Q_{X}Q_{Y|X}$.
\end{enumerate}
\end{lemma}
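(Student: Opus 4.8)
The plan is to establish the three items in turn, each by short manipulations of the total-variation distance; no deep idea is needed, only careful bookkeeping.

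For item~1, I would expand the definition directly: since $\tv{p_Xp_{Y|X}-q_Xp_{Y|X}}=\frac12\sum_{x,y}p_{Y|X}(y|x)\,|p_X(x)-q_X(x)|$ and $\sum_y p_{Y|X}(y|x)=1$, summing over $y$ first collapses this to $\frac12\sum_x|p_X(x)-q_X(x)|=\tv{p_X-q_X}$. The second inequality is the data-processing property of total variation: $\tv{p_X-q_X}=\frac12\sum_x\bigl|\sum_y(p_{XY}(x,y)-q_{XY}(x,y))\bigr|\le\frac12\sum_{x,y}|p_{XY}(x,y)-q_{XY}(x,y)|=\tv{p_Xp_{Y|X}-q_Xq_{Y|X}}$. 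Both computations are pointwise in the sample point $\omega$, so they hold verbatim for random pmfs too, which I will reuse in item~3.

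For items~2 and~$2'$, the key step is the identity $\sum_x p_X(x)\,\tv{p_{Y|X=x}-q_{Y|X=x}}=\tv{p_Xp_{Y|X}-p_Xq_{Y|X}}$, obtained by pulling $p_X(x)$ out of the inner absolute value. The triangle inequality together with item~1 then gives $\tv{p_Xp_{Y|X}-p_Xq_{Y|X}}\le\tv{p_Xp_{Y|X}-q_Xq_{Y|X}}+\tv{q_Xq_{Y|X}-p_Xq_{Y|X}}<\epsilon+\tv{p_X-q_X}<2\epsilon$, where $\tv{p_X-q_X}<\epsilon$ again by item~1. Hence $\e_{x\sim p_X}\tv{p_{Y|X=x}-q_{Y|X=x}}<2\epsilon$, which immediately yields item~2 (some $x$ in the support of $p_X$ must beat the average) and, via Markov's inequality, gives $p_X\bigl(\{x:\tv{p_{Y|X=x}-q_{Y|X=x}}\ge\sqrt\epsilon\}\bigr)<2\sqrt\epsilon$. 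Running the same argument with $q_X$ in place of $p_X$, using $\sum_x q_X(x)\,\tv{p_{Y|X=x}-q_{Y|X=x}}=\tv{q_Xp_{Y|X}-q_Xq_{Y|X}}\le\tv{q_Xp_{Y|X}-p_Xp_{Y|X}}+\tv{p_Xp_{Y|X}-q_Xq_{Y|X}}<2\epsilon$, gives the corresponding bound under $q_X$, completing item~$2'$.

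For item~3, I would apply the triangle inequality pointwise in $\omega$, $\tv{P_XP_{Y|X}-Q_XQ_{Y|X}}\le\tv{P_XP_{Y|X}-P_XQ_{Y|X}}+\tv{P_XQ_{Y|X}-Q_XQ_{Y|X}}$, note that the last term equals $\tv{P_X-Q_X}$ by the (pointwise, hence random-pmf) form of item~1, and then take expectations and use linearity of $\e$ together with the hypotheses $P_XP_{Y|X}\apx{\delta}P_XQ_{Y|X}$ and $P_X\apx{\epsilon}Q_X$ to conclude $\e\tv{P_XP_{Y|X}-Q_XQ_{Y|X}}<\delta+\epsilon$. The only thing to watch is that item~1 is invoked in its pointwise form so that it applies to random pmfs; the rest is routine. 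Overall the main ``obstacle'' is merely the conditional/marginal bookkeeping in item~2 — ensuring the cross term is absorbed by item~1 rather than introducing a spurious constant.
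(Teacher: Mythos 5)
Your proposal is correct and follows essentially the same route as the paper's Appendix proof: item 1 by direct expansion (the paper simply cites Cuff), item 2 and $2'$ by bounding $\e_{p_X}\tv{p_{Y|X}-q_{Y|X}}$ via the triangle inequality together with item 1 and then applying Markov's inequality, and item 3 by the pointwise triangle inequality plus item 1 before taking expectations. The only (welcome) difference is that you explicitly carry out the symmetric argument under $q_X$ for part $2'$, which the paper's written proof leaves implicit.
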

\begin{proof}
See Appendix \ref{apx:le-total}.
\end{proof}

Lemma \ref{le:0-total} and Definition \ref{def:0-1} immediately imply the following variant of Lemma \ref{le:0-total} which is used throughout the paper.
\begin{lemma}\label{le:total} We have
\begin{enumerate}
\item   $P_{X^{(n)}}\apx{}Q_{X^{(n)}}\Rightarrow P_{X^{(n)}}P_{Y^{(n)}|X^{(n)}}\apx{}Q_{X^{(n)}}P_{Y^{(n)}|X^{(n)}}$,\\
           $P_{X^{(n)}}P_{Y^{(n)}|X^{(n)}}\apx{}Q_{X^{(n)}}Q_{Y^{(n)}|X^{(n)}}\Rightarrow P_{X^{(n)}}\apx{}Q_{X^{(n)}}$.

\item If $p_{X^{(n)}}p_{Y^{(n)}|X^{(n)}}\stackrel{}{\approx}q_{X^{(n)}}q_{Y^{(n)}|X^{(n)}}$, then there exists a sequence $x^{(n)}\in\mx^{(n)}$ such that $p_{Y^{(n)}|X^{(n)}=x^{(n)}}\stackrel{}{\approx}q_{Y^{(n)}|X^{(n)}=x^{(n)}}$.
\item If $P_{X^{(n)}}\apx{}Q_{X^{(n)}}$ and  $P_{X^{(n)}}P_{{Y^{(n)}}|{X^{(n)}}}\apx{}P_{X^{(n)}}Q_{{Y^{(n)}}|{X^{(n)}}}$, then $P_{{X^{(n)}}}P_{{Y^{(n)}}|{X^{(n)}}}\apx{}Q_{{X^{(n)}}}Q_{{Y^{(n)}}|{X^{(n)}}}$.
\end{enumerate}
\end{lemma}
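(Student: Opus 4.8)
The plan is to obtain each of the three items as a direct ``$n\to\infty$'' upgrade of the corresponding item of Lemma~\ref{le:0-total}, using only that $\apx{}$ in Definition~\ref{def:0-1} means ``$\e\tv{\cdot}\to 0$''. The one mild subtlety to keep in mind is that Lemma~\ref{le:0-total}(1) is stated for non-random pmfs whereas here $P_{X^{(n)}}$ and $Q_{X^{(n)}}$ may be random pmfs; I would dispose of this by noting that the identity and the inequality in Lemma~\ref{le:0-total}(1) hold pointwise on the sample space $\Omega$ (apply them to the realizations $P_{X^{(n)}}(\cdot\,;\omega)$, $Q_{X^{(n)}}(\cdot\,;\omega)$, $P_{Y^{(n)}|X^{(n)}}(\cdot|\cdot\,;\omega)$ for each $\omega$) and then taking expectations, which preserves both the equality and the inequality.

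For item~1, fix $\omega$ and apply the first identity of Lemma~\ref{le:0-total}(1) to get $\tv{P_{X^{(n)}}P_{Y^{(n)}|X^{(n)}}-Q_{X^{(n)}}P_{Y^{(n)}|X^{(n)}}}=\tv{P_{X^{(n)}}-Q_{X^{(n)}}}$ pointwise; taking $\e$ and letting $n\to\infty$ gives the first implication. For the second implication, apply the inequality $\tv{p_X-q_X}\le\tv{p_Xp_{Y|X}-q_Xq_{Y|X}}$ of Lemma~\ref{le:0-total}(1) pointwise and then in expectation, obtaining $\e\tv{P_{X^{(n)}}-Q_{X^{(n)}}}\le\e\tv{P_{X^{(n)}}P_{Y^{(n)}|X^{(n)}}-Q_{X^{(n)}}Q_{Y^{(n)}|X^{(n)}}}$, and let $n\to\infty$.

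For item~2, set $\epsilon_n:=\tv{p_{X^{(n)}}p_{Y^{(n)}|X^{(n)}}-q_{X^{(n)}}q_{Y^{(n)}|X^{(n)}}}$, which tends to $0$ by hypothesis. For each $n$, Lemma~\ref{le:0-total}(2) yields some $x^{(n)}\in\mx^{(n)}$ with $p_{Y^{(n)}|X^{(n)}=x^{(n)}}\apx{2\epsilon_n}q_{Y^{(n)}|X^{(n)}=x^{(n)}}$; since $2\epsilon_n\to 0$, the sequence $(x^{(n)})_n$ is the desired one. For item~3, write $\epsilon_n:=\e\tv{P_{X^{(n)}}-Q_{X^{(n)}}}$ and $\delta_n:=\e\tv{P_{X^{(n)}}P_{Y^{(n)}|X^{(n)}}-P_{X^{(n)}}Q_{Y^{(n)}|X^{(n)}}}$, both tending to $0$ by hypothesis; Lemma~\ref{le:0-total}(3) applied with these accuracies gives $P_{X^{(n)}}P_{Y^{(n)}|X^{(n)}}\apx{\epsilon_n+\delta_n}Q_{X^{(n)}}Q_{Y^{(n)}|X^{(n)}}$, and $\epsilon_n+\delta_n\to 0$ finishes the argument.

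There is no real obstacle here: the lemma is genuinely an immediate corollary of Lemma~\ref{le:0-total}, and the whole proof is bookkeeping. The only place warranting a line of justification — and hence the ``hardest'' part — is the random-pmf to deterministic-pmf passage in item~1, i.e.\ checking that applying Lemma~\ref{le:0-total}(1) $\omega$-by-\nobreak$\omega$ and then taking expectations is legitimate; this is routine since all quantities are nonnegative and measurable in $\omega$.
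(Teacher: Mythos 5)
Your proof is correct and follows exactly the route the paper intends: the paper gives no separate argument for Lemma~\ref{le:total}, declaring it an immediate consequence of Lemma~\ref{le:0-total} and Definition~\ref{def:0-1}, which is precisely what you carry out (with the sensible extra remark that the identities of Lemma~\ref{le:0-total} are applied $\omega$-by-$\omega$ before taking expectations).
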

\color{black}
\begin{lemma}\label{le:distortion}
If $d:\mx\times\my\rightarrow[0,d_{max}]$ is a bounded distortion measure, $p_{XY}$ is a pmf with $\e_{p_{XY}}d(X,Y)=D$ and $q_{XY}$ is a pmf such that $q_{XY}\apx{\epsilon}p_{XY}$, then we have
\be\label{eq:dis}
\e_{q_{XY}}d(X,Y)\le D+\epsilon d_{max}.
\ee
\end{lemma}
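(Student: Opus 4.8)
\textbf{Proof proposal for Lemma \ref{le:distortion}.}
The plan is to bound the difference of the two expectations directly in terms of the total variation distance, exploiting the boundedness of $d$. First I would write
\[
\e_{q_{XY}}d(X,Y)-\e_{p_{XY}}d(X,Y)=\sum_{x,y}\bigl(q_{XY}(x,y)-p_{XY}(x,y)\bigr)\,d(x,y).
\]
Since $d\ge 0$, only the terms where $q_{XY}(x,y)>p_{XY}(x,y)$ can contribute positively, so the right-hand side is at most $\sum_{x,y:\,q_{XY}(x,y)>p_{XY}(x,y)}(q_{XY}(x,y)-p_{XY}(x,y))\,d(x,y)$.

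Next I would use $0\le d(x,y)\le d_{max}$ to pull $d_{max}$ out of the sum, obtaining the bound $d_{max}\sum_{x,y:\,q_{XY}(x,y)>p_{XY}(x,y)}(q_{XY}(x,y)-p_{XY}(x,y))$. Recalling the identity already used in the proof of Lemma \ref{le:sw}, namely $\tv{p-q}=\sum_{x:\,p(x)>q(x)}[p(x)-q(x)]=\sum_{x:\,q(x)>p(x)}[q(x)-p(x)]$ (the two one-sided sums are equal because both pmfs sum to one), this inner sum is exactly $\tv{p_{XY}-q_{XY}}$, which by hypothesis is less than $\epsilon$. Hence $\e_{q_{XY}}d(X,Y)-\e_{p_{XY}}d(X,Y)\le \epsilon\, d_{max}$, and adding $\e_{p_{XY}}d(X,Y)=D$ to both sides gives \eqref{eq:dis}.

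There is essentially no obstacle here: the only point requiring a moment's care is the step that discards the negative-contribution terms, which is valid precisely because $d$ is nonnegative, and the identification of the one-sided sum with the total variation distance. (If one wanted to avoid relying on nonnegativity of $d$, one could instead bound $|\e_q d-\e_p d|\le \sum_{x,y}|q_{XY}-p_{XY}|\cdot d_{max}=2d_{max}\tv{p_{XY}-q_{XY}}$, but that loses the factor of $2$; the sharper constant in the statement comes from using $d\ge 0$ as above.)
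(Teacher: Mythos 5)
Your proof is correct and follows essentially the same elementary route as the paper's appendix proof: bound the difference of expectations term by term using the boundedness of $d$ and the total variation hypothesis. The one substantive difference is in the key step. The paper bounds $\e_{q_{XY}}d(X,Y)\le \e_{p_{XY}}d(X,Y)+d_{max}\sum_{x,y}|q_{XY}(x,y)-p_{XY}(x,y)|$ and then invokes the hypothesis; read literally against the paper's $\tfrac12$-normalized definition of $\tv{\cdot}$, that last sum is $2\tv{p_{XY}-q_{XY}}\le 2\epsilon$, so the displayed chain actually yields $D+2\epsilon d_{max}$ rather than the stated $D+\epsilon d_{max}$ (harmless in every application, since $\epsilon$ is a vanishing sequence there). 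Your argument instead keeps only the one-sided sum over $\{(x,y):q_{XY}(x,y)>p_{XY}(x,y)\}$, which is exactly $\tv{p_{XY}-q_{XY}}$ under the paper's normalization, and therefore delivers the constant $\epsilon d_{max}$ exactly as stated; your closing remark correctly identifies that the absolute-value route is what loses the factor of $2$. So your proposal is, if anything, the slightly sharper and more careful version of the same computation.
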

\begin{proof}
See Appendix \ref{apx:le-distortion}.
\end{proof}
\subsection{OSRB framework}
{In the previous section, we described the OSRB framework for  the achievability proof of the channel coding problem at an intuitive level. Here}
we set up a general proof structure that we will use consistently throughout this paper.
 The OSRB farmework is divided into three parts.
 \begin{framed}
 \begin{itemize}
  \item 
  \emph{Part (1) of the proof:} we introduce two protocols each of which induces a pmf on a certain set of r.v.'s. { To define these protocols, we assume that there exists a shared randomness among all parties of the problem.} The first protocol {is related to the dual problem (or source coding side of the problem) and does not lead to a concrete coding algorithm}. However the second protocol is suitable for construction of a code, with one exception: the second protocol is assisted with a common randomness that does not really exist in the model.
  \item
   \emph{Part (2) of the proof:} we {first} find constraints implying that the two induced distributions are almost identical. {In other words,  the two protocols are equivalent. Thus  it suffices to resort to the source coding side of problem and investigate the desired properties such as reliability (or vanishing error probability), secrecy, distortions, etc in the source coding side of the problem.}
      \item
    \emph{Part (3) of the proof:} we eliminate the shared  randomness  given to the second protocol without disturbing {the desired properties}. {To do this, we find an instance of the shared randomness such that conditioned on it, the desired properties  still hold.} This makes the second protocol useful for code construction.
\end{itemize}
\end{framed}
\subsection{Channel coding}
A formal proof of the point-to-point channel coding problem is as follows:

\emph{Part (1) of the proof:}
Take some arbitrary $p(x)$. We define two protocols each of which induces a joint distribution on random variables that are defined during the protocol.

\emph{Protocol A (source coding side of the problem). }
Let $(X^n,Y^n)$ be i.i.d.\ and distributed according to $p(x,y)=p(x)p(y|x)$.

\underline{Random Binning}: Consider the following random binning: to each sequence $x^n$, assign uniformly and independently two bin indices $m\in[1:2^{nR}]$ and $f\in[1:2^{n\tR}]$. Further, we use a $P^{SW}(\hat{x}^n|y^n,f)$ Slepian-Wolf decoder to recover $x^n$ from $(y^n,f)$. We denote the output of the decoder by  $\hat{x}^n$. The rate constraint for the success of the decoder will be discussed later, although this decoder can be conceived even when there is no guarantee of success.

The random \footnote{The pmf is random due to the random binning assignment in the protocol.}
pmf induced by the random binning, denoted by $P$, can be expressed as follows:
\begin{align}
P(x^n,y^n,m,f,\hat{x}^n)&=p(x^n,y^n)P(m,f|x^n)P^{SW}(\hat{x}^n|y^n,f)\n
&=P(m,f,x^n)p(y^n|x^n)P^{SW}(\hat{x}^n|y^n,f)\n
&= P(m,f)P(x^n|m,f)p(y^n|x^n)P^{SW}(\hat{x}^n|y^n,f).\label{eq:pmf0}\end{align}

\emph{Protocol B (main problem assisted with shared randomness).} In this protocol we assume that the transmitter and the receiver have access to the
shared randomness $F$ where $F$ is uniformly distributed over $[1:2^{n\tR}]$.
Then, the protocol proceeds as follows:
\begin{itemize}
\item The transmitter chooses a message $m$ uniformly distributed over $[1:2^{nR}]$ and independently of $F$.
\item In the second stage, knowing $(m,f)$, the transmitter generates a sequence $x^n$ according to  the conditional pmf $P(x^n|m,f)$ of the protocol A. Then it sends $x^n$ over the channel.
\item At the final stage, the receiver, knowing $(y^n,f)$ uses the Slepian-Wolf decoder $P^{SW}(\hat{x}^n|y^n,f)$ of protocol A to obtain $\hx^n$ as an estimate of $x^n$. Then, it declares the bin index $\hat{m}=\mathsf{M}(\hx^n)$ assigned to $\hx^n$ as the estimate of the transmitted message $m$.
\end{itemize}
The random pmf induced by the protocol, denoted by $\hat{P}$, factors as
\begin{align}
\hat{P}(x^n,y^n,m,f,\hat{x}^n)&=p^U(f)p^U(m)P(x^n|m,f)p(y^n|x^n)P^{SW}(\hat{x}^n|y^n,f).\label{eq:pmf2}
\end{align}

\emph{Part (2a) of the proof: Sufficient conditions that make the induced pmfs approximately the same}: To find the constraints that imply that the pmf $\hat{P}$ is close to the pmf $P$ in total variation distance,
we start with $P$ and make it close to $\hat{P}$ in a few steps. The first step is to observe that $m$ and $f$ are the bin indices of $x^n$ in Protocol A. Theorem \ref{thm:re} implies that if $R+\tR<H(X)$
then we have $P(m,f)\apx{}p^U(m)p^U(f)=\hat{P}(m,f)$.
Equations \eqref{eq:pmf0} and \eqref{eq:pmf2} imply
\begin{align}
\hat{P}(m,f,x^n,y^n,\hat{x}^n)&\apx{}
P(m,f,x^n,y^n,\hat{x}^n).\label{eqn:ee1}
\end{align}

\emph{Part (2b) of the proof: Sufficient conditions that make the Slepian-Wolf decoder succeed}: The next step is 
{to see that when the Slepian-Wolf decoder of protocol A can reliably decode the transmitted sequence $X^n$.} Lemma \ref{le:sw} requires imposing the constraint $\tR>H(X|Y)$.
It yields 
\begin{align}
P(m,f,x^n,y^n,\hat{x}^n)\apx{}P(m,f,x^n,y^n)\ind\{\hat{x}^n=x^n\}.\label{eq:pmf1.5}
\end{align}

Using equations \eqref{eqn:ee1}, \eqref{eq:pmf1.5} and the triangle inequality, we have
\begin{align}
\hat{P}(m,f,x^n,y^n,\hat{x}^n)\apx{}P(m,f,x^n,y^n)\ind\{\hat{x}^n=x^n\}.\label{eq:pmf1}
\end{align}

\emph{Part (3) of the proof: Eliminating the shared randomness.}
 In the protocol we assumed that the transmitter and the receiver have access to shared randomness\footnote{{It is worthy to note that the random binning map is also shared between the transmitter and the receiver. So we must  simultaneously find  a good fixed binning and a good instance $f$. However the random binning is a usual shared randomness and can be regarded as  \emph{background} randomness. In the other hand, the shared randomness $F$ plays an essential role in our framework and does not exist in the other works, so one can regard this kind of randomness as \emph{foreground} randomness. In the rest of the paper, we emphasize the foreground randomness while bearing in mind that random binning is the background randomness. Also, whenever we eliminate the shared randomness, we first find a good fixed binning with some desired properties and then remove the foreground randomness with respect to this fixed binning.}} $F$ which is not present in the model. Nevertheless, we show that the transmitter and the receiver can agree on an instance $f$ of $F$.  Using Definition \ref{def:1}, equation \eqref{eq:pmf1} guarantees the existence of  a fixed binning with the corresponding pmf $p$ such that if we replace $P$ with $p$ in \eqref{eq:pmf2} and denote the resulting pmf with $\hat{p}$, then $\hat{p}(m,f,x^n,y^n,\hat{x}^n)\apx{}p(m,f,x^n,y^n)\ind\{\hat{x}^n=x^n\}$. In particular, this gives
$\hat{p}(\hat{X}^n\neq X^n)\le{\epsilon_n}$ for some vanishing sequence $\epsilon_n$. This guarantees the existence of a good instance $F=f$ such that
$\hat{p}(X^n\neq\hat{X}^n|f)\le \epsilon_n$. Reliable recovery of the transmitted $X^n$ implies reliable recovery of the message $M$. 

Finally, identifying $p(x^n|m,f)$ as the encoder and ($p^{SW}(\hat{x}^n|y^n,f),\mathsf{M}(\hx^n)$) as the decoder results in a pair of encoder-decoder with the probability of error at most $\epsilon_n$.

\subsection{Wiretap channel (secrecy for free)}
We now turn our attention to wiretap channel, to show in details that how one can prove secrecy for free. We use the strong secrecy in terms of vanishing total variation distance as our secrecy criterion. First we have the following formal definition.

\emph{Problem definition:}
Consider the problem of secure transmission over a wiretap channel, $p(y,z|x)$. Here, we wish to securely transmit a  message $M\in[1:2^{nR_0}]$ to the receiver $Y$,  while concealing it from the wiretapper. We use the total variation distance as a measure for analyzing the secrecy. Formally speaking there are,
\begin{itemize}
\item A message $M$ which are mutually independent and uniformly distributed,
\item A stochastic encoder $p^{enc}(x^n|m)$,
\item A decoder which assigns an estimate $\hat{M}$ of $M$ to each $y^n$.
\end{itemize}
\par A rate $R$ is said to be achievable if $\Pr\{\hat{M}\neq M\}\rightarrow 0$ and $M$ is nearly independent of the wiretapper output, $Z^n$, that is,
\[
\tv{p(m,z^n)-p^U_{\mm}(m)p(z^n)}\rightarrow 0,
\]
where, here $p(z^n)$ is the induced pmf on $Z^n$ and is not an i.i.d.\ pmf.

Here we want to prove the achievability of the rate $I(U;Y)-I(U;Z)$ using our framework. Without loss of generality, we can assume $U=X$, so we will prove  the achievability of the rate $I(X;Y)-I(X;Z)$ for the wiretap channel. The proof follows exactly from the proof of channel coding with one exception, here we must satisfy the secrecy criterion in addition to reliability.

We set up Protocol A and Protocol B in the same way of the ones introduced in the proof of channel coding (using the same random binning). We only replace $p(y|x)$ by $p(y,z|x)$. That is, the pmf's $P$ in \eqref{eq:pmf0} and $\hat{P}$ in \eqref{eq:pmf2} are replace by
\begin{align}
P(x^n,y^n,z^n,m,f,\hat{x}^n)&=p(x^n,y^n,z^n)P(m,f|x^n)P^{SW}(\hat{x}^n|y^n,f)\n
&= P(m,f)P(x^n|m,f)p(y^n,z^n|x^n)P^{SW}(\hat{x}^n|y^n,f).\label{eq:s-pmf0}\\
\hat{P}(x^n,y^n,z^n,m,f,\hat{x}^n)&=p^U(f)p^U(m)P(x^n|m,f)p(y^n,z^n|x^n)P^{SW}(\hat{x}^n|y^n,f).\label{eq:s-pmf2}
\end{align}
The same argument used in the part (2a) of the proof of channel coding shows that if $\tR+R<H(X)$ the source coding side of the problem (Protocol A) and the main problem (Protocol B) are equivalent, that is $\hat{P}(m,f,x^n,y^n,z^n,\hat{x}^n)\apx{}P(m,f,x^n,y^n,z^n,\hat{x}^n)$. The same argument used in the part (2b) of the proof of channel coding guaranties the reliability, whenever $R>H(X|Y)$. That is,
\begin{align}
\hat{P}(m,f,x^n,y^n,z^n,\hat{x}^n)\apx{}P(m,f,x^n,y^n,z^n)\ind\{\hat{x}^n=x^n\}.
\end{align}
Using part one of lemma \ref{le:total}, we can introduce $\hat{m}$ in the above equation, because random variable $\hat{M}$ is a function of $\hat{X}^n$.
\begin{align}
\hat{P}(m,f,x^n,y^n,z^n,\hat{x}^n,\hat{m})\apx{}P(m,f,x^n,y^n,z^n)\ind\{\hat{x}^n=x^n\}\ind\{\mathsf{M}(\hx^n)=\hat{m}\},\label{eq:s-pmf01}
\end{align}
where $\mathsf{M}(x^n)$ is the bin assigned to $x^n$. Using the fact that $m=\mathsf{M}(x^n)$ and $\hat{m}=\mathsf{M}(\hx^n)$ are the outputs of the same function, one can easily show that the marginal pmf of the RHS of \eqref{eq:s-pmf01} for the random variables $(M,F,Z^n,\hat{M})$ factorizes as $P(m,f,z^n)\ind\{\hat{m}=m\}$. Using \eqref{eq:s-pmf01} and the first part of Lemma \ref{le:total}, we get

\be
 \hat{P}(m,f,z^n,\hat{m})\apx{}P(m,f,z^n)\ind\{\hat{m}=m\}.\label{eq:s-pmf1}
\ee

We now add a third sub-part to the second part of the proof of channel coding which guarantees secrecy.

\emph{Part (2c) of the proof: Sufficient conditions that make the protocols secure}: We must take care of independence of $M$, and $(Z^n,F)$ consisting of the the wiretapper's output and the shared randomness.  Consider the random variables of protocol A. Substituting $X_1=X, Z=Z$ in Theorem \ref{thm:re} implies that $M$ is nearly independent of $(Z^n,F)$ if
\begin{align}
R+\tR&<H(X|Z).\label{eq:s-S2sec}
\end{align}
In other words, the above constraints imply that
 \begin{align}P(z^n,f,m)\apx{}p(z^n)p^U(f)p^U(m).\label{eq:s-Spmf1.875}\end{align}
  Also observe that the pmf $P(z^n)$ is equal to i.i.d.\ pmf $p(z^n)$ in Protocol A.

Using equations \eqref{eq:s-pmf1}, \eqref{eq:s-Spmf1.875} and the third part of Lemma \ref{le:total} we have
\begin{align}
\hat{P}(m,f,z^n,\hat{m})\apx{}p(z^n)p^U(f)p^U(m)
\ind\{\hat{m}=m\}.\label{eq:s-Spmf1.9}
\end{align}

\emph{Part (3) of the proof: Eliminating the shared randomness: }
 In the protocol we assumed that the transmitter, the receivers and the wiretapper have access to shared randomness $F$ which is not present in the model. Nevertheless, we show that the transmitter and the receivers can agree on an instance $f$ of $F$.  Using Definition \ref{def:1}, equation \eqref{eq:s-Spmf1.9} guarantees existence of  a fixed binning with the corresponding pmf $p$ such that if we replace $P$ with $p$ in \eqref{eq:s-pmf2} and denote the resulting pmf with $\hat{p}$, then
 \begin{align*}
 \hat{p}(m,f,z^n,\hat{m})\apx{}p(z^n)p^U(f)p^U(m)
\ind\{\hat{m}=m\}.
 \end{align*}
 Now, the second part of Lemma \ref{le:total} shows that there exists an instance $f$ such that
 \begin{align*}
 \hat{p}(m,z^n,\hat{m}|f)\apx{}p(z^n)p^U(m)
\ind\{\hat{m}=m\}.
\end{align*}
This approximation yields both the secrecy and the reliability requirements as follows:
\begin{itemize}
\item \emph{Reliability:} Using the second item in part 1 of Lemma \ref{le:total} we conclude that
\[
\hat{p}(m, \hat{m}|f)\apx{}p^U(m)\ind\{\hat{m}=m\},
\]
 which is equivalent to $\hat{p}\left(\hat{M}\neq M|f\right)\rightarrow 0$. 
\item \emph{Secrecy:} Using the second item in part 1 of Lemma \ref{le:total} we conclude that
 $\hat{p}(z^n, m| f)\apx{}p^U(m)p(z^n)$.
\end{itemize}
Finally, identifying $p(x^n|m,f)$  as the encoder and the Slepian-Wolf decoder results in reliable and secure encoder-decoder.

\subsection{Lossy source coding}
\emph{Problem definition:} Consider the problem of lossy compression of a source within a desired distortion. In this setting, there is an i.i.d.\ source $X^n$ distributed according to $p(x)$, an (stochastic) encoder mapping $\mx^n$ to $M\in[1:2^{nR}]$, a decoder that reconstructs a lossy version of $X^n$ (namely $Y^n$) and a distortion measure $d:\mx\times\my\rightarrow [0,d_{max}]$. A rate $R$ is said to be achievable at the distortion $D$, if $\mathbb{E}(d(X^n,Y^n))\le D+\epsilon_n$, where $\epsilon_n\rightarrow 0$ and $d(X^n,Y^n)$ is the average per letter distortion.

\emph{Statement:} Here we wish to reprove the known result on the achievability of the rate $R>I(X;Y)$ for any $p(x,y)$ where $\e d(X,Y)<D$.

\emph{Proof:}
An overview of the proof is given in Fig. \ref{fig:RD}. Take some arbitrary $p(x,y)$ where $\e d(X,Y)<D$. 

\emph{Part (1) of the proof:}
We define two protocols each of which induces a joint distribution on random variables that are defined during the protocol. Fig. \ref{fig:RD} illustrates how the source coding side of problem can be used to prove the main problem.
\begin{figure}
\centering\includegraphics[width=.45\linewidth]{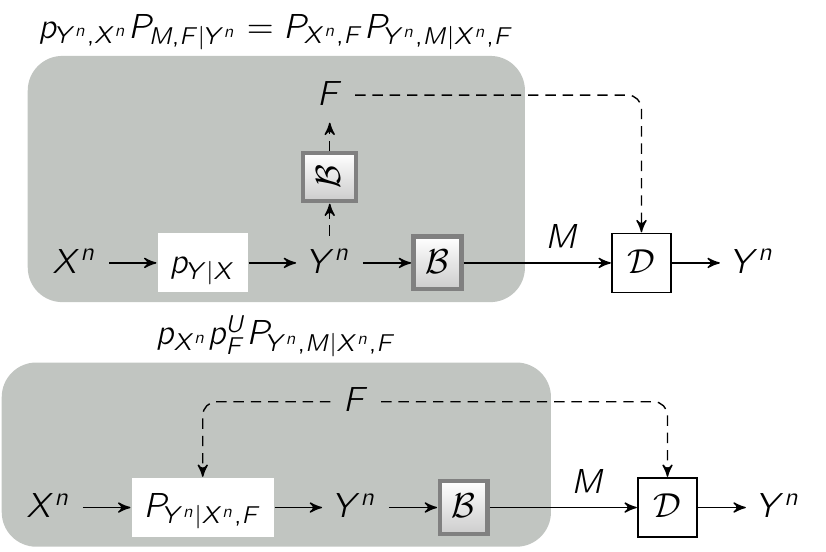}
        \caption{{\small (Top) Source coding side of the problem (Protocol A). We pass i.i.d.\ source $X^n$  through a virtual DMC $p_{Y|X}$ to get an i.i.d.\ sequence $Y^n$ with the desired distortion less than $D$. This is because $\e_{p(x^n,y^n)}d(X^n,Y^n)=\e d(X,Y)<D$. So we aim to describe the \emph{good} sequence $Y^n$ to decoder. We describe $Y^n$ through two random bins $M$ and $F$ at rates $R$ and $\tilde{R}$, where $M$ will serve as the message for the receiver in the main problem, while $F$ will serve as the shared randomness. We use SW decoder for decoding. As long as $R+\tR>H(Y)$, decoder can reliably decode the good sequence $Y^n$ with the desired distortion.  (Bottom) Coding for the lossy source coding problem assisted with the shared randomness (Protocol B). We pass the source $X^n$ and the shared randomness $F$ through the reverse encoder to get a sequence $Y^n$. If the joint distribution of $X^n$ and $F$ is equal to that of protocol A, then the two protocols are equivalent, meaning that $Y^n$ is a sequence with the desired distortion. Since 
$\e[d(X^n,Y^n)]=\e_F\e[d(X^n,Y^n)|F]<D$,  the parties can find a good instance $F=f$ of shared randomness without disturbing the distortion criterion, i.e. $\e[d(X^n,Y^n)|F=f]<D$. However conditioned on $f$, the distribution of the source can be disturbed (it is not equal to $p_{X^n}$). To get rid of this bad effect, we assume that the shared randomness and the source are nearly independent. So to get the equivalence between the two protocols, we need to impose constraint implying $P_{X^n,F}\apx{}p_{X^n}p^U_F$. This is holds as long as $\tR<H(Y|X)$. }    }\label{fig:RD}
\end{figure}
\emph{Protocol A (Source coding side of the problem).}
Let $(X^n,Y^n)$ be i.i.d.\ and distributed according to $p(x,y)$.

\underline{Random Binning}: Consider the following random binning: to each sequence $y^n$, assign uniformly and independently two bin indices $m\in[1:2^{nR}]$ and $f\in[1:2^{n\tR}]$. Further, we use a Slepian-Wolf decoder to recover $y^n$ from $(m,f)$. We denote the output of the decoder by  $\hat{y}^n$. The rate constraint for the success of the decoder will be discussed later, although this decoder can be conceived even when there is no guarantee of success.

The random pmf induced by the random binning, denoted by $P$, can be expressed as follows:
\begin{align}
P(x^n,y^n,m,f,\hat{y}^n)&=p(x^n,y^n)P(f|y^n)P(m|y^n)P^{SW}(\hat{y}^n|m,f)\n
&=P(f,x^n,y^n)P(m|y^n)P^{SW}(\hat{y}^n|m,f)\n
&= P(f,x^n)P(y^n|x^n,f)P(m|y^n)P^{SW}(\hat{y}^n|m,f).\label{eq:pmfL0}
\end{align}
The relation among random variables and random bin assignments is depicted in the top diagram of Fig. \ref{fig:RD}.

\emph{Protocol B (coding for the main problem assisted with the shared randomness). } In this protocol we assume that the transmitter and the receiver have access to the
shared randomness $F$ where $F$ is uniformly distributed over $[1:2^{n\tR}]$.
Then, the protocol proceeds as follows (see also the bottom diagram of Fig. \ref{fig:RD} demonstrating the protocol B):
\begin{itemize}
\item The transmitter generates $Y^n$ according to the conditional pmf $P(y^n|x^n,f)$ of protocol A.
\item Next, knowing $y^n$, the transmitter sends $m$ which is the bin index of $y^n$. Random variable $M$ is generated according to the conditional pmf $P(m|y^n)$ of protocol A.
\item At the final stage, the receiver, knowing $(m,f)$ uses the Slepian-Wolf decoder $P^{SW}(\hat{y}^n|m,f)$ of protocol A to obtain an estimate of $y^n$.
\end{itemize}
The random pmf induced by the protocol, denoted by $\hat{P}$, factors as
\begin{align}
\hat{P}(x^n,y^n,m,f,\hat{y}^n)=p^U(f)p(x^n)P(y^n|x^n,f)P(m|y^n)P^{SW}(\hat{y}^n|m,f)\label{eq:pmfL2}
\end{align}

\emph{Part (2a) of the proof: Sufficient conditions that make the induced pmfs approximately the same}: To find the constraints that imply that the pmf $\hat{P}$ is close to the pmf $P$ in total variation distance,
we start with $P$ and make it close to $\hat{P}$ in a few steps. The first step is to observe that $f$ is a bin index of $y^n$ in protocol A. Theorem \ref{thm:re} implies that if $\tR<H(Y|X)$
then  $P(f,x^n)\apx{}p^U(f)p(x^n)=\hat{P}(f,x^n)$.
Equations \eqref{eq:pmfL0} and \eqref{eq:pmfL2} imply
\begin{align}
\hat{P}(m,f,x^n,y^n,\hat{y}^n)&\apx{}
P(m,f,x^n,y^n,\hat{y}^n)\label{eqn:Lee1}
\end{align}

\emph{Part (2b) of the proof: Sufficient conditions that make the Slepian-Wolf decoder succeed}: The next step is 
{to see that when the Slepian-Wolf decoder of protocol A can reliably decode the  sequence $Y^n$.}
 Lemma \ref{le:sw} requires imposing the constraint $R+\tR>H(Y)$.
It yields that 
\begin{align}
P(m,f,x^n,y^n,\hat{y}^n)\apx{}P(m,f,x^n,y^n)\ind\{\hat{y}^n=y^n\}.\label{eq:Lpmf1.5}
\end{align}

Using equations \eqref{eqn:Lee1}, \eqref{eq:Lpmf1.5} and the triangle inequality \ref{le:total} we have
\begin{align}
\hat{P}(m,f,x^n,y^n,\hat{y}^n)\apx{}P(m,f,x^n,y^n)\ind\{\hat{y}^n=y^n\}.\label{eq:Lpmf1}
\end{align}

\emph{Part (3) of the proof: Eliminating the shared randomness {$F$}: }
 Using Definition \ref{def:1}, equation \eqref{eq:Lpmf1} guarantees existence of  a fixed binning with the corresponding pmf $p$ such that if we replace $P$ with $p$ in \eqref{eq:pmfL2} and denote the resulting pmf with $\hat{p}$, then $\hat{p}(m,f,x^n,y^n,\hat{y}^n)\apx{}p(m,f,x^n,y^n)\ind\{\hat{y}^n=y^n\}:=\tilde{p}(m,f,x^n,y^n,\hat{y}^n)$. Using the second item of part one of Lemma \ref{le:total}, we have $\hat{p}(x^n,\hat{y}^n)\apx{}\tilde{p}(x^n,\hat{y}^n)=p_{X^nY^n}(x^n,\hat{y}^n)$.



Applying lemma \ref{le:distortion} to $\tilde{p}_{X^n\hY^n}$ and $\hat{p}_{X^nY^n}$ and noting that $\e_{p_{X^nY^n}(x^n,\hy^n)}d(X^n,\hY^n)<D$, we obtain $$\e_{\hat{p}(x^n,\hy^n)}d(X^n,\hY^n)<D,$$ for sufficiently large $n$. Using the law of iterated expectation, we conclude that there exists an $F=f$ such that $\e_{\hat{p}(x^n,\hy^n|f)}d(X^n,\hY^n)<D$.

 Finally, specifying $p(m|x^n,f)$ as the encoder (which is equivalent to generating a random sequence $y^n$ according to $p(y^n|x^n,f)$ and then transmitting the bin index $m$ assigned to $y^n$) and $p^{SW}(\hy^n|m,f)$ as the decoder results in a pair of encoder-decoder obeying the desired distortion.

\subsection{{Distributed channel synthesis}}
One important application of our framework is to prove achievability part of the channel simulation problems, see \cite{me2,me3, farzin13}. In this subsection, we illustrate how our achievability framework can be adopted to prove the achievability part of channel simulation problems over networks. To do this, we apply our framework to re-prove the achievability part of the channel synthesis problem \cite{cuff-trans} as a building block of channel simulation problems. First we give a formal definition of the problem.

\emph{Channel synthesis problem:} In this setting, there are a stochastic encoder, a stochastic decoder, a communication link of limited rate $R_1$  between encoder and decoder, an i.i.d.\ source $X^n$ distributed according to $p_X$ and a common randomness $\omega$ uniformly distributed over a finite set $[1:2^{nR_0}]$ that is independent of the source. Observing the source and the common randomness, the encoder chooses an index $M\in[1:2^{nR_1}]$  and transmits it over the communication link to the decoder. Observing $M$ and the common randomness $\omega$, the decoder produces an output $Y^n$. The goal is to find an encoder-decoder such that the induced distribution on $(X^n,Y^n)$ could not be distinguished from a given  joint i.i.d.\ distribution according to $p_{XY}=p_Xp_{Y|X}$, which can be thought of as  the joint distribution of $(X^n,Y^n)$ when $X^n$ is transmitted over a DMC channel $p_{Y|X}$. A rate pair $(R_0,R_1)$ is said to be achievable, if there exists a sequence of encoder-decoders such that the total variation distance between the induced distribution $p^{(\mathtt{ind})}(x^n,y^n)$ and the i.i.d.\ distribution $p(x^n,y^n)$ vanishes as $n$ goes to infinity, that is
\be
\label{eq:t}
\lim_{n\rightarrow\infty}\tv{p^{(\mathtt{ind})}(x^n,y^n)-\prod_{i=1}^n p(x_{1,i},y_{1,i})}=0.
\ee
\begin{theorem}[{\cite[Theorem II.1]{cuff-trans}}]\label{thm:cuff}
A rate pair $(R_0,R_1)$ is achievable iff there exists a random variable   $U$ such that $X-U-Y$ is a Markov chain, the marginal distribution of $(X,Y)$ is equal to the desired distribution $p_{XY}$ and the following inequalities hold:
\be
\begin{split}
R_1&>I(X;U),\\
R_0+R_1&>I(XY;U).
\end{split}\label{eq:cuff}
\ee

\begin{proof}
\emph {Part (1) of proof}: We define two protocols each of which induces a joint distribution on random variables that are defined during the protocol. Fig. \ref{fig:synth} illustrates how the source coding side of problem can be used to prove the main problem. \\

\emph{Protocol A.} Let $(X^{n},U^n,Y^{n})$ i.i.d.\ and distributed according to $p(x,u,y)$ given in the Theorem \ref{thm:cuff}. Consider the following random binning:
\begin{itemize}
\item To each $u^{n}$, assign three random bin indices $F\in[1:2^{n\tilde{R}}]$, $m\in[1:2^{nR}]$ and $\omega\in[1:2^{nR_0}]$.
\item We use Slepian-Wolf decoder to estimate $\hat{u}^{n}$  from $(\omega,f,m)$.
\end{itemize}
The rate constraints for the success of these decoders will be imposed
later, although these decoders can be conceived even when there is no guarantee of success. The random pmf induced by the random binning, denoted by $P$, can be expressed as follows:
\begin{align}
P(x^{n},u^{n},y^{n},f,m,\omega,\hu^n)&=p(x^{n}u^{n})p(y^{n}|u^{n})P(f,m,\omega|u^{n})P^{SW}(\hat{u}^{n}|f,m,\omega)
\n
&=p(x^{n})P(u^{n},f,m,\omega|x^{n})p(y^n|u^n)P^{SW}(\hat{u}^{n}|f,m,\omega)\n
&=P(x^n,f,\omega)P(u^{n}|f,\omega,x^{n})P(m|u^n)P^{SW}(\hat{u}^{n}|f,m,\omega)p(y^n|u^n)
\end{align}
The relation among random variables and random bin assignments is depicted in the left diagram of Fig. \ref{fig:synth}.

\begin{figure}
\centering\includegraphics[width=\linewidth]{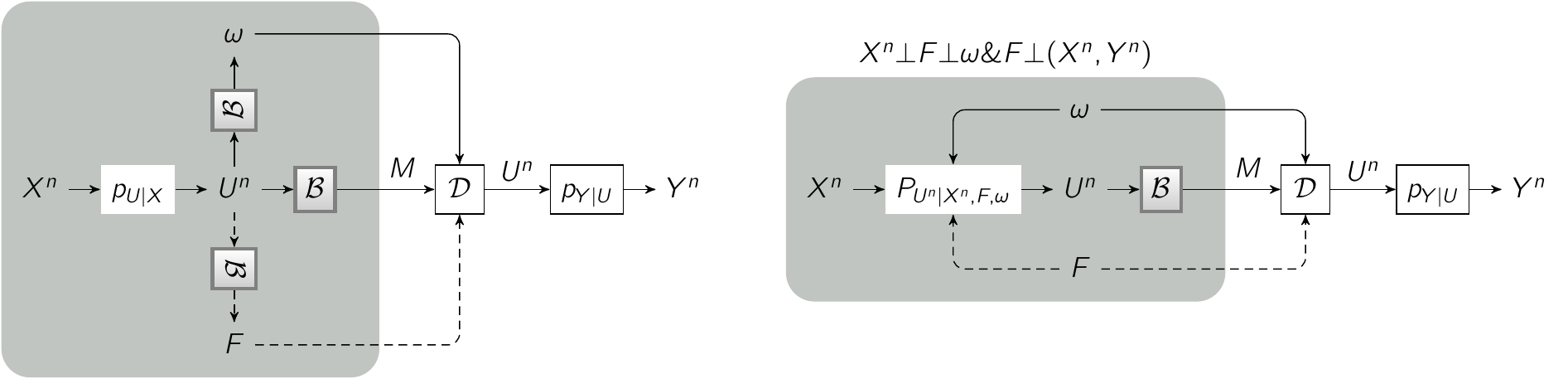}
        \caption{\small (Left) Source coding side of the problem (Protocol A). We pass the i.i.d.\ source $X^n$  through a virtual DMC $p_{U|X}$ to get an i.i.d.\ sequence $U^n$ and then we pass $U^n$ a virtual DMC $p_{Y|U}$ to get an i.i.d.\ sequence $Y^n$. The sequences $X^n,U^n$ and $Y^n$  are jointly i.i.d.\ and distributed according to $p(x,u,y)=p(x,u)p(y|u)$. In particular, $(X^n,Y^n)$ have the desired i.i.d.\ distribution $p(x^n,y^n)$. To produce $Y^n$, decoder only needs to access to $U^n$, so we aim to describe the sequence $U^n$ to the decoder. We describe $U^n$ through three random bins $M$, $\omega$ and $F$, where $M$ will serve as the message from the encoder to the decoder in the main problem, while $\omega$ and $F$ will serve as the common randomness and the extra shared randomness. We use SW decoder for decoding. As long as the SW constraint \eqref{eq:cst-c2} is satisfied, decoder can reliably decode the  sequence $U^n$ and then produces $Y^n$ using the virtual DMC $p_{Y|U}$.  (Right) Coding for the channel synthesis problem assisted with the extra shared randomness (Protocol B). We pass the source $X^n$, the common randomness $\omega$ and the extra shared randomness $F$ through the reverse encoder to get a sequence $U^n$. Similar to the lossy source coding problem, one needs to have mutual independence among the source, the common randomness and the shared randomnesses. This is because the source and the common randomness are independent by the problem definition and at the last step of proof, we must eliminate the shared randomness by conditioning on an instance of it, without disturbing the joint distribution of the source and common randomness. To get the equivalence between the two protocols, we need to impose a constraint implying the desired mutual independence.  This holds as long as \eqref{eq:cst-c1} is satisfied. Finally we must eliminate the shared randomness $F$ by conditioning on an instance of the shared randomness, without disturbing the desired joint i.i.d.\ distribution of $(X^n,Y^n)$. To do this, it suffices to have $F\bot(X^n,Y^n)$. Resorting to the source coding side, we observe $F\bot(X^n,Y^n)$ holds whenever \eqref{eq:cst-c3} is satisfied.   }\label{fig:synth}
\end{figure}

\emph{Protocol B}. In this protocol we assume that the nodes have access to the extra common randomness $F$ where $F$ is distributed uniformly over the sets $[1:2^{n\tilde{R}}]$. Now we use the following protocol (see also the right diagram of Fig. \ref{fig:synth} demonstrating the protocol B):
\begin{itemize}
 \item At the first stage, encoder knowing $(f,\omega,x^{n})$ generate a sequence $u^{n}$ according to the pmf $P(u^{n}|f,\omega,x^{n})$, and sends the bin index of $m(u^{n})$ of protocol A to the decoder.
 \item In the second stage, knowing $(f,\omega,m)$, decoder uses the Slepian-Wolf decoder $P^{SW}(\hat{u}^{n}|f,m,\omega)$ to obtain an estimate of $u^{n}$. Then it generates $y^{n}$ according to $ p(y^{n}|\hat{u}^{n})$ {(more precisely, $p_{Y^n|U^n}(y^{n}|\hat{u}^{n})$)}.
\end{itemize}
The random pmf induced by the protocol, denoted by $\hat{P}$, can be written as follows:

\begin{align}
\hat{P}(x^{n},u^{n},y^{n},f,m,\omega,\hu^n)=p(x^{n})p^U(f)p^U(\omega)P(u^{n}|f,\omega,x^{n})P(m|u^n)P^{SW}(\hat{u}^{n}|f,m,\omega)p(y^n|\hat{u}^n).\label{eq:cst0}
\end{align}

\emph{Part 2 of the proof: Sufficient conditions that make the induced pmfs approximately the same:}
To find the constraints that imply that the pmf $\hat{P}$ is close to the pmf $P$ in total variation distance, we start with $P$ and make it close to $\hat{P}$ in a few steps. The first step is to observe that $(f,\omega)$  is the bin index of $u^n$.  Theorem \ref{thm:re} implies that if
\begin{align}
R_0+\tilde{R}&<H(U|X),\label{eq:cst-c1}
\end{align}
then  $P(x^n,f,\omega)\apx{}p(x^{n})p^U(f)p^U(\omega)=\hat{P}(x^n,f,\omega)$. This implies
\begin{align}
P(x^{n},u^{n},f,m,\omega,\hu^n)\apx{}\hat{P}(x^{n},u^{n},f,m,\omega,\hu^n).\label{eq:cst1}
\end{align}

The next step is
{to see that when the Slepian-Wolf decoder of protocol A can reliably decode the  sequence $U^n$.} Lemma \ref{le:sw} requires
imposing the following constraint:
\begin{align}
\tilde{R}+R_{0}+R_{1}>H(U).\label{eq:cst-c2}
\end{align}
It yields that
\begin{align}
P(x^{n},u^{n},f,m,\omega,\hu^n)\apx{}P(x^{n},u^{n},f,m,\omega)\ind\{\hu^n=u^n\}.
\end{align}
This besides  \eqref{eq:cst1} and the first part and the third part of Lemma \ref{le:total} give
\begin{align}
\hat{P}(x^{n},u^{n},y^n,f,m,\omega,\hu^n)&=\hat{P}(x^{n},u^{n},f,m,\omega,\hu^n)p(y^n|\hu^n)\n
                                                                     &\apx{}P(x^{n},u^{n},f,m,\omega)\ind\{\hu^n=u^n\}p(y^n|\hu^n)\n
                                                                     &=P(x^{n},u^{n},f,m,\omega)\ind\{\hu^n=u^n\}p(y^n|u^n)\n
                                                                     &=P(x^{n},u^{n},y^n,f,m,\omega)\ind\{\hu^n=u^n\}.
\end{align}
Using the first of Lemma \ref{le:total} we conclude that
\begin{align}
\hat{P}(f,x^n,y^n)\apx{}P(f,x^n,y^n) .
\end{align}
In particular, the marginal pmf of $(X^n,Y^n)$ of the RHS of this expression is equal to $p(x^n,y^n)$ which is the desired pmf.

\emph{Part(3) of proof:}
In the protocol we assumed that the nodes have access to an external randomness $F$
which is not present in the model. Nevertheless, we can assume that the nodes agree on an instance $f$ of $F$. In this case, the induced pmf $\hat{P}(x^n,y^n)$ changes to the conditional pmf $\hat{P}(x^n,y^n|f)$. But if $F$ is independent of $(X^n,Y^n)$, then the conditional pmf $\hat{P}(x^n,y^n|f)$ is also close to the desired distribution. To
obtain the independence, we again use Theorem \ref{thm:re}. Substituting $T = 1$, $X_1=U$ and
$Z = XY$ in Theorem \ref{thm:re}, asserts that if
\begin{align}
\tilde{R}<H(U|XY),\label{eq:cst-c3}
\end{align}
then $P(x^n,y^n,f)\apx{}p^U(f)p(x^n,y^n)$ implying $\hat{P}(x^n,y^n,f)\apx{}p^U(f)p(x^n,y^n)$. Thus, there exists a fixed binning with the corresponding pmf $\tilde{p}$ such that if we replace $P$ with $\tilde{p}$ in \eqref{eq:cst0} and denote the resulting pmf with $\hat{p}$, then $\hat{p}(x^n,y^n,f)\apx{}p^U(f)p(x^n,y^n)$. Now the second part of Lemma \ref{le:total} shows that there exists an instance $F$ such that $\hat{p}(x^n,y^n|f)\apx{}p(x^n,y^n)$.

Specifying $p(m|x^n,f,\omega)$ as the encoder (which is equivalent to generating a random sequence $u^n$ according to $p(u^n|x^n,f,\omega)$ and then transmitting the bin index $m$ assigned to $u^n$) and $(p^{SW}(\hu^n|m,f,\omega)$ as the decoder results in a pair of encoder-decoder obeying the desired vanishing total variation distance.
 Finally, eliminating $\tilde{R}_1$ from \eqref{eq:cst-c1}, \eqref{eq:cst-c2} and \eqref{eq:cst-c3}  using Fourier-Motzkin elimination (FME) results in the rate region \eqref{eq:cuff}.
\end{proof}
\begin{remark}\label{re:fme}
{
We have applied FME to the constraint \eqref{eq:cst-c1}, \eqref{eq:cst-c2} and \eqref{eq:cst-c3}. However we have the implicit constraint  $\tR\ge0$. Nevertheless, we show that this constraint is redundant. To do this, we show that if $(R_0,R_1,\tR)$ satisfies \eqref{eq:cst-c1}, \eqref{eq:cst-c2} and \eqref{eq:cst-c3} for some r.v. $U$ and $\tR$ (which is not necessarily positive), then there exists a r.v. $\bar{U}$ and $\bar{R}\ge 0$ such that $(R_0,R_1,\bar{R})$ satisfies \eqref{eq:cst-c1}, \eqref{eq:cst-c2} and \eqref{eq:cst-c3} for $\bar{U}$ instead of $U$. If $\tR\ge0$, it is nothing to prove. So we assume $\tR<0$. Let $W$ be a r.v. with entropy $H(W)>|\tR|$. Further assume that $W$ is  independent of all other r.v.'s, i.e. $(U,X,Y)$. Let $\bar{R}=\tR+H(W)$ and $\bar{U}=(U,W)$. It is clear that $\bar{R}>0$. Now it can easily shown that $(R_0,R_1,\bar{R})$ satisfies \eqref{eq:cst-c1}, \eqref{eq:cst-c2} and \eqref{eq:cst-c3} for $\bar{U}$, using the independence of $W$ from all other r.v.'s and the  fact that $(R_0,R_1,\tR)$ satisfies \eqref{eq:cst-c1}, \eqref{eq:cst-c2} and \eqref{eq:cst-c3}.
}
\end{remark}

\end{theorem}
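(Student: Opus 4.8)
The plan is to instantiate the three-part OSRB template with a single auxiliary variable $U$ satisfying $X-U-Y$ and carrying the prescribed $(X,Y)$-marginal (such a $U$ exists in the ``if'' direction; the converse is the one from \cite{cuff-trans} and is not reproved here). In \emph{Protocol A} (the source-coding side) we take one i.i.d.\ copy $(X^n,U^n,Y^n)\sim\prod_i p(x_i,u_i)p(y_i|u_i)$ and, to $u^n$ alone, assign three independent bin indices: $F\in[1:2^{n\tilde{R}}]$ (to become a fictitious extra shared randomness), $M\in[1:2^{nR_1}]$ (to become the link message) and $\omega\in[1:2^{nR_0}]$ (to become the genuine common randomness). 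A Slepian--Wolf decoder then produces $\hat{u}^n$ from $(F,M,\omega)$, and $Y^n$ is regenerated from $\hat{u}^n$ through $p_{Y|U}$ — legitimate because $X-U-Y$ makes $p(y^n|u^n,x^n)=p(y^n|u^n)$. In \emph{Protocol B} the encoder, seeing $(x^n,\omega,f)$, samples $u^n\sim P(u^n|x^n,f,\omega)$ (the reverse encoder read off Protocol A), sends the bin index $m$ of $u^n$, and the decoder SW-decodes $\hat{u}^n$ and outputs $y^n\sim p_{Y|U}(\cdot|\hat{u}^n)$.

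\emph{Part 2: matching the two induced pmfs.} First apply Theorem~\ref{thm:re} to Protocol A with binned source $U$ and \emph{non-binned} source $X$, treating the pair $(F,\omega)$ as one bin index of rate $R_0+\tilde{R}$: if $R_0+\tilde{R}<H(U|X)$ then $P(x^n,f,\omega)\approx p(x^n)p^U(f)p^U(\omega)=\hat{P}(x^n,f,\omega)$ (in particular the genuine common randomness $\omega$ ends up independent of the source, as the model requires), and by Lemma~\ref{le:total} this propagates to agreement of the two pmfs on all of $(x^n,u^n,y^n,f,m,\omega,\hat{u}^n)$. Second, invoke Lemma~\ref{le:sw} for recovering $U^n$ from $(F,M,\omega)$, which needs $\tilde{R}+R_0+R_1>H(U)$; then $P(x^n,u^n,y^n,f,m,\omega,\hat{u}^n)\approx P(x^n,u^n,y^n,f,m,\omega)\ind\{\hat{u}^n=u^n\}$, and combining with the previous step via Lemma~\ref{le:total} and marginalizing gives $\hat{P}(f,x^n,y^n)\approx P(f,x^n,y^n)$, whose $(X^n,Y^n)$-marginal is exactly the target $\prod_i p(x_i,y_i)$.

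\emph{Part 3: deleting the fictitious randomness $F$.} Here the delicate point is that freezing $F=f$ is harmless only if $F$ is (nearly) independent of $(X^n,Y^n)$; this is a second application of Theorem~\ref{thm:re}, now with $T=1$, binned source $U$ and non-binned source $Z=(X,Y)$, giving the requirement $\tilde{R}<H(U|XY)$ and hence $\hat{P}(x^n,y^n,f)\approx p^U(f)\prod_i p(x_i,y_i)$. One then fixes a good deterministic binning (Definition~\ref{def:1}) and, by part~2 of Lemma~\ref{le:total}, a good instance $f$, obtaining $\hat{p}(x^n,y^n|f)\approx\prod_i p(x_i,y_i)$; reading $P(m|x^n,f,\omega)$ as the encoder and ``SW-decode then $p_{Y|U}$'' as the decoder completes achievability. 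Finally, eliminating $\tilde{R}$ from $R_0+\tilde{R}<H(U|X)$, $\tilde{R}+R_0+R_1>H(U)$, $\tilde{R}<H(U|XY)$ by Fourier--Motzkin yields precisely $R_1>I(X;U)$ and $R_0+R_1>I(XY;U)$; one should separately argue that the implicit $\tilde{R}\ge 0$ is redundant, e.g.\ by replacing $U$ with $(U,W)$ for an independent high-entropy dummy $W$ whenever the feasible $\tilde{R}$ is negative.

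The main obstacle, besides the routine total-variation bookkeeping, is getting the two uses of Theorem~\ref{thm:re} right: the \emph{same} auxiliary sequence $U^n$ is binned, but the non-binned ``side'' source differs — it is $X^n$ in Part 2 (so that $\omega$ is independent of the source) and $(X^n,Y^n)$ in Part 3 (so that $F$ may be frozen) — and the two resulting entropy constraints, together with the SW constraint, are exactly what survives the Fourier--Motzkin step. A secondary subtlety is the order of elimination in Part 3: one must first pin down a good fixed binning and only then a good value of $f$, just as in the channel-coding example.
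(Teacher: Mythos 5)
Your proposal is correct and follows essentially the same route as the paper's proof: the same two protocols, the same three constraints $R_0+\tilde{R}<H(U|X)$, $\tilde{R}+R_0+R_1>H(U)$, $\tilde{R}<H(U|XY)$ obtained from Theorem \ref{thm:re} (applied twice, with side information $X$ and then $(X,Y)$) and Lemma \ref{le:sw}, the same elimination of $F$ after fixing a good binning, and the same Fourier--Motzkin step with the dummy-variable argument for $\tilde{R}\ge 0$ as in Remark \ref{re:fme}. The only slip is presentational: in Protocol A the sequence $Y^n$ is part of the jointly i.i.d.\ tuple (generated from $U^n$, not from $\hat{u}^n$), with regeneration from $\hat{u}^n$ occurring only in Protocol B, exactly as you describe there.
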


\subsection{Wiretap broadcast channels with strong secrecy criterion}\label{sub:3r}
\emph{Problem definition:}
Consider the problem of secure transmission over a broadcast channel with a wiretapper, $p(y_1,y_2,z|x)$. Here, we wish to securely transmit a common message $m_0\in[1:2^{nR_0}]$ to the receivers $Y_1,Y_2$ and two private messages $m_j\in[1:2^{nR_j}], j=1,2$ to the receivers $Y_j,j=1,2$, respectively, while concealing them from the wiretapper. We use the total variation distance as a measure for analyzing the secrecy. Formally speaking there are,
\begin{itemize}
\item Three messages $M_0,M_1,M_2$ which are mutually independent and uniformly distributed,
\item A stochastic encoder $p^{enc}(x^n|m_{[0:2]})$,
\item Two decoders, where decoder $j$ assigns an estimate $(\hat{m}_{0,j},\hat{m}_j)$ of $(m_0,m_j)$ to each $y_j^n$.
\end{itemize}
\par A rate-tuple $(R_0,R_1,R_2)$ is said to be achievable if $\Pr\{\cup_{j=1,2}(\hat{M}_{0,j},\hat{M}_j)\neq(M_0,M_j)\}\rightarrow 0$ and $M_{[0:2]}$ is nearly independent of the wiretapper output, $Z^n$, that is,
\[
\tv{p(m_{[0:2]},z^n)-p^U_{\mm_{[0:2]}}(m_{[0:2]})p(z^n)}\rightarrow 0,
\]
where, here $p(z^n)$ is the induced pmf on $Z^n$ and is not an i.i.d.\ pmf.

Below, we state an extension of Marton's inner bound for the capacity region of wiretap broadcast channel.
\begin{theorem}\label{thm:marton}
A rate-tuple $(R_0,R_1,R_2)$ is achievable for the secure transmission over the wiretap broadcast channel, if it belongs to the convex hull of
\begin{align}
R_0+R_j&<I(U_0U_j;Y_j|Q)-I(U_0U_j;Z|Q),\ j=1,2.\n
R_0+R_1+R_2&<\min\left\{I(U_0;Y_1|Q),I(U_0;Y_2|Q)\right\}+I(U_1;Y_1|U_0,Q)\n&\quad+I(U_2;Y_2|U_0,Q)-I(U_1;U_2|U_0,Q)-I(U_{[0:2]};Z|Q)\n
2R_0+R_1+R_2&<I(U_0U_1;Y_1|Q)-I(U_0U_1;Z|Q)+I(U_0U_2;Y_2|Q)\n&\quad-I(U_0U_2;Z|Q)-I(U_1;U_2|U_0,Z,Q)\label{eq:bcsec}
\end{align}
where $Q,U_{[0:2]}-X-(Y_1,Y_2,Z)$ forms a Markov chain.
\end{theorem}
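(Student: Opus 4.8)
The plan is to run the three-part OSRB template of this section with a Marton-type choice of the underlying i.i.d.\ random variables and one Slepian--Wolf decoder per legitimate receiver; the wiretapper's term $-I(\cdot;Z|Q)$ will then come ``for free'' from one extra invocation of Theorem~\ref{thm:re}, exactly as in the wiretap-channel example. Fix a pmf $p(q,u_0,u_1,u_2,x)$ with $Q,U_{[0:2]}-X-(Y_1,Y_2,Z)$ and let $(Q^n,U_0^n,U_1^n,U_2^n,X^n,Y_1^n,Y_2^n,Z^n)$ be i.i.d.\ according to $p(q,u_{[0:2]},x)\,p(y_1,y_2,z|x)$. The time-sharing variable $Q$ is carried through by treating $Q^n$ as one more piece of shared randomness, eliminated together with the helper indices in Part~3; equivalently one proves the bound for $Q$ constant and takes the convex hull at the end, which is why the statement is a convex hull.

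In \emph{Protocol~A} (the source-coding side) we perform a distributed random binning: to each $u_t^n$, $t\in\{0,1,2\}$, assign independently a message index $m_t\in[1:2^{nR_t}]$ and a helper index $f_t\in[1:2^{n\tilde{R}_t}]$, and for $j=1,2$ let $P^{SW}_j(\hat u_0^n,\hat u_j^n\mid y_j^n,f_0,f_j,q^n)$ be a Slepian--Wolf decoder. Grouping the induced random pmf as $P(q^n,m_{[0:2]},f_{[0:2]})\,P(u_{[0:2]}^n\mid q^n,m_{[0:2]},f_{[0:2]})\,p(x^n\mid u_{[0:2]}^n,q^n)\,p(y_1^n,y_2^n,z^n\mid x^n)\prod_{j}P^{SW}_j$ exposes the reverse encoder $P(u_{[0:2]}^n\mid q^n,m_{[0:2]},f_{[0:2]})$. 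In \emph{Protocol~B} the transmitter is given $(q^n,f_0,f_1,f_2)$ as shared randomness and independent uniform messages $M_{[0:2]}$; it forms $u_{[0:2]}^n$ through that reverse encoder, sends $x^n\sim\prod_i p(x_i\mid u_{0,i},u_{1,i},u_{2,i},q_i)$; receiver $j$ applies $P^{SW}_j$ and declares the message-bins $(\hat m_{0,j},\hat m_j)$ of $(\hat u_0^n,\hat u_j^n)$; the wiretapper observes $(z^n,f_0,f_1,f_2,q^n)$.

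For \emph{equivalence} of the two induced pmfs it suffices, by Theorem~\ref{thm:re} with non-binned source $Q$ and each $U_t$ binned at total rate $R_t+\tilde{R}_t$, that $\sum_{t\in\ms}(R_t+\tilde{R}_t)<H(U_{\ms}\mid Q)$ for all $\ms\subseteq\{0,1,2\}$; this gives $P(q^n,m_{[0:2]},f_{[0:2]})\apx{}p(q^n)\prod_t p^U(m_t)p^U(f_t)$, hence $\hat P\apx{}P$ on the variables of interest by Lemma~\ref{le:total}. For \emph{reliability} at receiver $j$, Lemma~\ref{le:sw} applied to recovering $(U_0^n,U_j^n)$ from $(Y_j^n,f_0,f_j,q^n)$ requires $\tilde{R}_0>H(U_0\mid U_jY_jQ)$, $\tilde{R}_j>H(U_j\mid U_0Y_jQ)$ and $\tilde{R}_0+\tilde{R}_j>H(U_0U_j\mid Y_jQ)$; since $U_0^n$ must be recovered at \emph{both} receivers, $\tilde{R}_0$ is squeezed from two sides, which is what eventually produces the $\min\{I(U_0;Y_1|Q),I(U_0;Y_2|Q)\}$ term. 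For \emph{secrecy} we use Theorem~\ref{thm:re} (and Corollary~\ref{cor:OSRB}, or the general form in Appendix~\ref{apx:osrb}) once more, now making $(M_0,M_1,M_2)$ nearly uniform and independent of the wiretapper's complete view $(F_0,F_1,F_2,Z^n,Q^n)$ --- which immediately yields $\tv{p(m_{[0:2]},z^n)-p^U(m_{[0:2]})p(z^n)}\to 0$ --- under constraints of the type $\big(\sum_{t\in\ms}R_t\big)+(\text{a sum of helper rates})<H(\cdot\mid Z,Q)$. Composing these approximations through Lemma~\ref{le:total} transfers reliability and secrecy to $\hat P$, and Part~3 (finding a fixed binning and fixed instances of $(q^n,f_0,f_1,f_2)$ preserving all the approximate equalities, distortion-free as in the earlier examples) produces a code with stochastic encoder $p(x^n\mid m_{[0:2]})$ and decoders the two $P^{SW}_j$ composed with the bin maps. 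Fourier--Motzkin elimination of $\tilde{R}_0,\tilde{R}_1,\tilde{R}_2\ge0$ from all the above inequalities, followed by a convex hull over $p(q,u_{[0:2]},x)$, is then expected to reproduce \eqref{eq:bcsec}, with redundancy of the $\tilde{R}_t\ge0$ inequalities handled as in Remark~\ref{re:fme}.

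The routine part is the OSRB bookkeeping, which parallels the channel-coding, wiretap-channel and distributed-channel-synthesis proofs already carried out in this section. The main obstacle is the elimination step together with pinning down the exact form of the secrecy constraints: one must verify that the ``closeness'' inequalities needed merely for protocol equivalence ($H(\cdot\mid Q)$) and those needed for secrecy ($H(\cdot\mid Z,Q)$) combine under Fourier--Motzkin elimination so that the mutual-covering correction appears as $-I(U_1;U_2\mid U_0,Q)$ in the $R_0+R_1+R_2$ bound but as $-I(U_1;U_2\mid U_0,Z,Q)$ in the $2R_0+R_1+R_2$ bound, and that the case analysis over which Slepian--Wolf inequality is tight at each receiver collapses to the single $\min\{I(U_0;Y_1|Q),I(U_0;Y_2|Q)\}$ term and to the two bounds $R_0+R_j<I(U_0U_j;Y_j|Q)-I(U_0U_j;Z|Q)$.
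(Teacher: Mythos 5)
Your overall OSRB scaffolding (two protocols, equivalence via Theorem~\ref{thm:re}, reliability via Lemma~\ref{le:sw}, secrecy via one more application of Theorem~\ref{thm:re}, then elimination of the shared randomness) matches the paper, but your random-binning structure does not, and this is a genuine gap rather than Fourier--Motzkin bookkeeping. You bin each $u_t^n$, $t\in\{0,1,2\}$, \emph{separately} into $(m_t,f_t)$; the paper bins $u_0^n$ into $(m_0,f_0)$ and the \emph{nested pairs} $(u_0^n,u_j^n)$ into $(m_j,f_j)$ --- this nested binning is the OSRB analogue of superposition coding (cf.\ Remark~4 and the discussion of Muramatsu's work in the introduction), and it is what lets part of the private message $M_j$ ride on $U_0^n$. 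With your separate binning, Theorem~\ref{thm:re} (applied with $X_t=U_t$) forces the extra per-index constraints $R_j+\tR_j<H(U_j|Z,Q)$, and your Slepian--Wolf decoders require the extra lower bound $\tR_0>H(U_0|U_jY_jQ)$; eliminating the helper rates then yields additional necessary conditions such as $R_j<I(U_j;U_0Y_j|Q)-I(U_j;Z|Q)$ which are not implied by \eqref{eq:bcsec}. Concretely, take $Q$, $Z$, $U_2$ constant, $U_0$ independent of $U_1$, and $R_0\to 0$: \eqref{eq:bcsec} permits $R_1$ up to $\min\{I(U_0;Y_1),I(U_0;Y_2)\}+I(U_1;Y_1|U_0)$, whereas your constraints cap $R_1$ at $I(U_1;Y_1|U_0)$, because $M_1$ is a bin of $U_1^n$ alone. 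Hence, for a fixed $p(q,u_{[0:2]},x)$, your elimination step cannot ``reproduce \eqref{eq:bcsec}''; it gives a strictly smaller region in general.

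The fix is exactly the paper's construction: assign $(m_j,f_j)$ to the pair $(u_0^n,u_j^n)$, so that Theorem~\ref{thm:re} is invoked with $X_1=U_0$, $X_2=(U_0,U_1)$, $X_3=(U_0,U_2)$ (giving \eqref{eq:Ssec} and, with $Z$ in the conditioning, \eqref{eq:S2sec}) and Lemma~\ref{le:sw} with $X_1=U_0$, $X_2=(U_0,U_j)$, $Z=Y_j$ (giving \eqref{eq:S13}, which has no lower bound on $\tR_0$ alone). FME of \eqref{eq:S13} and \eqref{eq:S2sec} then produces \eqref{eq:bcsec}; in particular the $2R_0+R_1+R_2$ bound with its $-I(U_1;U_2|U_0,Z,Q)$ term arises from combining the single-index secrecy constraint $R_0+\tR_0<H(U_0|Z)$ with the three-index secrecy constraint and the two Slepian--Wolf sum constraints --- a combination that is unavailable in the form you need under your binning. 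The remaining elements of your outline (handling $Q$ by conditioning, secrecy ``for free'' by adding $Z$ to the conditioning of the OSRB constraints, and removing $F_{[0:2]}$ via a good fixed binning and instance) are consistent with the paper.
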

\begin{proof} For simplicity we consider the case where the time-sharing r.v. $Q$ is a constant random variable. One can incorporate this into our proof by generating its i.i.d.\ copies, and sharing it among all parties and conditioning everything on it.

Take some arbitrary $p(u_{[0:2]},x)p(y_1,y_2,z|x)$. 

\emph{Part (1) of the proof:}
We define two protocols each of which induces a joint distribution on random variables that are defined during the protocol.

\emph{Protocol A. }
Let $(U_{[0:2]}^n,X^n,Y_1^n,Y_2^n,Z^n)$ be i.i.d.\ and distributed according to $p(u_{[0:2]},x,y_1,y_2,z)$. Fig. \ref{fig:marton} illustrates how the source coding side of problem can be used to prove the main problem, for the case of original broadcast channel without common message.

\underline{Random Binning}: Consider the following random binning:
\begin{itemize}
\item To each $u_0^n$ assign uniformly and independently two random bin indices $m_0\in[1:2^{nR_0}]$ and $f_0\in[1:2^{n\tR_0}]$,
\item For $j=1,2$, to each pair $(u^n_0,u_j^n)$ assign uniformly and independently two random bin indices $m_j\in[1:2^{nR_j}]$ and $f_j\in[1:2^{n\tR_j}]$.
\item We use a Slepian-Wolf decoder to recover $(\hat{u}_{0,1}^n,\hat{u}_1^n)$ from $(y_1^n,f_0,f_1)$, and another Slepian-Wolf decoder to recover $\hat{u}_{0,2}^n,\hat{u}_2^n$ from $(y_2^n,f_0,f_2)$.  Note that we denote the two estimates of $u_0^n$ by the two receivers with $\hat{u}_{0,1}^n$ and $\hat{u}_{0,2}^n$. The rate constraints for the success of these decoders will be imposed later, although these decoders can be conceived even when there is no guarantee of success.
\item {Upon obtaining the  pair $(\hat{u}_{0,j}^n,\hat{u}_j^n)$, decoder $j=1,2$ declares the bin indices $\hat{m_{0,j}}=\mathsf{M}_0(\hat{u}_{0,j}^n)$ and $\hat{m}_j=\mathsf{M}_j(\hat{u}_{0,j}^n,\hat{u}_j^n)$ (assigned to $\hat{u}_{0,j}^n$ and $(\hat{u}_{0,j}^n,\hat{u}_j^n)$, respectively) as the estimate of the  pair $(m_0,m_j)$.}

\end{itemize}
The random pmf induced by the random binning, denoted by $P$, can be expressed as follows:
\begin{align}
P(u_{[0:2]}^n, y_1^n,y_2^n,z^n, m_{[0:2]}, f_{[0:2]}, \hat{u}_{0,1}^n,\hat{u}_1^n,\hat{u}_{0,2}^n,\hat{u}_2^n)&=p(u_{[0:2]}^n, y_1^n,y_2^n,z^n)P(f_{[0:2]}|u_{[0:2]}^n)P(m_{[0:2]}|u_{[0:2]}^n)\n&\qquad\times P^{SW}(\hat{u}_{0,1}^n,\hat{u}_1^n|y_1^n,f_0,f_1)
P^{SW}(\hat{u}_{0,2}^n,\hat{u}_2^n|y_2^n,f_0,f_2)
\n
&=P(f_{[0:2]},m_{[0:2]},u_{[0:2]}^n)p(x^n|u_{[0:2]}^n)p(y_{1}^n,y_{2}^n,z^n|x^n)\n&\qquad\times P^{SW}(\hat{u}_{0,1}^n,\hat{u}_1^n|y_1^n,f_0,f_1) P^{SW}(\hat{u}_{0,2}^n,\hat{u}_2^n|y_2^n,f_0,f_2)\n
&=P(f_{[0:2]},m_{[0:2]})P(u_{[0:2]}^n|f_{[0:2]},m_{[0:2]})p(x^n|u_{[0:2]}^n)p(y_{1}^n,y_{2}^n,z^n|x^n)\n&\qquad\times P^{SW}(\hat{u}_{0,1}^n,\hat{u}_1^n|y_1^n,f_0,f_1) P^{SW}(\hat{u}_{0,2}^n,\hat{u}_2^n|y_2^n,f_0,f_2).\label{eq:Spmf0}
\end{align}
We have ignored $\hat{M}$-r.v.s from the pmf at this stage since they are  functions of other random variables. The relation among random variables and random bin assignments for the broadcast channel are depicted in the left diagram of Fig. \ref{fig:marton}, where for simplicity we assume $U_0$ is a constant random variable.

\begin{figure}
\centering\includegraphics[width=\linewidth]{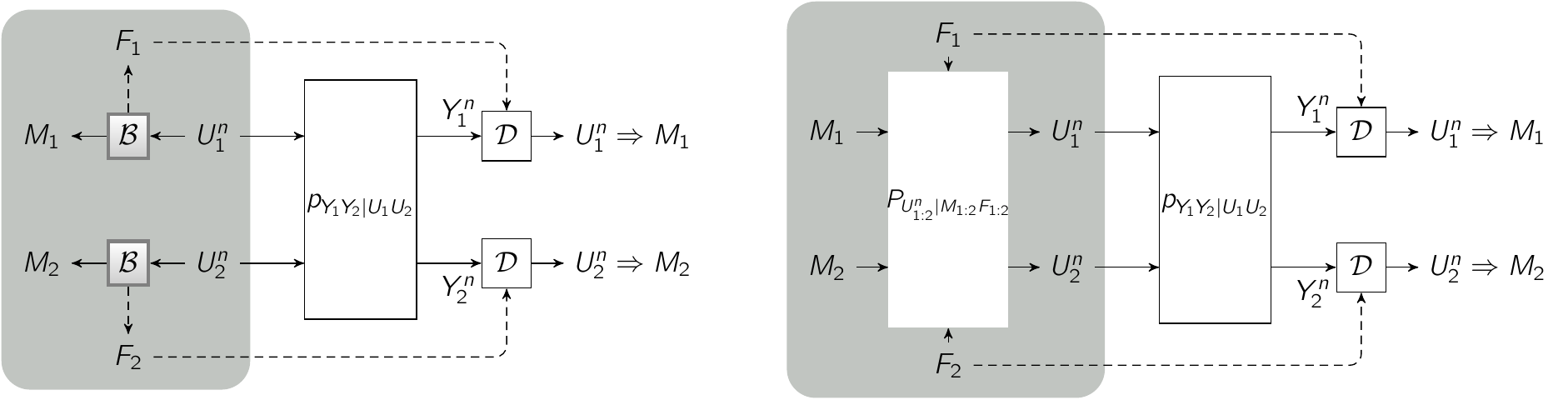}
        \caption{{\small (Left) Source coding side of the broadcast channel (Protocol A). Here $(U_1^n,U_2^n)$ are jointly i.i.d.\ and distributed according to $p_{U_1U_2}$. Then $(U_1^n,U_2^n)$ pass through the virtual DMC $p_{Y_1Y_2|U_1U_2}$ (which is the channel resulted from concatenating two DMC's $p_{X|U_{[1:2]}}$ and $p_{Y_1Y_2|X}$). Thus $(U_{[1:2]}^n,Y_{[1:2]}^n)$ are jointly i.i.d.\  We take the message $M_j$ as a bin index of $U_j^n$. Hence to transmit the message $M_j$ to the decoder $j$, it suffices to describe $U_{j}^n$ for the decoder $j$. We describe $U_j^n$ through a random bin $F_j$ of rate $\tR_j$.  $F_1,F_2$ will serve as the shared randomness. Using SW decoder, we observe that as long as the SW constraints $\tR_j>H(U_j|Y_j), j=1,2$ hold, decoder $j$ can reliably decode $U_{j}^n$ and thus the message $M_j$.  (Right) Coding for the broadcast channel assisted with the shared randomness (Protocol B). Encoder  passes the source $M_{[1:2]}$ and the shared randomness $F_{[1:2]}$ through the reverse encoder to get  sequences $U_1^n$ and $U_2^n$. Similar to the channel coding  problem, one needs to have independence among the shared randomnesses and the messages. This is because at the last step of proof, we must eliminate the shared randomness by conditioning on an instance of it, without disturbing the uniformity of the messages and the independence between them. To get the equivalence between the two protocols, we need to impose constraints implying $M_1\bot M_2\bot F_{[1:2]}$. This is holds if $R_j+\tR_j<H(U_j)$, $j=1,2$ and $R_1+R_2+\tR_1+\tR_2<H(U_{[1:2]})$. The SW constraints and these constraints give the Marton's inner bound without common message.  Adding secrecy is free, one needs only to replace the entropies in these constraints with the conditional entropies $H(U_1|Z),H(U_2|Z)$ and $H(U_{[1:2]}|Z)$ which implies the mutual independence among $M_1,M_2,F_1,F_2$ and $Z^n$, where $Z^n$ is the channel output at the eavesdropper.
   }
        }\label{fig:marton}
\end{figure}

\emph{Protocol B.} In this protocol we assume that the transmitter, the two receivers and the wiretapper have access to the
shared randomness $F_{[0:2]}$ where $F_{[0:2]}$ is uniformly distributed over $[1:2^{n\tR_0}]\times [1:2^{n\tR_1}]\times [1:2^{n\tR_2}]$. Observe that this implies that $F_0$, $F_1$ and $F_2$ are mutually independent.
Then, the protocol proceeds as follows (see also the right diagram of Fig. \ref{fig:marton} demonstrating the protocol B):
\begin{itemize}
\item The messages $M_0$, $M_1$ and $M_2$ are mutually independent of each other and of $F_{[0:2]}$, uniformly distributed over $[1:2^{nR_0}]\times [1:2^{nR_1}]\times [1:2^{nR_2}]$.
\item The transmitter generates $U_{[0:2]}^n$ according to the conditional pmf $P(u_{[0:2]}^n|m_{[0:2]},f_{[0:2]})$ of protocol A.
\item Next, $X^n$ is generated according to the $n$ i.i.d.\ copies of the conditional pmf $p(x|u_{[0:2]})$ (computed from the arbitrary $p(x, u_{[0:2]})$ we chose at the beginning). R.v. $X^n$ is transmitted over the broadcast channel.
\item At the final stage, the receiver $j=1,2$, knowing $(y_j^n,f_{0},f_j)$ uses the Slepian-Wolf decoder \\$P^{SW}(\hat{u}_{0,j}^n,\hat{u}_j^n|y_j^n,f_0,f_j)$ of protocol A to obtain estimates of $u_{0}^n$ and $u_j^n$. We note that while the receiver $j=1,2$ knows $f_0, f_1$ and $f_2$, it uses only $f_{0},f_j$ in its Slepian-Wolf decoder.
\item We use the output of the SW decoder $j=1,2$ for decoding of the messages $(M_0,M_j)$ in the same way of the last step of Protocol A.
\end{itemize}
The random pmf induced by the protocol, denoted by $\hat{P}$, factors as
\begin{align}
\hat{P}(u_{[0:2]}^n, y_1^n,y_2^n,z^n, m_{[0:2]}, f_{[0:2]},& \hat{u}_{0,1}^n,\hat{u}_1^n,\hat{u}_{0,2}^n,\hat{u}_2^n)=p^U(f_{[0:2]})p^U(m_{[0:2]})P(u_{[0:2]}^n|m_{[0:2]},f_{[0:2]})p(x^n|u_{[0:2]}^n)\n&~~~~~ p(y_{[1:2]}^n,z^n|x^n) P^{SW}(\hat{u}_{0,1}^n,\hat{u}_1^n|y_1^n,f_0,f_1)P^{SW}(\hat{u}_{0,2}^n,\hat{u}_2^n|y_2^n,f_0,f_2).\label{eq:Spmf2}
\end{align}
We have ignored $\hat{M}$-r.v.s from the pmf at this stage since they are (random) functions of other random variables.

\emph{Part (2a) of the proof: Sufficient conditions that make the induced pmfs approximately the same}: To find the constraints that imply that the pmf $\hat{P}$ is close to the pmf $P$ in total variation distance,
we start with $P$ and make it close to $\hat{P}$ in a few steps. The first step is to observe that in protocol A, $(m_0,f_0)$ is a bin index of $u_0^n$, $(m_1,f_1)$ is a bin index of $(u_0^n,u_1^n)$ and $(m_2,f_2)$ is a bin index of $(u_0^n, u_2^n)$. Substituting $X_1=U_0, X_2=U_0U_1, X_3=U_0U_2, Z=constant$ in Theorem \ref{thm:re} implies that $M_{[0:2]}$ is nearly independent of $F_{[0:2]}$ if\footnote{{Theorem \ref{thm:re} gives \emph{seven} inequalities. However it can be easily seen that the inequalities associated to the subsets of $\{2,3\}$ are redundant and implied by others. For example, the inequality associated to the subset $\{2\}$ is $\tR_1+R_1<H(U_0U_1)$, which is weaker than the second inequality in \eqref{eq:Ssec}.}}
\begin{align}
R_0+\tR_0&<H(U_0),\n
R_0+R_j+\tR_0+\tR_j&<H(U_0U_j)\ ,j=1,2,\n
R_0+R_1+R_2+\tR_0+\tR_1+\tR_2&<H(U_{[0:2]}).\label{eq:Ssec}
\end{align}
In other words, the above constraints yields $P(f_{[0:2]},m_{[0:2]})\apx{}p^U(f_{[0:2]})p^U(m_{[0:2]})=\hat{P}(f_{[0:2]},m_{[0:2]})$.
Equations \eqref{eq:Spmf0} and \eqref{eq:Spmf2} imply
\begin{align}
\hat{P}(u_{[0:2]}^n, y_1^n,y_2^n,z^n, m_{[0:2]}, f_{[0:2]}, \hat{u}_{0,1}^n,\hat{u}_1^n,\hat{u}_{0,2}^n,\hat{u}_2^n)
&\apx{}
P(u_{[0:2]}^n, y_1^n,y_2^n,z^n, m_{[0:2]}, f_{[0:2]}, \hat{u}_{0,1}^n,\hat{u}_1^n,\hat{u}_{0,2}^n,\hat{u}_2^n).\label{eqn:See1}
\end{align}

\emph{Part (2b) of the proof: Sufficient conditions that make the Slepian-Wolf decoder succeed}: The next step is 
{to see that when the Slepian-Wolf decoder $j, j=1,2$ of protocol A can reliably decode the  pair $(U_0^n,U_j^n)$.}
 Lemma \ref{le:sw} for $X_1=U_0, X_2=U_0U_j, Z=Y_j$ yields that the decoding of $U_0^nU_j^n$ is reliable if,
\begin{align}
\tR_0+\tR_j&>H(U_0U_j|Y_j),\n
\tR_j&>H(U_j|U_0Y_j)~~\mbox{for }j=1,2.\label{eq:S13}
\end{align}
It yields
\begin{align}
P(u_{[0:2]}^n, y_1^n,y_2^n,z^n, m_{[0:2]}, f_{[0:2]}, \hat{u}_{0,1}^n,\hat{u}_1^n,\hat{u}_{0,2}^n,\hat{u}_2^n)&\apx{}P(u_{[0:2]}^n, y_1^n,y_2^n,z^n, m_{[0:2]}, f_{[0:2]})\n&\qquad\times\ind\{\hat{u}_{0,1}^n=\hat{u}_{0,2}^n=u_0^n, \hat{u}_1^n=u_1^n, \hat{u}_2^n=u_2^n\}.\label{eq:Spmf1.5}
\end{align}

Using equations \eqref{eqn:See1}, \eqref{eq:Spmf1.5} and the triangle inequality we have
\begin{align}
\hat{P}(u_{[0:2]}^n, y_1^n,y_2^n,z^n, m_{[0:2]}, f_{[0:2]}, \hat{u}_{0,1}^n,\hat{u}_1^n,\hat{u}_{0,2}^n,\hat{u}_2^n)&\apx{}P(u_{[0:2]}^n, y_1^n,y_2^n,z^n, m_{[0:2]}, f_{[0:2]})\n&\qquad\times\ind\{\hat{u}_{0,1}^n=\hat{u}_{0,2}^n=u_0^n, \hat{u}_1^n=u_1^n, \hat{u}_2^n=u_2^n\}.\label{eq:Spmf1}
\end{align}
Using the first part of Lemma \ref{le:total} we have
\begin{align}
\hat{P}(u_{[0:2]}^n, z^n, m_{[0:2]}, f_{[0:2]}, \hat{u}_{0,1}^n,\hat{u}_1^n,\hat{u}_{0,2}^n,\hat{u}_2^n)&\apx{}P(u_{[0:2]}^n, z^n, m_{[0:2]}, f_{[0:2]})\n&\qquad\times\ind\{\hat{u}_{0,1}^n=\hat{u}_{0,2}^n=u_0^n, \hat{u}_1^n=u_1^n, \hat{u}_2^n=u_2^n\}.\label{eq:Spmf1.75}
\end{align}
Using part one of lemma \ref{le:total}, we can introduce $(\hat{m}_{0,1},\hat{m}_{0,2},\hat{m}_1,\hat{m}_2)$ in the above equation, because these random variables are functions of other random variables.
\begin{align}
\hat{P}(m_{[0:2]},f_{[0:2]},u_{[0:2]}^n,z^n,\hat{u}_{0,1}^n,\hat{u}_1^n,&\hat{u}_{0,2}^n,\hat{u}_2^n,\hat{m}_{0,1},\hat{m}_1,\hat{m}_{0,2},\hat{m})\apx{}P(m_{[0:2]},f_{[0:2]},u_{[0:2]}^n,z^n)\n
&~~~\times\ind\{\hat{u}_{0,1}^n=\hat{u}_{0,2}^n=u_0^n, \hat{u}_1^n=u_1^n, \hat{u}_2^n=u_2^n\}
\n&\times\ind\{\mathsf{M}_0(\hat{u}_{0,1}^n)=\hat{m}_{0,1},\mathsf{M}_0(\hat{u}_{0,2}^n)=\hat{m}_{0,2}, \mathsf{M}_1(\hat{u}^n_{0,1},\hat{u}_1^n)=\hat{m}_1,\mathsf{M}_2(\hat{u}^n_{0,2},\hat{u}_2^n)=\hat{m}_2\},\n
&=P(m_{[0:2]},f_{[0:2]},u_{[0:2]}^n,z^n)\ind\{\hat{u}_{0,1}^n=\hat{u}_{0,2}^n=u_0^n, \hat{u}_1^n=u_1^n, \hat{u}_2^n=u_2^n\}
\n&~\times\ind\{\hat{m}_{0,1}=\hat{m}_{0,2}=m_0, \hat{m}_1=m_1,\hat{m}_2=m_2\},
\label{eq:sbc-pmf01}
\end{align}
where we use $\mathsf{M}_0(u_0^n)$, $\mathsf{M}_1(u_0^n,u_1^n)$ and $\mathsf{M}_2(u_0^n,u_2^n)$ to denote the bins assigned to $u_0^n$, $(u_0^n,u_1^n)$ and $(u_0^n,u_2^n)$,  respectively. It can can be easily seen  that the marginal pmf of the RHS of \eqref{eq:sbc-pmf01} for the random variables \\$(M_{[0:2]},F_{[0:2]},Z^n,\hat{M}_{0,1},\hat{M}_{0,2},\hat{M}_1,\hat{M}_2)$ factorizes as $P(m_{[0:2]},f_{[0:2]},z^n)\ind\{\hat{m}_{0,1}=\hat{m}_{0,2}=m_0, \hat{m}_1=m_1,\hat{m}_2=m_2\}$. Using \eqref{eq:sbc-pmf01} and the first part of Lemma \ref{le:total}, we get

\be
 \hat{P}(m_{[0:2]},f_{[0:2]},z^n,\hat{m}_{0,1},\hat{m}_1,\hat{m}_{0,2},\hat{m}_2)\apx{}P(m_{[0:2]},f_{[0:2]},z^n)\ind\{\hat{m}_{0,1}=\hat{m}_{0,2}=m_0, \hat{m}_1=m_1,\hat{m}_2=m_2\}.\label{eq:sbc-pmf1}
\ee

{Before we consider the secrecy part of problem, we assume that there is no eavesdropper, i.e. $Z=constant$. So we deal with the broadcast channel. It can be easily seen that the constraints \eqref{eq:Ssec} and \eqref{eq:S13} imply Marton's inner bound for the broadcast channel. In the sequel, we show how one can find the extension of Marton's inner bound for wiretap broadcast channel, for free!}

\emph{Part (2c) of the proof: Sufficient conditions that make the protocols secure}: We must take care of independence of $M_{[0:2]}$, and $(Z^n,F_{[0:2]})$ consisting of the the wiretapper's output and the shared randomness.  Consider the random variables of protocol A. Substituting $X_1=U_0, X_2=U_0U_1,X_3=U_0U_2, Z=Z$ in Theorem \ref{thm:re} implies that $M_{[0:2]}$ is nearly independent of $(Z^n,F_{[0:2]})$ if
\begin{align}
R_0+\tR_0&<H(U_0|Z),\n
R_0+R_j+\tR_0+\tR_j&<H(U_0U_j|Z)\ ,j=1,2,\n
R_0+R_1+R_2+\tR_0+\tR_1+\tR_2&<H(U_{[0:2]}|Z).\label{eq:S2sec}
\end{align}
In other words, the above constraints imply that
 \begin{align}P(z^n,f_{[0:2]},m_{[0:2]})\apx{}p(z^n)p^U(f_{[0:2]})p^U(m_{[0:2]}).
 \label{eq:Spmf1.875}\end{align}

 Observe that the pmf $P(z^n)$ is equal to i.i.d.\ pmf $p(z^n)$ in Protocol A. Also  observe that  \eqref{eq:S2sec} is the same as \eqref{eq:Ssec} with one exception, we have conditioned all entropies on the $Z$ due to secrecy requirement. Thus we get secrecy for free.

Using equations \eqref{eq:sbc-pmf1} and \eqref{eq:Spmf1.875} and the third part of Lemma \ref{le:total} we have
\begin{align}
 \hat{P}(m_{[0:2]},f_{[0:2]},z^n,\hat{m}_{0,1},\hat{m}_1,\hat{m}_{0,2},\hat{m}_2)\apx{}p(z^n)p^U(f_{[0:2]})p^U(m_{[0:2]})\ind\{\hat{m}_{0,1}=\hat{m}_{0,2}=m_0, \hat{m}_1=m_1,\hat{m}_2=m_2\}.\label{eq:Spmf1.9}
\end{align}

\emph{Part (3) of the proof: Eliminating the shared randomness {$F_{[0:2]}$ without disturbing the secrecy and reliability requirements}: }
Using Definition \ref{def:1}, equation \eqref{eq:Spmf1.9} guarantees existence of  a fixed binning with the corresponding pmf $p$ such that if we replace $P$ with $p$ in \eqref{eq:Spmf2} and denote the resulting pmf with $\hat{p}$, then
 \begin{align*}
 \hat{p}(m_{[0:2]},f_{[0:2]},z^n,\hat{m}_{0,1},\hat{m}_1,\hat{m}_{0,2},\hat{m}_2)&\apx{}p(z^n)p^U(f_{[0:2]})p^U(m_{[0:2]})
\ind\{\hat{m}_{0,1}=\hat{m}_{0,2}=m_0, \hat{m}_1=m_1,\hat{m}_2=m_2\}.
 \end{align*}
 Now, the second part of Lemma \ref{le:total} shows that there exists an instance $f_{[0:2]}$ such that
 \begin{align*}
 \hat{p}(m_{[0:2]},z^n,\hat{m}_{0,1},\hat{m}_1,\hat{m}_{0,2},\hat{m}_2|f_{[0:2]})&
 \apx{}p(z^n)p^U(m_{[0:2]})
\ind\{\hat{m}_{0,1}=\hat{m}_{0,2}=m_0, \hat{m}_1=m_1,\hat{m}_2=m_2\}.
\end{align*}
{ 
This approximation yields both the secrecy and the reliability requirements as follows:}
\begin{itemize}
\item \emph{Reliability:} Using the second item in part 1 of Lemma \ref{le:total} we conclude that
$$
\hat{p}(m_0,m_j, \hat{m}_{0,j},\hat{m}_j|f_{[0:2]})\apx{}\ind\{\hat{m}_{0,j}=m_0, \hat{m}_j=m_j\},
$$ which is equivalent to $\hat{p}\left((\hat{M}_{0,j}\hat{M}_j)\neq(M_{0},M_j)|f_{[0:2]}\right)\rightarrow 0$. 
\item \emph{Secrecy:} Using the second item in part 1 of Lemma \ref{le:total} we conclude that
 $\hat{p}(z^n, m_{[0:2]}| f_{[0:2]})\apx{}p^U(m_{[0:2]})p(z^n)$.
\end{itemize}

Finally, identifying $p(x^n|m_{[0:2]},f_{[0:2]})$ (which is done via generating $u_{[0:2]}$ first) as the encoder and the Slepian-Wolf decoders results in reliable and secure encoder-decoders.

Applying FME on \eqref{eq:Ssec}, \eqref{eq:S13} and \eqref{eq:S2sec} gives \eqref{eq:bcsec}. Note that the equations of \eqref{eq:Ssec} are completely redundant.
\end{proof}

\begin{remark}
Although in the OSRB-based proof of Theorem \ref{thm:marton}, we did not deal with codebook explicitly but one can find similarity between it and codebook generation for superposition coding and Marton coding as follows.
\begin{itemize}
\item (Superposition) To focus on superposition coding, let $U_2$ be a constant random variable. Inspecting the proof, it is seen that we used a binning for $U_0^n$ and a joint binning for $(U_0^n,U_1^n)$. Observe that conditioned on an instance $(F_0,F_1)=(f_0,f_1)$ of the shared randomness, the inputs of encoder are restricted to those sequence assigned to $(f_0,f_1)$. Thus one can interpret these sequences as codewords of a codebook (although the encoder is not deterministic and there is not a one-to-one map between the messages and the codewords.). Conditioned on $f_0$, we get a codebook $\cc_0$ of sequences $u_0^n$ assigned to $f_0$. Also conditioned on $(f_0,f_1)$, for each $u_0^n\in\cc_0$ we get a codebook $\cc_1(u_0^n)$ of sequences $u_1^n$ which together with $u_0^n$ are \emph{jointly assigned}   to $f_1$. This resembles the superposition codebook generation which uses an inner codebook $\cc_0$ and a set of outer (superimposed) codebook for each sequence of the inner codebook.
\item (Marton)  To focus on Marton coding without common message, let $U_0$ be a constant random variable. Inspecting the proof, it is seen that we used separate random bin assignments for $U_1^n$ and $U_2^n$. Again observe that conditioned on an instance $(F_1,F_2)=(f_1,f_2)$ of the shared randomness, the inputs of encoder are restricted to those sequence assigned to $(f_1,f_2)$. Thus one can interpret these sequences as codewords of a codebook. Conditioned on $f_j, j=1,2$, we get a codebook $\cc_j$ of sequences $u_j^n$ assigned to $f_j$. We note that contrary to the case of superposition, the codebooks $\cc_2$ is not related to $\cc_1$. This resembles the independent  codebook generation of Marton coding.
\end{itemize}
In general, whenever we require superposition coding, we use binning for nested random variables while whenever we require independent codebook generation, we use separate binning.
\end{remark}

\begin{remark}
If one only wants to securely transmit a common message $M_0$ as in \cite{chia} (i.e. $R_1=R_2=0$), 
the second inequality of the \eqref{eq:S13} can be neglected, because it is sufficient to recover $M_0$ through only $U_0^n$; it is not necessary to make sure that we do not make any error in decoding $U_1^n$ and $U_2^n$ {(the efficacy of using $U_1$ and $U_2$ without decoding them has been clarified in  \cite{chia}).} This resembles the idea of indirect decoding of Nair and El Gamal \cite{nair}. Applying FME gives
the following lower bound on $R_0$ which subsumes the lower bound given in \cite{chia} under weak secrecy criterion:
\begin{align}
R_0=\max_{p_{QU_{[0:2]}X}}\min\Big\{&I(U_0U_1;Y_1|Q)-I(U_0U_1;Z|Q),I(U_0U_2;Y_2|Q)-I(U_0U_2;Z|Q),\n&\frac{1}{2}\big(I(U_0U_1;Y_1|Q)-I(U_0U_1;Z|Q)+I(U_0U_2;Y_2|Q)-I(U_0U_2;Z|Q)-I(U_1;U_2|Q,U_0,Z)\big)\Big\}\nonumber
\end{align}
We now compare our lower bound with the one given in \cite{chia}. The lower bound given in \cite{chia} is the maximum of
\begin{align}
\min\Big\{&I(U_0U_1;Y_1|Q)-I(U_0U_1;Z|Q),I(U_0U_2;Y_2|Q)-I(U_0U_2;Z|Q)\big)\Big\}\nonumber
\end{align}
over all $p(q,u_0)p(u_1,u_2,x|u_{0})p(y_1,y_2,z|x)$ where $I(U_1, U_2; Z|U_0) \leq I(U_1; Z|U_0)+I(U_2; Z|U_0)- I(U_1; U_2|U_0)$.
We first note that because of the Markov chain $Q-U_0-U_1U_2XZY_1Y_2$ the above constraint is equivalent with
$I(U_1, U_2; Z|Q, U_0) \leq I(U_1; Z|Q, U_0)+I(U_2; Z|Q, U_0)-I(U_1; U_2|Q, U_0)$. Algebraic manipulation shows that this constraint holds only when $I(U_1;U_2|Q,U_0,Z)=0$. We note that the Markov constraint $Q-U_0-U_1U_2X$ can be dropped. This is because given any $(Q,U_0,U_1,U_2,X)$ where the Markov chain does not hold we can replace $U_0$ with $(U_0,Q)$ and find a new set of random variables where the Markov chain holds and the inner bound expression remains unchanged. To sum this up, the lower bound of \cite{chia} can be rewritten as the maximum of
\begin{align}
\min\Big\{&I(U_0U_1;Y_1|Q)-I(U_0U_1;Z|Q),I(U_0U_2;Y_2|Q)-I(U_0U_2;Z|Q)\big)\Big\}\nonumber
\end{align}
over all $p(q,u_0,u_1,u_2,x)$ where $I(U_1; U_2| Q, U_0, Z)=0$. It is clear that when $I(U_1; U_2| Q, U_0, Z)=0$ our lower bound reduces to this lower bound. In general it may be higher because we are taking the maximum over \emph{all} $p(q,u_{[0:2]},x)$ without any constraints.

\end{remark}

\subsection{Distributed lossy compression}\label{sub:BT}
\emph{Problem definition:} Consider the problem of distributed lossy compression of two correlated sources $X_1$ and $X_2$ source within  desired distortions $D_1$ and $D_2$. In this setting, there are two correlated i.i.d.\ sources $X_1^n$  and $X_2^n$, distributed according to $p(x_1,x_2)$, two (stochastic) encoders mapping $\mx_j^n$ to $M_j\in[1:2^{nR_j}]$ ($j=1,2$), a decoder that reconstructs  lossy versions of $X_j^n,j=1,2$ (namely $\hat{X}_j^n,j=1,2$) and two bounded distortion measures $d_j:\mx_j\times\hat{\mx}_j\rightarrow [0,d_{j,max}]$. A rate pair $(R_1,R_2)$ is said to be achievable at the distortions $(D_1,D_2)$, if $\mathbb{E}(d_j(X_j^n,\hat{X}_j^n))\le D_j+\epsilon_n$, where $\epsilon_n\rightarrow 0$.
\\

\emph{Statement:} Here we wish to reprove the known Berger-Tung inner bound for this problem.

\begin{theorem}[Berger-Tung inner bound]
A rate pair $(R_1,R_2)$ is achievable with distortions $D_1$ and $D_2$ if there exist conditional pmf's $p(u_1|x_1)$ and $p(u_2|x_2)$, and two decoding functions $\hat{x}_1(u_1,u_2)$ and $\hat{x}_2(u_1,u_2)$  such that $\e d_j(X_j,\hat{X}_j)\le D_j$, $j=1,2$ and the following inequalities hold:
\be\label{eq:BT00}
\begin{split}
R_1&>I(X_1;U_1|U_2),\\
R_2&>I(X_2;U_2|U_1),\\
R_1+R_2&>I(X_1X_2;U_1U_2).
\end{split}
\ee
\end{theorem}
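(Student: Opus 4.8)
The plan is to run the three-part OSRB template of Section~\ref{s:2}, with the one structural choice dictated by the \emph{distributed} nature of the problem: we bin $U_1^n$ and $U_2^n$ by \emph{separate} (non-nested) random binnings. In Part~(1), Protocol~A takes $(X_1^n,X_2^n,U_1^n,U_2^n)$ i.i.d.\ according to $p(x_1,x_2)p(u_1|x_1)p(u_2|x_2)$; to each $u_1^n$ we assign independent bin indices $m_1\in[1:2^{nR_1}]$ and $f_1\in[1:2^{n\tR_1}]$, to each $u_2^n$ we assign $m_2\in[1:2^{nR_2}]$ and $f_2\in[1:2^{n\tR_2}]$, we apply a Slepian--Wolf decoder $P^{SW}(\hat u_1^n,\hat u_2^n|m_1,f_1,m_2,f_2)$, and we set $\hat x_{j,i}=\hat x_j(\hat u_{1,i},\hat u_{2,i})$ letter by letter. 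The induced random pmf decomposes as
\[
P=p(x_1^n,x_2^n)\,\big[P(f_1|x_1^n)P(u_1^n|x_1^n,f_1)P(m_1|u_1^n)\big]\,\big[P(f_2|x_2^n)P(u_2^n|x_2^n,f_2)P(m_2|u_2^n)\big]\,P^{SW}(\cdots),
\]
the key point being that, since $U_1^n-X_1^n-(X_2^n,U_2^n)$ is Markov, the side-$j$ reverse encoder $P(u_j^n|x_j^n,f_j)$ depends on $x_j^n$ alone. Protocol~B is then the genuine distributed scheme assisted by a shared randomness $(F_1,F_2)$ uniform on $[1:2^{n\tR_1}]\times[1:2^{n\tR_2}]$: encoder $j$ observes $(x_j^n,f_j)$, draws $u_j^n\sim P(u_j^n|x_j^n,f_j)$, and sends $m_j=\mathsf{M}_j(u_j^n)$; the decoder runs $P^{SW}$ and outputs $\hat x_j^n$ letterwise. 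Its pmf $\hat P$ is obtained from the displayed factorization of $P$ by replacing each $P(f_j|x_j^n)$ with $p^U(f_j)$.

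For Part~(2a) I would substitute $X_1\leftarrow U_1$, $X_2\leftarrow U_2$, $Z\leftarrow(X_1,X_2)$ into Theorem~\ref{thm:re}, applied to the marginal binning $(F_1,F_2)$ of $(U_1^n,U_2^n)$; using $U_1-X_1-X_2$ and $U_2-X_2-X_1$ the three inequalities collapse to $\tR_1<H(U_1|X_1)$ and $\tR_2<H(U_2|X_2)$ (the constraint for $\ms=\{1,2\}$ is redundant), which gives $P(x_1^n,x_2^n,f_1,f_2)\apx{}p^U(f_1)p^U(f_2)p(x_1^n,x_2^n)$ and hence $\hat P\apx{}P$ by Lemma~\ref{le:total}(1). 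For Part~(2b) I would apply Lemma~\ref{le:sw} to the two sources $U_1,U_2$ with side information $Z$ constant and with effective binning rates $R_1+\tR_1,\ R_2+\tR_2$ (the pair $(m_j,f_j)$ is a uniform index into a set of size $2^{n(R_j+\tR_j)}$), obtaining $R_j+\tR_j>H(U_j|U_{3-j})$, $j=1,2$, and $R_1+\tR_1+R_2+\tR_2>H(U_1U_2)$; under these the Slepian--Wolf decoder recovers $(U_1^n,U_2^n)$, hence $(\hat X_1^n,\hat X_2^n)$, with vanishing error probability, so that, combining with (2a) and marginalizing, $\hat P(x_1^n,x_2^n,\hat x_1^n,\hat x_2^n,f_1,f_2)\apx{}p^U(f_1)p^U(f_2)\,p(x_1^n,x_2^n,\hat x_1^n,\hat x_2^n)$, the last factor being the i.i.d.\ law of the sources and their reconstructions under $p(x_1,x_2)p(u_1|x_1)p(u_2|x_2)$, for which $\e d_j(X_j^n,\hat X_j^n)\le D_j$.

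For Part~(3), I would, exactly as in the lossy source coding proof, first fix a good binning, then use Lemma~\ref{le:total}(2) to select an instance $(f_1,f_2)$ for which $\hat p(x_1^n,x_2^n,\hat x_1^n,\hat x_2^n\,|\,f_1,f_2)$ is still close to that i.i.d.\ law; since the \emph{same} pair $(f_1,f_2)$ keeps the whole joint close, Lemma~\ref{le:distortion} applied to each marginal $\hat p(x_j^n,\hat x_j^n|f_1,f_2)$ yields $\e[d_j(X_j^n,\hat X_j^n)\,|\,f_1,f_2]\le D_j+\epsilon_n$ for $j=1,2$ simultaneously. Identifying $p(m_j|x_j^n,f_j)$ (realized by generating $u_j^n$) as encoder $j$ and $P^{SW}$ followed by $\hat x_j(\cdot,\cdot)$ as the decoder completes the code. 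Finally, eliminating $\tR_1,\tR_2\ge0$ by Fourier--Motzkin from $\{\tR_j<H(U_j|X_j),\ R_j+\tR_j>H(U_j|U_{3-j}),\ R_1+\tR_1+R_2+\tR_2>H(U_1U_2)\}$ reproduces \eqref{eq:BT00}, using $I(X_1;U_1|U_2)=H(U_1|U_2)-H(U_1|X_1)$, $I(X_2;U_2|U_1)=H(U_2|U_1)-H(U_2|X_2)$ and $I(X_1X_2;U_1U_2)=I(X_1;U_1)+I(X_2;U_2|U_1)$, which all hold because of the Markov chains $U_1-X_1-U_2$ and $U_2-X_2-U_1$.

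I expect the only genuine care-point to be the first one: arranging the reverse encoders so that the distributed structure is honored --- encoder $j$ must be able to produce $U_j^n$ from $(X_j^n,F_j)$ alone, which is precisely the Markov chain $U_1^n-X_1^n-(X_2^n,U_2^n)$ --- and therefore using \emph{separate} binnings for $U_1^n$ and $U_2^n$ rather than nested ones; after that, checking that Theorem~\ref{thm:re} and Lemma~\ref{le:sw} feed Fourier--Motzkin exactly the Berger--Tung inequalities, and the two-distortion bookkeeping, are routine and parallel the point-to-point lossy source coding argument.
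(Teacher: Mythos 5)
Your Parts (1), (2a) and (2b) coincide with the paper's proof: the same separate (non-nested) binnings of $U_1^n$ and $U_2^n$, the same substitution $X_1\leftarrow U_1$, $X_2\leftarrow U_2$, $Z\leftarrow(X_1,X_2)$ in Theorem \ref{thm:re} giving $\tR_j<H(U_j|X_j)$, and the same Slepian--Wolf constraints with $(m_j,f_j)$ viewed as one bin of rate $R_j+\tR_j$. The gap is in your Part (3). You claim that one can choose a single pair $(f_1,f_2)$ for which the \emph{conditional} law $\hat{p}(x_{[1:2]}^n,\hx_{[1:2]}^n\,|\,f_{[1:2]})$ is close in total variation to the i.i.d.\ target, invoking Lemma \ref{le:total}(2). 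That invocation needs the joint approximation $\hat{P}(f_{[1:2]},x_{[1:2]}^n,\hx_{[1:2]}^n)\apx{}p^U(f_{[1:2]})\,p(x_{[1:2]}^n,\hx_{[1:2]}^n)$, i.e.\ near-independence of $F_{[1:2]}$ from $(X_{[1:2]}^n,\hX_{[1:2]}^n)$. This does not follow from what you established (Part (2a) only gives independence of $F_{[1:2]}$ from $X_{[1:2]}^n$), and it is false in general: with the canonical Berger--Tung choice $U_j=\hat{X}_j$, the index $F_j$ is a bin of $\hat{X}_j^n$ and is therefore strongly correlated with the reconstruction whenever $\tR_j>0$. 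To force such independence you would need extra constraints of the type used in the channel-synthesis proof (cf.\ \eqref{eq:cst-c3}), namely $\tR_j<H(U_j|X_1X_2\hat{X}_1\hat{X}_2)$; these are incompatible with the choice $\tR_j\approx H(U_j|X_j)$ and, after elimination, would yield a region strictly smaller than \eqref{eq:BT00}.

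What the argument actually delivers is only the \emph{unconditional} marginal closeness $\hat{p}(x_{[1:2]}^n,\hx_{[1:2]}^n)\apx{}\tilde{p}(x_{[1:2]}^n,\hx_{[1:2]}^n)$ with $\tilde{p}$ i.i.d., and the remaining difficulty --- which the paper flags explicitly --- is that with \emph{two} distortion functions the law-of-iterated-expectation trick of the point-to-point lossy proof no longer suffices, since it would produce a good $f$ for each distortion separately rather than one $f$ good for both. The paper closes this by a concentration step: by the WLLN under $\tilde{p}$, the event $\{d_j(x_j^n,\hx_j^n)<D_j,\ j=1,2\}$ has probability at least $1-\delta_n$; by the total-variation bound the same event has $\hat{p}$-probability at least $1-\epsilon_n-\delta_n$; hence some fixed $f_{[1:2]}$ makes its conditional probability at least $1-\epsilon_n-\delta_n$, and boundedness of the $d_j$ then gives $\e_{\hat{p}(\cdot|f_{[1:2]})}[d_j(X_j^n,\hat{X}_j^n)]<D_j+(\epsilon_n+\delta_n)d_{j,max}$ for both $j$ simultaneously. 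Replacing your conditional-TV selection by this concentration argument repairs the proof; the rest of your write-up (including the final rate bookkeeping, which the paper does by setting $\tR_j=H(U_j|X_j)-\epsilon$ rather than by FME) is sound.
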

\begin{proof}
Take some arbitrary $p(x_1,x_2,u_1,u_2)=p(x_1,x_2)p(u_1|x_1)p(u_2|x_2)$ and functions $\hx_1(u_1,u_2)$ and $\hx_2(u_1,u_2)$ such that $\e d_j(X,\hat{X}_j)< D_j$, $j=1,2$.

\emph{Part (1) of the proof:}
We define two protocols each of which induces a joint distribution on random variables that are defined during the protocol.

\emph{Protocol A. }
Let $(X_1^n,X_2^n,U_1^n,U_2^n)$ be i.i.d.\ and distributed according to $p(x_1,x_2,u_1,u_2)$. Fig. \ref{fig:BT} illustrates how the source coding side of problem can be used to prove the main problem.

\underline{Random Binning}: Consider the following random binning:
\begin{itemize}
\item For $j=1,2$, to each sequence $u_j^n$ assign uniformly and independently a random two bin indices $m_j\in[1:2^{nR_j}]$ and $f_j\in[1:2^{n\tR_j}]$,
\item We use a Slepian-Wolf decoder to recover $\hat{u}_1^n,\hat{u}_2^n$ from $(m_1,m_2,f_1,f_2)$. The rate constraints for the success of this decoder will be imposed later, although this decoder can be conceived even when there is no guarantee of success.
\item Random variables $\hx_j^n,j=1,2$ are created as  functions of $(\hat{u}_1^n,\hat{u}_2^n)$ using the two decoding functions given at the beginning.
\end{itemize}

The random pmf induced by the random binning, denoted by $P$, can be expressed as follows:
\begin{align}
P(x_{[1:2]}^n,u_{[1:2]}^n,m_{[1:2]},f_{[1:2]},\hat{u}_{[1:2]}^n)&=p(x_{[1:2]}^n)p(u_1^n|x_1^n)p(u_2^n|x_2^n)P(m_1,f_1|u_1^n)P(m_2,f_2|u_2^n)P^{SW}(\hat{u}^n_{[1:2]}|m_{[1:2]},f_{[1:2]})\n
&=p(x_{[1:2]}^n)P(u_1^n,f_1|x_1^n)P(u_2^n,f_2|x_2^n)P(m_1|u_1^n)P(m_2|u_2^n)P^{SW}(\hat{u}^n_{[1:2]}|m_{[1:2]},f_{[1:2]})\n
&=p(x_{[1:2]}^n)P(u_1^n,f_1|x_1^n)P(u_2^n,f_2|x_2^n)P(m_1|u_1^n)P(m_2|u_2^n)P^{SW}(\hat{u}^n_{[1:2]}|m_{[1:2]},f_{[1:2]})\n
&= p(x_{[1:2]}^n)P(f_1|x_1^n)P(f_2|x_2^n)P(u_1^n|f_1,x_1^n)P(u_2^n|f_2,x_2^n)\n&~~~~~~~~~~~~~~~~~~~~~~~~~\times P(m_1|u_1^n)P(m_2|u_2^n)P^{SW}(\hat{u}^n_{[1:2]}|m_{[1:2]},f_{[1:2]})\label{eq:pmfBT00}\\
&= P(f_{[1:2]},x_{[1:2]}^n)P(u_1^n|f_1,x_1^n)P(u_2^n|f_2,x_2^n)\n&~~~~~~~~~~~~~~~~~~~~~~~~~\times P(m_1|u_1^n)P(m_2|u_2^n)P^{SW}(\hat{u}^n_{[1:2]}|m_{[1:2]},f_{[1:2]}).\label{eq:pmfBT0}
\end{align}
We have ignored $\hx_1^n$ and $\hx_2^n$ from the pmf at this stage since they are functions of other random variables. They will be introduced later.
The relation among random variables and random bin assignments are depicted in the left diagram of Fig. \ref{fig:BT}.

\begin{figure}
\centering\includegraphics[width=\linewidth]{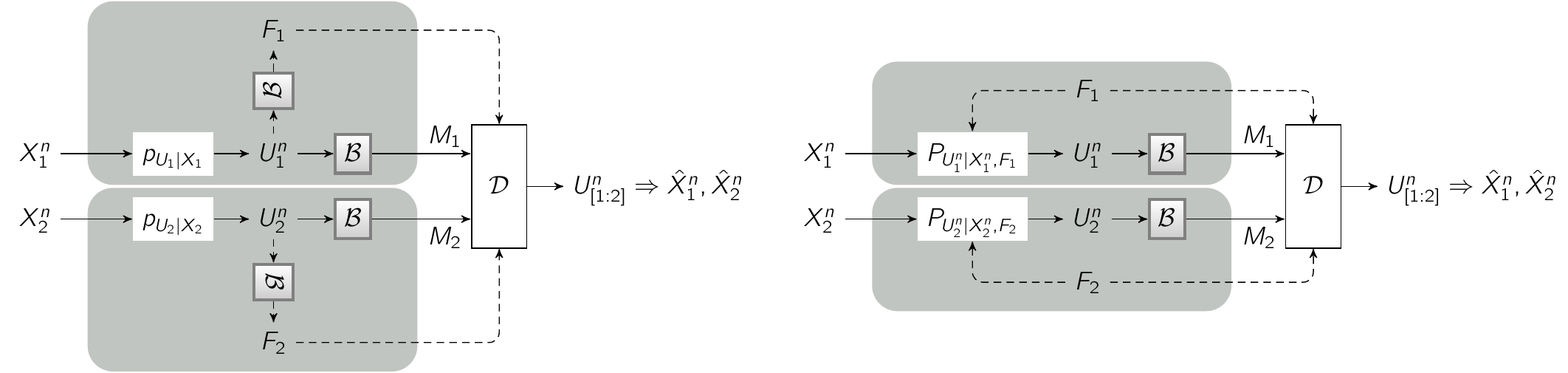}
        \caption{\small (Left) Source coding side of the Berger-Tung problem (Protocol A). Encoder $j$ passes the source $X_j^n$  through a virtual DMC $p_{U_j|X_j}$ to get a sequence $U_j^n$, such that the sequences $\hx_1^n(U_1^n,U_2^n)$ and $\hx_2^n(U_1^n,U_2^n)$ have the desired distortions less than $D_1$ and $D_2$ with $X_1^n$ and $X_2^n$, respectively. So we aim to describe $U_{[1:2]}^n$ for the decoder to enable it to find the \emph{good} sequences $\hat{X}_1^n$ and $\hat{X}_2^n$. We describe $U_j^n$ through two random bins $M_j$ and $F_j$, where $M_j$ will serve as the message of encoder $j$ for the receiver in the main problem, while $F_1,F_2$ will serve as the shared randomness. We use SW decoder for decoding. As long as the SW constraints \eqref{eq:BT15} holds, decoder can reliably decode $U_{[1:2]}^n$ and thus the good sequences $\hat{X}_1^n$ and $\hat{X}_2^n$ with the desired distortions.  (Right) Coding for the Berger-Tung problem assisted with the shared randomness (Protocol B). Encoder $j$ passes the source $X_j^n$ and the shared randomness $F_j$ through the reverse encoder to get a sequence $U_j^n$. Similar to the lossy source coding problem, one needs to have independence among the shared randomnesses and the sources. This is because at the last step of proof, we must eliminate the shared randomness by conditioning on an instance of it, without disturbing the joint distribution of the sources. To get the equivalence between the two protocols, we need to impose constraints implying $X_1^n\bot F_1$ and $X_2^n\bot F_2$ (we note that this implies $X_{[1:2]}^n\bot F_{[1:2]}$, due to Markov chain $F_1-X_1-X_2-F_2$). This is holds as long as $\tR_j<H(U_j|X_j)$, $j=1,2$.
        }\label{fig:BT}
\end{figure}

\emph{Protocol B (coding for the main problem assisted with the shared randomness). } In this protocol we assume that the transmitter and the receiver have access to the
shared randomness $F_{[1:2]}$ where $F_{[1:2]}$ is uniformly distributed over $[1:2^{n\tR_1}]\times[1:2^{n\tR_1}]$.
Then, the protocol proceeds as follows (see also the right diagram of Fig. \ref{fig:BT} demonstrating the protocol B):
\begin{itemize}
\item For $j=1,2$, the transmitter $j$ generates $U_j^n$ according to the conditional pmf $P(u_j^n|x_j^n,f_j)$ of protocol A.
\item Next, knowing $u_j^n$, the transmitter $j$ sends $m_j$ which is the bin index of $u_j^n$. Random variable $M_j$ is generated according to the conditional pmf $P(m_j|u_j^n)$ of protocol A.
\item At the final stage, the receiver, knowing $(m_{[1:2]},f_{[1:2]})$ uses the Slepian-Wolf decoder $P^{SW}(\hat{u}_{1:2}^n|m_{[1:2]},f_{[1:2]})$ of protocol A to obtain  estimates of $u_1^n$ and $u_2^n$.
\end{itemize}
The random pmf induced by the protocol, denoted by $\hat{P}$, factors as
\begin{align}
\hat{P}(x_{[1:2]}^n,u_{[1:2]}^n,m_{[1:2]},f_{[1:2]},\hat{u}_{[1:2]}^n)
&= p^U(f_{[1:2]})p(x_{[1:2]}^n)P(u_1^n|f_1,x_1^n)P(u_2^n|f_2,x_2^n)\n&~~~~~~~~~~~~~~~~~~\times P(m_1|u_1^n)P(m_2|u_2^n)P^{SW}(\hat{u}^n_{[1:2]}|m_{[1:2]},f_{[1:2]}).\label{eq:pmfBT2}
\end{align}

\emph{Part (2a) of the proof: Sufficient conditions that make the induced pmfs approximately the same}: To find the constraints that imply that the pmf $\hat{P}$ is close to the pmf $P$ in total variation distance,
we start with $P$ and make it close to $\hat{P}$ in a few steps. The first step is to observe that in protocol A, $f_1$ is a bin index of $u_1^n$ and $f_2$ is a bin index of $u_2^n$. Substituting $X_1=U_1, X_2=U_2, Z=X_{[1:2]}$ in Theorem \ref{thm:re} implies that $X_{[1:2]}^n$ is nearly independent of $F_{[1:2]}$ and that its pmf is equal to $p(x_{[1:2]}^n)$, if
\begin{align}
\tR_1&<H(U_1|X_1X_2)=H(U_1|X_1),\n
\tR_2&<H(U_2|X_1X_2)=H(U_2|X_2),\n
\tR_1+\tR_2&<H(U_1U_2|X_1X_2)=H(U_1|X_1)+H(U_2|X_2),\label{eq:BT14}
\end{align}
where we have used the Markov chain $U_1-X_1-X_2-U_2$ to simplify the inequalities (Observe that the last inequality is redundant. In fact to have the equivalence between the two protocols, one needs the independence between $F_j$ and $X_j$ for $j=1,2$ due to \eqref{eq:pmfBT00}. These independence are guaranteed as long as the first two inequalities are satisfied.). In other words, the above constraints imply that $P(f_{[1:2]},x_{[1:2]}^n)\apx{}p^U(f_{[1:2]})p(x^n_{[1:2]})=\hat{P}(f_{[1:2]},x_{[1:2]}^n)$.
Equations \eqref{eq:pmfBT0} and \eqref{eq:pmfBT2} imply
\begin{align}
\hat{P}(x_{[1:2]}^n,u_{[1:2]}^n,m_{[1:2]},f_{[1:2]},\hat{u}_{[1:2]}^n)&\apx{}
P(x_{[1:2]}^n,u_{[1:2]}^n,m_{[1:2]},f_{[1:2]},\hat{u}_{[1:2]}^n).\label{eq:eeBT1}
\end{align}

\emph{Part (2b) of the proof: Sufficient conditions that make the Slepian-Wolf decoder succeed}: The next step is {to see that when the Slepian-Wolf decoder  of protocol A can reliably decode the  pair $(U_1^n,U_2^n)$.}
Substituting $X_1=U_1, X_2=U_2$ in Lemma \ref{le:sw} yields that the decoding of $U_1^nU_2^n$ is reliable if,
\be
\begin{split}
R_1+\tR_1&>H(U_1|U_2),\\
R_2+\tR_2&>H(U_2|U_1),\\
R_1+R_2+\tR_1+\tR_2&>H(U_1U_2).
\end{split}\label{eq:BT15}
\ee
It yields
\begin{align}
P(x_{[1:2]}^n,u_{[1:2]}^n,m_{[1:2]},f_{[1:2]},\hat{u}_{[1:2]}^n)\apx{}P(x_{[1:2]}^n,u_{[1:2]}^n,m_{[1:2]},f_{[1:2]})\ind\{\hat{u}_{[1:2]}^n=u_{[1:2]}^n\}.\label{eq:BTpmf1.5}
\end{align}

Using equations \eqref{eq:eeBT1}, \eqref{eq:BTpmf1.5} and the triangle inequality we have
\begin{align}
\hat{P}(x_{[1:2]}^n,u_{[1:2]}^n,m_{[1:2]},f_{[1:2]},\hat{u}_{[1:2]}^n)\apx{}P(x_{[1:2]}^n,u_{[1:2]}^n,m_{[1:2]},f_{[1:2]})\ind\{\hat{u}_{[1:2]}^n=u_{[1:2]}^n\}.\label{eq:BTpmf1}
\end{align}

\emph{Part (3) of the proof: Eliminating the shared randomness {$F_{[1:2]}$} without disturbing the desired distortions: }
 Using Definition \ref{def:1}, equation \eqref{eq:BTpmf1} guarantees existence of  a fixed binning with the corresponding pmf $p$ such that if we replace $P$ with $p$ in \eqref{eq:pmfBT2} and denote the resulting pmf with $\hat{p}$, then
 \begin{align*}
\hat{p}(x_{[1:2]}^n,u_{[1:2]}^n,m_{[1:2]},f_{[1:2]},\hat{u}_{[1:2]}^n)&\apx{}p(x_{[1:2]}^n,u_{[1:2]}^n,m_{[1:2]},f_{[1:2]})\ind\{\hat{u}_{[1:2]}^n=u_{[1:2]}^n\}\n&:=\tilde{p}(x_{[1:2]}^n,u_{[1:2]}^n,m_{[1:2]},f_{[1:2]},\hat{u}_{[1:2]}^n).
 \end{align*}
Using part one of lemma \ref{le:total} we can introduce $\hx_1$ and $\hx_2$ in the above equation. Random variable $\hx_1^n$ was a function of $\hat{u}_{[1:2]}^n$ and $\hx_2^n$ was a function of $\hat{u}_{[1:2]}^n$.
 \begin{align*}
 \hat{p}(x_{[1:2]}^n,u_{[1:2]}^n,m_{[1:2]},f_{[1:2]},\hat{u}_{[1:2]}^n,\hx_{[1:2]}^n)&\apx{}p(x_{[1:2]}^n,u_{[1:2]}^n,m_{[1:2]},f_{[1:2]})\ind\{\hat{u}_{[1:2]}^n=u_{[1:2]}^n\} p(\hx_{[1:2]}^n|\hu_{1:2}^n)
\n
 &~~:=\tilde{p}(x_{[1:2]}^n,u_{[1:2]}^n,m_{[1:2]},f_{[1:2]},\hat{u}_{[1:2]}^n,\hx_{[1:2]}^n).
 \end{align*}
  Note that because of the indicator function terms in $\tilde{p}$, $\tilde{p}(X_1^n, X_2^n,\hat{X}_1^n, \hat{X}_2^n)$ is an i.i.d.\ marginal distribution according to the pmf that we started with at the beginning. Thus, under the probability measure $\tilde{p}$ the distortion constraints $\e d_j(X_j^n,\hat{X}^n_j)< D_j$, $j=1,2$ are satisfied. Using the first part of lemma \ref{le:total} we can drop all the random variables except $x^n_1,x_2^n, \hx_1^n, \hx_2^n$ to get:
  \begin{align}
 \hat{p}(x^n_1,x_2^n,\hx_1^n,\hx_2^n)&\apx{\epsilon_n}\tilde{p}(x^n_1,x_2^n,\hx_1^n,\hx_2^n),\label{eq:BT}
 \end{align}
 for some vanishing sequence $\epsilon_n$.
Remember that the pmf $\hat{p}$ associated to Protocol B which was appropriate for coding.

Unlike the previous case of lossy source coding where we used the law of iterated expectation at this stage, we need to use a concentration result since we are dealing with two distortion functions. Since $\tilde{p}(X_1^n, X_2^n, \hat{X}_1^n, \hat{X}_2^n)$ is an i.i.d.\ distribution we can use the weak law of large number (WLLN) to get that
\[d_j(X^n_j,\hat{X}_j^n)\rightarrow \e_{\tilde{p}}d_j(X_j^n,\hat{X}^n_j)<D_j, \quad \mbox{in}\ \tilde{p}_{X_j^n\hat{X}_j^n}.\]
Thus there exists ${\delta}_n\rightarrow 0$ such that
 \[\tilde{p}\{(x^n_1,x_2^n, \hx_1^n,\hx_2^n): d_j(x^n_j,\hx_j^n)<D_j\ ,j=1,2\}\geq 1-{\delta}_n.\]
Using equation \eqref{eq:BT}, the probability of the same set with respect to $\hat{p}$ should be converging to one, that is
 \[\hat{p}\{(x^n_1,x^n_2, \hx_1^n,\hx_2^n): d_j(x^n,\hx_j^n)<D_j\ ,j=1,2\}\geq 1-\epsilon_n-\delta_n.\]
Thus, there exists some $f_{[1:2]}$ such that
 \[\hat{p}(\{(x_1^n,x_2^n, \hx_1^n,\hx_2^n): d_j(x_j^n,\hx_j^n)<D_j\ ,j=1,2\}|F_{[1:2]}=f_{[1:2]})\geq 1-\epsilon_n-\delta_n.\]
This would imply that
 \[\e_{\hat{p}_{X^n_j\hat{X}_j^n|F_{[1:2]}=f_{[1:2]}}}[d_j(X_j^n,\hat{X}_j^n)]<D_j+ (\epsilon_n+\delta_n)d_{j,max},\]
where we have used the fact that the distortion functions are bounded.

  Finally, specifying $p(m_j|x^n_j,f_j)$ as the encoder $j, j=1,2$ (which is equivalent to generating a random sequence $u_j^n$ according to $p(u_j^n|x_j^n,f_j)$ and then transmitting the bin index $m_j$ assigned to $u_j^n$) and ($p^{SW}(\hu_{[1:2]}^n|m_{[1:2]},f_{[1:2]}),\hat{x}_{[1:2]}^n(\hu_{[1:2]}^n)$) as the decoder results in a pair of encoder-decoder obeying the desired distortion.

{To get the region of \eqref{eq:BT00}, it suffices to choose $\tR_j=H(U_j|X_j)-\epsilon$ in inequalities \eqref{eq:BT14} and \eqref{eq:BT15}, where $\epsilon$  is arbitrarily small.}
\end{proof}
\color{black}

\subsection{Lossy coding over broadcast channels}\label{sub:jscc}
\emph{Problem definition:}
Consider the problem of lossy transmission of an i.i.d.\ source $S^n$ distributed according to $p(s)$, over the broadcast channel $p(y_1,y_2|x)$. Here, the sender wishes to communicate the source to the two receivers within desired distortions $(D_1,D_2)$. Formally, there are
\begin{itemize}
\item an encoder that assigns a random sequence $x^n$ to each $s^n$ according to $p^{enc}(x^n|s^n)$,
\item two decoders, where decoder $j=1,2$ assigns an estimate $\hat{s}^n_j\in\hat{\ms}_j$ to each $y_j^n$ according to $p^{dec_j}(\hs_j^n|y^n_j)$,
\item two distortion measures $d_j(s,\hat{s}_j)$.
\end{itemize}
\par A distortion pair $(D_1,D_2)$ is said to be achievable, if there exists a sequence of encoder-decoder such that $\e d_j(S^n,\hS_j^n)\le D_j+\epsilon_n$, $j=1,2$ and $\epsilon_n\rightarrow 0$.

We now state a new result on the above problem:
\begin{theorem}\label{thm:jscc}
A distortion pair $(D_1,D_2)$ is achievable for the lossy transmission of the source $S$ over the broadcast channel $p(y_1y_2|x)$, if there exist a pmf $p(u_{[0:2]})$, an encoding function $x(u_{[0:2]},s)$ and two decoding functions $\hs_1(u_0,u_1,y_1)$ and $\hs_2(u_0,u_2,y_2)$ such that $\e d_j(S,\hS_j)\le D_j$, $j=1,2$ and the following inequalities hold:
\begin{align}
I(U_0U_j;S)&<I(U_0U_j;Y_j)\quad ,j=1,2,\n
I(U_{[0:2]};S)+I(U_1;U_2|U_0)&<\min\left\{I(U_0;Y_1),I(U_0;Y_2)\right\}
+I(U_1;Y_1|U_0)+I(U_2;Y_2|U_0),\n
I(U_0U_1;S)+I(U_0U_2;S)&<I(U_0U_1;Y_1)+I(U_0U_2;Y_2)
-I(U_1;U_2|U_0S).\label{eq:bc}
\end{align}
\end{theorem}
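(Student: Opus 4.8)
\emph{Proof proposal.} The plan is to run the three-part OSRB machinery exactly as in the proof of Theorem~\ref{thm:marton} and of the Berger-Tung bound (Subsection~\ref{sub:BT}), the one new feature being that the source $S^n$ now plays the role that the messages played there: there will be \emph{no} message bin indices, only shared-randomness bin indices, and the ``covering cost'' of describing the source into $U_{[0:2]}^n$ is imposed by keeping the binning rates small enough that the bin indices decorrelate from $S^n$ (this is the OSRB incarnation of hybrid coding).

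\emph{Part (1).} In Protocol A I would let $(S^n,U_{[0:2]}^n,X^n,Y_1^n,Y_2^n)$ be i.i.d.\ with $S^n\sim p(s^n)$, $U_{[0:2]}^n\sim p(u_{[0:2]}^n|s^n)$ (the test channel underlying \eqref{eq:bc}), $X^n=x(U_{[0:2]}^n,S^n)$ letter-wise, and $(Y_1^n,Y_2^n)$ the broadcast-channel output, and form $\hS_j^n$ letter-wise from $\hs_j(\cdot)$. Mirroring the binning of Theorem~\ref{thm:marton}, assign to each $u_0^n$ one index $f_0\in[1:2^{n\tR_0}]$ and to each pair $(u_0^n,u_j^n)$ one index $f_j\in[1:2^{n\tR_j}]$, and let two Slepian-Wolf decoders produce $(\hu_{0,j}^n,\hu_j^n)$ from $(y_j^n,f_0,f_j)$, $j=1,2$. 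Protocol B shares $F_{[0:2]}$ uniform on $\prod_j[1:2^{n\tR_j}]$; the encoder, seeing $(s^n,f_{[0:2]})$, draws $u_{[0:2]}^n$ from the reverse encoder $P(u_{[0:2]}^n|s^n,f_{[0:2]})$ of Protocol~A, transmits $x^n=x(u_{[0:2]}^n,s^n)$, and decoder $j$ runs the Slepian-Wolf decoder on $(y_j^n,f_0,f_j)$ and applies $\hs_j$ letter-wise. The two induced random pmfs factor as in the proof of Theorem~\ref{thm:marton} with the $M$-variables deleted; in particular under $\hat P$ the source is exactly i.i.d.\ $\sim p(s^n)$ and independent of $F_{[0:2]}$, which is what lets us condition on an instance of $F_{[0:2]}$ later without disturbing the source law.

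\emph{Part (2).} Since $P$ and $\hat P$ differ only in that $P(s^n,f_{[0:2]})$ is replaced by $p(s^n)p^U(f_{[0:2]})$, one application of Theorem~\ref{thm:re} with $(X_1,X_2,X_3,Z)=(U_0,U_0U_1,U_0U_2,S)$ gives $\hat P\apx{}P$ on all the variables provided $\tR_0<H(U_0|S)$, $\tR_0+\tR_j<H(U_0U_j|S)$ for $j=1,2$, and $\tR_0+\tR_1+\tR_2<H(U_{[0:2]}|S)$ (these are the non-redundant ones among the seven inequalities of Theorem~\ref{thm:re}, exactly as in the footnote to the proof of Theorem~\ref{thm:marton}). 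Next, Lemma~\ref{le:sw} at receiver $j$ with $(X_1,X_2,Z)=(U_0,U_0U_j,Y_j)$ makes both Slepian-Wolf decoders reliable provided $\tR_0+\tR_j>H(U_0U_j|Y_j)$ and $\tR_j>H(U_j|U_0Y_j)$, $j=1,2$; combining this with $\hat P\apx{}P$ via Lemma~\ref{le:total}, appending $\hS_{[1:2]}^n$ as functions, and marginalizing, one obtains that $\hat P(s^n,\hs_1^n,\hs_2^n)$ is total-variation close to the i.i.d.\ pmf whose per-letter distortions are $<D_1,D_2$.

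\emph{Part (3), Fourier-Motzkin, and obstacle.} Fixing a good binning, since under $\hat p$ the source is i.i.d.\ and independent of $F_{[0:2]}$, conditioning on an instance $f_{[0:2]}$ leaves the source law untouched, and the two-distortion bookkeeping is done exactly as in Subsection~\ref{sub:BT}: the WLLN makes $\{d_1(s^n,\hs_1^n)<D_1,\ d_2(s^n,\hs_2^n)<D_2\}$ have probability $\to1$ under the i.i.d.\ pmf, this transfers to $\hat p$ up to the total-variation gap, so some $f_{[0:2]}$ yields $\e_{\hat p(\cdot|f_{[0:2]})}[d_j(S^n,\hS_j^n)]\le D_j+\epsilon_n d_{j,max}$ for both $j$; reading off the stochastic encoder $s^n\mapsto u_{[0:2]}^n\mapsto x(u_{[0:2]}^n,s^n)$ and the Slepian-Wolf-plus-$\hs_j$ decoders completes the construction. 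Finally I would eliminate $\tR_0,\tR_1,\tR_2$ by Fourier-Motzkin from the four independence and four Slepian-Wolf inequalities (the positivity constraints $\tR_j\ge0$ being disposed of as in Remark~\ref{re:fme}): pairing the two $(\tR_0+\tR_j)$ bounds gives $I(U_0U_j;S)<I(U_0U_j;Y_j)$; pairing $\tR_0+\tR_j>H(U_0U_j|Y_j)$ and $\tR_{3-j}>H(U_{3-j}|U_0Y_{3-j})$ against $\tR_0+\tR_1+\tR_2<H(U_{[0:2]}|S)$ gives the $\min$ inequality; and pairing $(\tR_0+\tR_1)+(\tR_0+\tR_2)$ against the same upper bound together with $\tR_0<H(U_0|S)$, using the identity $H(U_{[0:2]}|S)=H(U_0U_1|S)+H(U_0U_2|S)-H(U_0|S)-I(U_1;U_2|U_0S)$, gives the last inequality of \eqref{eq:bc}. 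I expect this Fourier-Motzkin step (keeping track of which inequalities are active and verifying the positivity redundancy) to be the most tedious part; the one genuinely new conceptual point is that, unlike in Theorem~\ref{thm:marton}, there is no free message, so the ``rate region'' collapses to a condition on the distribution, with the source-covering cost living in the independence constraints and the channel entering only through the Slepian-Wolf constraints.
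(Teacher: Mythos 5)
Your proposal follows essentially the same route as the paper's proof: the same nested binning ($f_0$ on $u_0^n$, $f_j$ on $(u_0^n,u_j^n)$), the same reverse encoder with symbol-by-symbol $x(u_{[0:2]},s)$, the same applications of Theorem~\ref{thm:re} (with $Z=S$) and Lemma~\ref{le:sw} (with $Z=Y_j$) yielding identical constraints, and the same Berger--Tung-style elimination of $F_{[0:2]}$ in Part~(3). Your Fourier--Motzkin bookkeeping, which the paper leaves implicit, is also correct and reproduces \eqref{eq:bc}.
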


\begin{remark}
The above result is related to the result of Han and Costa, \cite{miner,costa} for the lossless transmission of correlated sources over broadcast channels when $S$ is of the form $(S_1, S_2)$. In this case we can include $S_1$ in $U_1$, and $S_2$ in $U_2$. If we take the distortion function to be the Hamming distance function, the above bound reduces to a weaker version of the result of Han and Costa since instead of a vanishing probability of error we have a vanishing distortion. However the proof can be modified to recover the result of \cite{miner}.
\end{remark}

\begin{proof}
Take some arbitrary $p(s,u_{[0:2]},y_1,y_2)$ and functions $x(u_{[0:2]},s)$, $\hs_1(u_0,u_1,y_1)$ and $\hs_2(u_0,u_2,y_2)$ such that $\e d_j(S,\hS_j)< D_j$, $j=1,2$. 

\emph{Part (1) of the proof:}
We define two protocols each of which induces a joint distribution on random variables that are defined during the protocol.

\emph{Protocol A. }
Let $(S^n,U_{[0:2]}^n,Y_1^n,Y_2^n)$ be i.i.d.\ and distributed according to $p(s,u_{[0:2]},y_1,y_2)$.

\underline{Random Binning}: Consider the following random binning:
\begin{itemize}
\item To each sequence $u_0^n$ assign uniformly and independently a random bin index $f_0\in[1:2^{nR_0}]$,
\item For $j=1,2$, to each pair $(u_0^n,u_j^n)$ assign uniformly and independently a random bin index $f_j\in[1:2^{nR_j}]$,
\item We use a Slepian-Wolf decoder to recover $\hat{u}_{0,1}^n,\hat{u}_1^n$ from $(y_1^n,f_0,f_1)$, and another Slepian-Wolf decoder to recover $\hat{u}_{0,2}^n,\hat{u}_2^n$ from $(y_2^n,f_0,f_2)$.  Note that we denote the two estimates of $u_0^n$ by the two receivers with $\hat{u}_{0,1}^n$ and $\hat{u}_{0,2}^n$. The rate constraints for the success of these decoders will be imposed later, although these decoders can be conceived even when there is no guarantee of success.
\item Random variable $\hs_1^n$ is created as a function of $(\hat{u}_0^n,\hat{u}_1^n,y_1^n)$ and $\hs_2^n$ is created as function of $(\hat{u}_0^n,\hat{u}_2^n,y_2^n)$ using the two decoding functions given at the beginning.
\end{itemize}

The random pmf induced by the random binning, denoted by $P$, can be expressed as follows:
\begin{align}
P(s^n,u_{[0:2]}^n, y_1^n,y_2^n,f_{[0:2]},\hat{u}_{0,1}^n,\hat{u}_1^n,\hat{u}_{0,2}^n,\hat{u}_2^n)&=p(s^n,u_{[0:2]}^n, y_1^n,y_2^n)P(f_{[0:2]}|u_{[0:2]}^n) P^{SW}(\hat{u}_{0,1}^n,\hat{u}_1^n|y_1^n,f_0,f_1)\n&\qquad\times
P^{SW}(\hat{u}_{0,2}^n,\hat{u}_2^n|y_2^n,f_0,f_2)
\n
&=P(f_{[0:2]},s^n,u_{[0:2]}^n)p(y_{[1:2]}^n|u_{[0:2]}^n,s^n)P^{SW}(\hat{u}_{0,1}^n,\hat{u}_1^n|y_1^n,f_0,f_1)\n&\qquad\times P^{SW}(\hat{u}_{0,2}^n,\hat{u}_2^n|y_2^n,f_0,f_2)\n
&=P(f_{[0:2]},s^n)P(u_{[0:2]}^n|f_{[0:2]},s^n)p(y_{[1:2]}^n|u_{[0:2]}^n,s^n)P^{SW}(\hat{u}_{0,1}^n,\hat{u}_1^n|y_1^n,f_0,f_1)\n&\qquad\times P^{SW}(\hat{u}_{0,2}^n,\hat{u}_2^n|y_2^n,f_0,f_2).\label{eq:LBpmf0}
\end{align}
We have ignored $\hs_1^n$ and $\hs_2^n$ from the pmf at this stage since they are functions of other random variables. They will be introduced later.

\emph{Protocol B.} In this protocol we assume that the transmitter and the two receivers have access to the
shared randomness $F_{[0:2]}$ where $F_{[0:2]}$ is uniformly distributed over $[1:2^{nR_0}]\times [1:2^{nR_1}]\times [1:2^{nR_2}]$. Observe that this implies that $F_0$, $F_1$ and $F_2$ are mutually independent.
Then, the protocol proceeds as follows:
\begin{itemize}
\item The transmitter generates $U_{[0:2]}^n$ according to the conditional pmf $P(u_{[0:2]}^n|s^n,f_{[0:2]})$ of protocol A.
\item Next, $X^n$ is computed from  $(U_{[0:2]}^n, S^n)$ using $n$ copies of the function $x(u_{[0:2]},s)$ (the arbitrary function we chose at the beginning). R.v. $X^n$ is transmitted over the broadcast channel.
\item At the final stage, the receiver $j=1,2$, knowing $(y_j^n,f_{0},f_j)$ uses the Slepian-Wolf decoder\\ $P^{SW}(\hat{u}_{0,j}^n,\hat{u}_j^n|y_j^n,f_0,f_j)$ of protocol A to obtain estimates of $u_{0}^n$ and $u_j^n$. We note that while the receiver $j=1,2$ knows $f_0, f_1$ and $f_2$, it uses only $f_{0},f_j$ in its Slepian-Wolf decoder.
\item Random variable $\hs_1^n$ is created as a function of $(\hat{u}_{0,1}^n,\hat{u}_1^n,y_1^n)$ and $\hs_2^n$ is created as function of $(\hat{u}_{0,2}^n,\hat{u}_2^n,y_2^n)$ using the two decoding functions given at the beginning.
\end{itemize}
The random pmf induced by the protocol, denoted by $\hat{P}$, factors as
\begin{align}
&\hat{P}(s^n,u_{[0:2]}^n, y_1^n,y_2^n,f_{[0:2]},\hat{u}_{0,1}^n,\hat{u}_1^n,\hat{u}_{0,2}^n,\hat{u}_2^n)=\n
&p^U(f_{[0:2]})p(s^n)P(u_{[0:2]}^n|s^n,f_{[0:2]})p(y_{[1:2]}^n|u_{[0:2]}^n,s^n)P^{SW}(\hat{u}_{0,1}^n,\hat{u}_1^n|y_1^n,f_0,f_1)P^{SW}(\hat{u}_{0,2}^n,\hat{u}_2^n|y_2^n,f_0,f_2).\label{eq:LBpmf2}
\end{align}
Again we have ignored $\hs_1^n$ and $\hs_2^n$ from the pmf at this stage since they are functions of other random variables.

\emph{Part (2a) of the proof: Sufficient conditions that make the induced pmfs approximately the same}: To find the constraints that imply that the pmf $\hat{P}$ is close to the pmf $P$ in total variation distance,
we start with $P$ and make it close to $\hat{P}$ in a few steps. The first step is to observe that in protocol A, $f_0$ is a bin index of $u_0^n$, $f_1$ is a bin index of $(u_0^n,u_1^n)$ and $f_2$ is a bin index of $(u_0^n, u_2^n)$. Substituting $X_1=U_0, X_2=U_0U_1,X_3=U_0U_2, Z=S$ in Theorem \ref{thm:re} implies that $S^n$ is nearly independent of $F_{[0:2]}$ and that its pmf is close to $p(s^n)$, if
\begin{align}
R_0&<H(U_0|S),\n
R_0+R_j&<H(U_0U_j|S)\ ,j=1,2,\n
R_0+R_1+R_2&<H(U_{[0:2]}|S).\label{eq:14}
\end{align}
In other words, the above constraints imply that $P(f_{[0:2]},s^n)\apx{}p^U(f_{[0:2]})p(s^n)=\hat{P}(f_{[0:2]},s^n)$.
Equations \eqref{eq:LBpmf0} and \eqref{eq:LBpmf2} imply
\begin{align}
\hat{P}(s^n,u_{[0:2]}^n, y_1^n,y_2^n,f_{[0:2]},\hat{u}_{0,1}^n,\hat{u}_1^n,\hat{u}_{0,2}^n,\hat{u}_2^n)
&\apx{}
P(s^n,u_{[0:2]}^n, y_1^n,y_2^n,f_{[0:2]},\hat{u}_{0,1}^n,\hat{u}_1^n,\hat{u}_{0,2}^n,\hat{u}_2^n).\label{eqn:LBee1}
\end{align}

\emph{Part (2b) of the proof: Sufficient conditions that make the Slepian-Wolf decoder succeed}: The next step is {to see that when the Slepian-Wolf decoder $j, j=1,2$ of protocol A can reliably decode the  pair $(U_0^n,U_j^n)$.} Lemma \ref{le:sw} for $X_1=U_0, X_2=U_0U_j, Z=Y_j$ yields that the decoding of $U_0^nU_j^n$ is reliable if,
\begin{align}
R_0+R_j&>H(U_0U_j|Y_j),\n
R_j&>H(U_j|U_0Y_j)~~\mbox{for }j=1,2.\label{eq:13}
\end{align}
It yields
\begin{align}
P(s^n,u_{[0:2]}^n, y_1^n,y_2^n,f_{[0:2]},\hat{u}_{0,1}^n,\hat{u}_1^n,\hat{u}_{0,2}^n,\hat{u}_2^n)&\apx{}P(s^n,u_{[0:2]}^n, y_1^n,y_2^n,f_{[0:2]})\n&\qquad\times\ind\{\hat{u}_{0,1}^n=\hat{u}_{0,2}^n=u_0^n, \hat{u}_1^n=u_1^n, \hat{u}_2^n=u_2^n\}.\label{eq:LBpmf1.5}
\end{align}

Using equations \eqref{eqn:LBee1}, \eqref{eq:LBpmf1.5} and the triangle inequality we have
\begin{align}
\hat{P}(s^n,u_{[0:2]}^n, y_1^n,y_2^n,f_{[0:2]},\hat{u}_{0,1}^n,\hat{u}_1^n,\hat{u}_{0,2}^n,\hat{u}_2^n)&\apx{}P(s^n,u_{[0:2]}^n, y_1^n,y_2^n,f_{[0:2]})\n&\qquad\times\ind\{\hat{u}_{0,1}^n=\hat{u}_{0,2}^n=u_0^n, \hat{u}_1^n=u_1^n, \hat{u}_2^n=u_2^n\}.\label{eq:LBpmf1}
\end{align}

\emph{Part (3) of the proof: Eliminating the shared randomness {$F_{[0:2]}$} without disturbing the desired distortions: }
The proof of this part follows exactly the same step used in the Part (3) of the proof of Berger-Tung. Following exactly the same steps used in the Part (3) of the proof of Berger-Tung, we find $f_{[0:2]}$ such that
 \[\e_{\hat{p}_{S^n\hS_j^n|F_{[0:2]}=f_{[0:2]}}}[d(S^n,\hS_j^n)]<D_j+ (\epsilon_n+\delta_n)d_{j,max}.\]

 Specifying $(p(u_{[0:2]}^n|f_{[0:2]},s^n),x^n(u_{[0:2]}^n,s^n))$ as the encoder and $(p^{SW}(\hat{u}_{0,j},\hat{u}_j|y_j^n,f_0,f_j),\hs^n_j(\hat{u}_{0,j},\hat{u}_j,y_j^n))$
 as the decoder $j$ results in encoder and decoders obeying the desired distortions.

Finally applying FME on \eqref{eq:13} and \eqref{eq:14} gives \eqref{eq:bc}.
\end{proof}

\begin{remark}[Connection to Hybrid coding \cite{hybrid}]\label{rmkhybridcoding}
In this problem, the OSRB framework has a close relation to the hybrid coding approach. Hybrid coding \cite{hybrid} is a recent approach for establishing achievability results for the joint source-channel coding scenarios. In this approach the same code (codebook)  is used for both source coding and channel coding. Observing the source, the encoder adopts a codeword from the codebook and then generates the channel input as the symbol-by-symbol function of the codeword and the source. Similarly observing the channel output, the decoder adopts a codeword from the codebook and then generates the source estimate  as the symbol-by-symbol function of the codeword and the channel output. 

Although in OSRB proof of Theorem \ref{thm:jscc} we did not deal with codebook explicitly but one can find similarity between it and hybrid coding as follows:
\begin{itemize}
\item Conditioned on an instance $F_{[0:2]}=f_{[0:2]}$ of the shared randomness, the inputs and the outputs of encoder and decoders are limited to those sequence assigned to $f_{[0:2]}$. Thus one can interpret   these sequences as codewords of a codebook. The encoder generates the codeword $U_{[0:2]}^n$ according to $p(u_{[0:2]}^n|f_{[0:2]},s^n)$ and the decoder $j$ attempts to find $(U_0^n,U_j^n)$. So the same codebook has been used for both of the encoder and decoders. In particular,  $p(u_{[0:2]}^n|f_{[0:2]},s^n)$ can be regarded as source  encoder (compressor) and the SW decoder (conditioned on an instance of shared randomness) can be interpreted as the channel decoder. In fact, this is a general phenomenon  in OSRB framework, which is not restricted to joint source-channel coding problems.
\item As in hybrid coding, we use symbol-by-symbol function to map the codeword and the source to channel input. Also, we use symbol-by-symbol function to map the codeword and the channel output to a source estimate.
\end{itemize}
{The OSRB framework provides an alternative and  straightforward achievability proof for hybrid coding scheme. Further,} in \cite{farzin13} we proposed a hybrid coding based achievability proof using OSRB framework for the problem of channel simulation (synthesis) using another channel.
{To best of our knowledge, there is no known solution for this problem using the traditional approach based on codebook generation.}
\end{remark}
\color{black}

\subsection{Relay channel with/without secrecy}\label{sub:nnc}
Until now, we only considered one-hop networks. In this section we investigate our framework for multi-hop setting through studying wiretap relay channel. As other applications of the OSRB framework in multi-hopping setting, please see \cite{me2} and \cite{me3}. In particular, our proof for the problem of  \emph{interactive channel simulation (synthesis)} \cite{me2} is a reminiscent of two well-known strategies for relay channel, namely decode-forward and compress-forward.

 In this subsection, we prove noisy network coding inner bound \cite{NNC} for relay channel  and its extension to wiretap relay channel. Extension to multiple relays is also possible, but for simplicity we only consider one relay.

\emph{Problem definition:}
Consider the problem of secure transmission over a relay channel with a wiretapper, $p(y_{r},y,z|x,x_{r})$, where $X$ and $X_r$ are the channel inputs at transmitter and relay, respectively and $Y_r,Y,Z$ are the channel outputs at the relay, receiver and eavesdropper, respectively.  Here, we wish to securely transmit a message $m\in[1:2^{nR}]$ to the receiver $Y$ with the help of the relay, while concealing it from the eavesdropper. We again use the strong notion of secrecy as a measure for analyzing the secrecy. Formally there are,
\begin{itemize}
\item A message $M$ which is uniformly distributed,
\item A stochastic encoder at transmitter which maps the message to  a channel input $x^n$ according to $p^{enc}(x^n|m)$,
\item  A set of stochastic relay-encoding functions $p^{enc,relay}_t(x_{r,t}|y_r^{t-1},x_r^{t-1}),\ t=1,\cdots,n$  mapping the sequence $(y_r^{t-1},x_r^{t-1})$ to a channel input $x_{r,t}$ at time $t$,\footnote{
In a relay channel without an eavesdropper, the sequence $x_{r,t}$ can be taken to be a deterministic function of $y_r^{t-1}$ (relay randomization could only confuse the receivers); thus $x_{r,t}$ is implicitly related to its past sequence $x_r^{t-1}$. However in the presence of an eavesdropper, the relay might randomize to confuse the adversary and we cannot remove the dependency of $x_{r,t}$ on $x_r^{t-1}$.}
\item A decoder that assigns an estimate $\hat{m}$ of $m$ to each $y^n$.
\end{itemize}

 A  secrecy rate $R$ is said to be achievable if $\Pr\{\hat{M}\neq M\}\rightarrow 0$ and $M$ is nearly independent of the wiretapper output, $Z^n$, that is,
\[
\tv{p(m,z^n)-p^U_{\mm}(m)p(z^n)}\rightarrow 0,
\]
where, here $p(z^n)$ is the induced pmf on $Z^n$ and is not an i.i.d.\ pmf.
The secrecy capacity $C_s$ is the supremum of the set of all achievable secrecy rate.

To show the applicability of our framework in complicated networks, we prove an extension of noisy network coding inner bound for relay channel to include an eavesdropper. In fact, we again show  that adding secrecy is simple using the OSRB framework. We have adopted noisy network coding for investigation since it has a simpler analysis compared to other relaying protocols such as decode-forward or compress-forward. However, these protocols can also be studied using the OSRB framework.
\color{black}

\begin{theorem}\label{thm:WR}
The secrecy capacity of relay channel with an eavesdropper is lower bounded as
\begin{equation}
\begin{split}
C_s\ge \sup&\left\{\max\big\{ R_{\mathsf{NNC}}-R_{\mathsf{BC-Z}},\right.\\
&\left.\min\{R_{\mathsf{NNC}}-I(U;Z),R_{\mathsf{MAC-Y}}-R_{\mathsf{MAC-Z}}\}\big\}\right\},
\end{split}\label{eq:nncbound}
\end{equation}
where
\begin{align*}
R_{\mathsf{BC-Y}}&=I(U;Y\hY_r|U_r),\\
R_{{\mathsf{BC-Z}}}&=I(U;Z\hY_r|U_r),\\
R_{{\mathsf{MAC-Y}}}&=I(UU_r;Y)-I(Y_r;\hY_r|UU_rY),\\
R_{{\mathsf{MAC-Z}}}&=I(UU_r;Z)-I(Y_r;\hY_r|UU_rZ),\\
R_{\mathsf{NNC}}&=\min\{R_{{\mathsf{BC-Y}}},R_{{\mathsf{MAC-Y}}}\},
\end{align*}
 and the supremum is taken over all joint p.m.f of $(u,x,u_r,x_r,y_r,y,z,\hy_r)$ factor as $$p(u,x)p(u_r,x_r)p(y_r,y,z|x,x_r)p(\hy_r|u_r,y_r).$$
\end{theorem}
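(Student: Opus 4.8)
The plan is to instantiate the three-part OSRB template on a single i.i.d.\ copy of $(U^n,X^n,U_r^n,X_r^n,Y_r^n,\hat Y_r^n,Y^n,Z^n)$ distributed according to the stated factorization, to realize the non-unique (simultaneous) decoding of noisy network coding through Lemma~\ref{le:2000} rather than the ordinary Slepian--Wolf Lemma~\ref{le:sw}, and to obtain the secrecy penalty ``for free'' through the $Z$-conditioned form of Theorem~\ref{thm:re}. In Protocol~A the sequences above are i.i.d., and a distributed random binning is applied: to each $u^n$ we assign a message index $m\in[1:2^{nR}]$ and an extra shared-randomness index $f\in[1:2^{n\tilde R}]$; to each $u_r^n$ an index $\ell_0\in[1:2^{n\hat R}]$ playing the role of the relay's incoming codeword index; and to each $\hat y_r^n$ an index $\ell\in[1:2^{n\hat R}]$ which the relay forwards towards the destination. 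A Slepian--Wolf-type decoder at the destination is used to reconstruct $u^n$ only --- not $\hat y_r^n$ or $u_r^n$ --- from $(y^n,\ell_0,\ell,m,f)$; leaving the compression variable un-decoded is exactly what Lemma~\ref{le:2000} permits and is the OSRB incarnation of NNC's non-unique decoding.

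Protocol~B is the induced coding scheme: given $(m,f)$ the source forms $u^n\sim P(u^n\mid m,f)$ and then $x^n\sim p(x^n\mid u^n)$ and transmits $x^n$; given an incoming index $\ell_0$ (shared randomness) the relay forms $u_r^n\sim P(u_r^n\mid\ell_0)$, then $x_r^n\sim p(x_r^n\mid u_r^n)$, observes $y_r^n$, forms $\hat y_r^n\sim P(\hat y_r^n\mid u_r^n,y_r^n)$ and forwards $\ell=\mathsf{L}(\hat y_r^n)$; the destination runs the Slepian--Wolf decoder and outputs the $m$-bin of the recovered $u^n$. The block-Markov chaining of the original NNC scheme is recovered by identifying the forwarded index $\ell$ of a block with the incoming index $\ell_0$ of the next block over $b$ blocks, with one terminating block to flush; within a generic block $\ell_0$ is treated as a uniform common randomness independent of $(m,f)$.

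For Part~(2a), Theorem~\ref{thm:re} applied with the parallel sources $X_1=U$, $X_2=U_r$ and no $Z$-conditioning shows that if $R+\tilde R<H(U)$ and $\hat R<H(U_r)$ then $(M,F,L_0)$ is asymptotically uniform and mutually independent, which forces the induced pmf of Protocol~A to be $\apx{}$ that of Protocol~B on all relevant variables. For Part~(2b), Lemma~\ref{le:2000} with $X_1=U$, $X_2=U_r$, $X_3=\hat Y_r$ and $Z=Y$ gives reliable recovery of $u^n$, hence of $m$, whenever the corresponding asymmetric Slepian--Wolf inequalities hold; for a suitable choice of $\hat R$ these collapse to $R<R_{\mathsf{BC-Y}}$ and $R<R_{\mathsf{MAC-Y}}$, i.e.\ $R<R_{\mathsf{NNC}}$, just as in NNC. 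For Part~(2c) we invoke Theorem~\ref{thm:re} once more with $X_1=U$ and the eavesdropper's information in the conditioning slot, and we run this \emph{twice} to obtain the two branches of the maximum in \eqref{eq:nncbound}: a conservative instantiation that hands the eavesdropper the relay description (conditioning on $(U_r,Z,\hat Y_r)$), whose Fourier--Motzkin elimination against the Part~(2a)--(2b) constraints yields $R<R_{\mathsf{NNC}}-R_{\mathsf{BC-Z}}$; and a sharper instantiation that charges the eavesdropper only with $Z^n$ and the public indices, which after elimination yields $R<\min\{R_{\mathsf{NNC}}-I(U;Z),\,R_{\mathsf{MAC-Y}}-R_{\mathsf{MAC-Z}}\}$ --- the term $I(Y_r;\hat Y_r\mid UU_rZ)$ in $R_{\mathsf{MAC-Z}}$, as opposed to $I(Y_r;\hat Y_r\mid UU_rY)$ in $R_{\mathsf{MAC-Y}}$, being precisely where conditioning on $Z$ instead of $Y$ enters.

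Part~(3) then fixes a good binning and good instances of $F$ and of the relay's common randomness so that, conditioned on them, $\Pr(\hat M\neq M)\to0$ and $\tv{p(m,z^n)-p^U_{\mm}(m)\,p(z^n)}\to0$ are both retained (via parts 1--2 of Lemma~\ref{le:total}); reading off $p(x^n\mid m,f)$ and the relay maps as the encoders and the Slepian--Wolf decoder as the destination decoder, and eliminating $\tilde R$ and $\hat R$ by Fourier--Motzkin (over all valid pmf's $p(u,x)p(u_r,x_r)p(y_r,y,z\mid x,x_r)p(\hat y_r\mid u_r,y_r)$), produces \eqref{eq:nncbound}. The main obstacle I anticipate is the relay: unlike the one-hop examples, its transmitted codeword and the description it produces live in adjacent blocks, so the single-i.i.d.-copy picture must be reconciled with a block-Markov chain of length $b$, and one has to argue that the chain of total-variation approximations telescopes uniformly in $b$ and that the aggregated non-unique-decoding error stays vanishing. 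Tightly bound to this is the bookkeeping that delivers \emph{both} branches of the maximum: each branch requires rerunning the secrecy step with a different assumed eavesdropper view and then checking that the Fourier--Motzkin elimination really does collapse to the claimed closed form.
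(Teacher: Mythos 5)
Your high-level plan matches the paper's strategy (three-part OSRB, Lemma~\ref{le:2000} for non-unique decoding of the compression variable, and two instantiations of the secrecy step yielding the two branches of the maximum), but the step you defer as ``the main obstacle I anticipate'' --- reconciling the single-i.i.d.-copy picture with the block-Markov relay operation --- is precisely the substance of the proof, and the way you sketch it does not work. You bin $u_r^n$ to get an incoming index $\ell_0$ and bin $\hat y_r^n$ to get a forwarded index $\ell$, and then ``identify'' the $\ell$ of one block with the $\ell_0$ of the next. These are bins of \emph{different} random variables with different joint statistics, so the identification is not a well-defined random-binning construction; worse, $\ell$ is computed by the relay from its observation (hence correlated with the source and channel of its block) while $\ell_0$ must act as uniform shared randomness independent of the fresh block --- a single index cannot play both roles, and treating $\ell$ as available to the destination (``from $(y^n,\ell_0,\ell,m,f)$'', which also wrongly includes $m$) presupposes it is shared randomness, in which case it carries no relay-to-destination information and the claimed collapse of the reliability constraints to $R<R_{\mathsf{NNC}}$ is unsupported.

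The paper resolves this differently: it works with length-$nB$ i.i.d.\ sequences split into $B$ blocks, makes $(m,f)$ bins of the \emph{entire} $x^{nB}$ (message repetition as in NNC), and assigns each shared-randomness index $f_{r,(b)}$ to the \emph{nested cumulative} tuple $(\hy^n_{r,([1:b])},x^n_{r,([2:b+1])})$, so that the relay's reverse encoder $P(\hy^n_{r,(b)},x^n_{r,(b+1)}\mid f_{r,(b)},\text{past})$ jointly produces the block-$b$ compression and the block-$(b+1)$ input; all $f_{r,(b)}$ are genuine shared randomness that is fixed in Part~(3), the destination Slepian--Wolf-decodes $x^{nB}$ jointly over all blocks, and the equivalence of the two protocols is proved by an induction over blocks (Appendix~\ref{apx:induction}), with the per-block entropy bookkeeping and a large-$B$ limit turning the $B$-indexed constraints into $\tR>H(X|X_r\hY_rY)$, $\tR+\tR_r>H(XX_r\hY_r|Y)$, etc. Your secrecy step is also only qualitatively right: the two branches come from Corollary~\ref{cor:OSRB} applied to the $B$-block binning with $\mv=\emptyset$ and $\mv=[2:B]$ (again followed by a large-$B$ argument and Fourier--Motzkin), not from a per-block application of Theorem~\ref{thm:re}. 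Without the nested-binning construction and the induction establishing the pmf approximation across blocks, the proposal does not yield the stated bound.
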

\begin{remark}
If we disable the compression part of NNC by setting $\hY_r=\phi$, the NNC strategy reduces to noise forwarding strategy and we obtain the achievable secrecy rate of \cite[Theorem 3]{h-elgamal} under strong secrecy criterion.
\end{remark}
\begin{remark}
In \cite[Corollary 3.1]{perron}, a lower bound on the secrecy capacity of deterministic networks is derived. For the special case of deterministic relay channel, we can establish this corollary using Theorem \ref{thm:WR} by setting $\hY_r=Y_r$.
\end{remark}

\begin{proof}
Without loss of generality, let $U=X$ and $U_{r}=X_{r}$.\\
Take some arbitrary $p(x,x_r,y_r,y,z,\hy_r)=p(x)p(x_r) p(y_r,y,z|x,x_r)p(\hy_r|x_r,y_r)$. 

\emph{Part (1) of the proof:}
We define two protocols each of which induces a joint distribution on random variables that are defined during the protocol. Fix an arbitrarily large integer number $B$.

\emph{Protocol A. }
Let $(X^{nB},X_r^{nB},Y_r^{nB},Y^{nB},Z^{nB},\hY_r^{nB})$ be i.i.d.\ and distributed according to $p(x,x_r,y_r,y,z,\hy_r)$. We divide these sequences to $B$ blocks and denote the sub-sequences in the block $b\in[1:B]$ by index $(b)$. Observe that the sequence of r.v.s $\left\{(X^{n}_{(b)},X_{r,(b)}^{n},Y_{r,(b)}^{n},Y^{n}_{(b)},Z^{n}_{(b)},\hY_{r,(b)}^{n})\right\}_{b=1}^B$ is mutually independent and has the same distribution over the blocks.

\underline{Random Binning}: Consider the following random binning:
\begin{itemize}
\item To each $x^{nB}=x^n_{([1:B])}=(x^n_{(1)},x^n_{(2)},\cdots,x^n_{(B)})$ assign uniformly and independently two random bin indices $m\in[1:2^{nBR}]$ and $f\in[1:2^{nB\tR}]$ (this resembles the repetition of a message in the \emph{noisy network coding scenario, because we considered only one message for all blocks}),
\item For $b=1, 2, \cdots, B-1$, to each tuple $(\hy^n_{r,(1)},x^n_{r,(2)},\hy_{r,(2)}^n,x^n_{r,(3)},\cdots,\hy^n_{r,(b)},x^n_{r,(b+1)})$ assign uniformly and independently a random bin index $f_{r,(b)}\in[1:2^{n\tR_r}]$. This bin will be used to convey some information about $(\hy^n_{r,([1:b])},x^n_{r,([2:b+1])})$ to receiver in the block $b+1$.
\item We use a Slepian-Wolf decoder to obtain an estimate $\hat{x}^{nB}$ of $x^{nB}$ from $(y^{n}_{([1:B])},f,f_{r,([1:B-1])})$.
\item Random variable $\hat{M}$ is created as a bin index assigned to $\hat{X}^{nB}$.
 \end{itemize}
The random pmf induced by the random binning, denoted by $P$, can be expressed as follows:
\begin{align}
&P(x^{nB},x_{r}^{nB},y^{nB}_r,y^{nB},z^{nB},\hy_r^{nB},m,f,f_{r,([1:B-1])},\hat{x}^{nB})
=p(x^{nB},x_{r}^{nB},y^{nB}_r,y^{nB},z^{nB},\hy^{nB}_r)P(m,f|x^{nB})\n
&\qquad\qquad\qquad\qquad\qquad~~~~~~\qquad\qquad\quad\times\left[\prod_{b=1}^{B-1}P(f_{r,(b)}|\hy^n_{r,([1:b])},x^n_{r,([2:b+1])})\right]P^{SW}(\hat{x}^{nB}|y^{n}_{[1:B]},f,f_{r,[1:B-1]})
\n
&~~=P(x^{nB},m,f)p(x_{r,(1)}^n)\left[\prod_{b=1}^{B}p(y^n_{r,(b)},y^n_{(b)},z^n_{(b)}|x^n_{(b)},x^n_{r,(b)})
p(\hy^n_{r,(b)}|y_{r,(b)}^n,x_{r,(b)}^n)p(x^n_{r,(b+1)})P(f_{r,(b)}|\hy^n_{r,([1:b])},x^n_{r,([2:b+1])})\right]
\n
&\qquad ~~~~~~~~\qquad~~~~~~~~~~~~~~~~~~~\times P^{SW}(\hat{x}^{nB}|y^{n}_{[1:B]},f,f_{r,[1:B-1]})\n
&=P(x^{nB},m,f)p(x_{r,(1)}^n)\left[\prod_{b=1}^{B}p(y_{r,(b)}^n,y^n_{(b)},z^n_{(b)}|x^n_{(b)},x^n_{r}(b))
P(\hy^n_{r,(b)},x^n_{r,(b+1)},f_{r,(b)}|\hy^n_{r,([1:b-1])},x^n_{r,([2:b])},y^n_{r,(b)})\right]\n
&\qquad ~~~~~~~~\qquad~~~~~~~~~~~~~~~~~~~\times P^{SW}(\hat{x}^{nB}|y^{n}_{[1:B]},f,f_{r,[1:B-1]})\n
&=P(m,f)P(x^{nB}|m,f)p(x^n_{r,(1)})
\left[\prod_{b=1}^{B}p(y_{r,(b)}^n,y^n_{(b)},z^n_{(b)}|x^n_{(b)},x^n_{r,(b)})
 P(f_{r,(b)}|\hy^n_{r,([1:b-1])},x^n_{r,([2:b])},y^n_{r,(b)})\right.\n&\left.~~\qquad\qquad\qquad\qquad\qquad\qquad\qquad\qquad\qquad\qquad\qquad\qquad
P(\hy^n_{r,(b)},x^n_{r,(b+1)}|f_{r,(b)},\hy^n_{r,[1:b-1])},x^n_{r,([2:b])},y^n_{r,(b)})\right]\n&\qquad ~~~~~~~~\qquad~~~~~~~~~~~~~~~~~~~\times P^{SW}(\hat{x}^{nB}|y^{n}_{[1:B]},f,f_{r,[1:B-1]})
\label{eq:SRpmf0}
\end{align}

\emph{Protocol B.} In this protocol we assume that the transmitter, the relay, the receiver and the wiretapper have access to the
shared randomness $(F,F_{r,([1:B-1])})$ where $(F,F_{r,([1:B-1])})$ is uniformly distributed over $[1:2^{n\tR}]\times [1:2^{n\tR_r}]^{B-1}$. Observe that this implies that $F$, $F_{r,(1)},F_{r,(2)},\cdots,F_{r,(B-1)}$ are mutually independent.
Then, the protocol proceeds as follows:
\begin{itemize}
\item \emph{Encoding at transmitter:}
\begin{enumerate}
\item The transmitter chooses a message $m$ uniformly distributed over $[1:2^{nBR}]$ and independently of $(F,F_{r,([1:B-1])})$.
\item The transmitter generates $x^{nB}$ according to the conditional pmf $P(x^{nB}|m,f)$ of protocol A. R.v. $X^n_{(b)}$ is transmitted over the channel in the block $b$.
\end{enumerate}
\item  \emph{Encoding at relay:}
\begin{enumerate}
\item In the \emph{first block} the relay generates an i.i.d.\ sequence $x_{r,(1)}^n$ according to the pmf $p(x_r)$ and sends it over the channel.
\item At the end of block $b\in[1:B-1]$, knowing $(f_{r,(b)},\hy^n_{r,([1:b-1])},x^n_{r,([2:b])},y^n_{r,(b)})$ the relay generates $(\hy_{r,(b)}^n,x_{r,(b+1)}^n)$ according to conditional pmf $P(\hy^n_{r,(b)},x^n_{r,(b+1)}|f_{r,(b)},\hy^n_{r,([1:b-1])},x^n_{r,([2:b])},y^n_{r,(b)})$ of protocol A. Then the relay transmits $x_{r,(b+1)}^n$ in the block $b+1$.
\end{enumerate}
\item \emph{Decoding at receiver:}
\begin{enumerate}
\item At the final stage, the receiver acquiring $(y^{nB},f,f_{r,([1:B-1])})$ uses the Slepian-Wolf decoder \\ $P^{SW}(\hat{x}^{nB}|y^{n}_{([1:B])},f,f_{r,([1:B-1])})$ of protocol A to obtain an estimate of $x^{nB}$.
\item We use the output of the SW decoder  for decoding of the messages $M$. In protocol A, we constructed $M$ as a bin index of $X^{nB}$.  
{Here upon obtaining the estimate $\hat{X}^{nB}$ of $X^{nB}$, decoder declares the bin index $\hat{M}$ assigned to $\hat{X}^{nB}$ as the estimate of the transmitted message.}
\end{enumerate}
\end{itemize}
The random pmf induced by the protocol, denoted by $\hat{P}$, factors as
\begin{align}
\hat{P}(x^{nB},x_{r}^{nB},y^{nB}_r,y^{nB},z^{nB},\hy_r^{nB},m,f,f_{r,([1:B-1])}),\hat{x}^{nB})
&
=p^U(m)p^U(f)P(x^{nB}|m,f)p(x_{r,(1)}^n)\left[\prod_{b=1}^{B}\right.\n&\quad
p^U(f_{r,(b)})p(y_{r,(b)}^n,y^n_{(b)},z^n_{(b)}|x^n_{(b)},x^n_{r,(b)})
\n&\quad
\left.P(\hy^n_{r,(b)},x^n_{r,(b+1)}|f_{r,(b)},\hy^n_{r,([1:b-1])},x^n_{r,([2:b])},y^n_{r,(b)})\right]\n&\qquad\times P^{SW}(\hat{x}^{nB}|y^{n}_{[1:B]},f,f_{r,[1:B-1]}).
\label{eq:SRpmf2}
\end{align}\normalsize
We have ignored $\hat{M}$ from the pmf at this stage since they are (random) functions of other random variables.

\emph{Part (2a) of the proof: Sufficient conditions that make the induced pmfs approximately the same}: To find the constraints that imply that the pmf $\hat{P}$ is close to the pmf $P$ in total variation distance,
we start with $P$ and make it close to $\hat{P}$ in a few steps. Comparing the relations for the pmfs $P$ and $\hat{P}$ in \eqref{eq:SRpmf0} and \eqref{eq:SRpmf2}, respectively, suggests that the conditions $P(m,f)\apx{}p^U(m)p^U(f)$ and $P(f_{r,(b)}|\hy^n_{r,([1:b-1])},x^n_{r,([1:b])},y^n_{r,(b)})\apx{}p^U(f_{r,(b)})$ (more precisely, $P(f_{r,(b)},\hy^n_{r,([1:b-1])},x^n_{r,([1:b])},y^n_{r,(b)})\apx{}p^U(f_{r,(b)})P(\hy^n_{r,([1:b-1])},x^n_{r,([1:b])},y^n_{r,(b)})$) are sufficient to approximate $P$ by $\hat{P}$. We relegate the prove of the sufficiency of these approximations to  Appendix \ref{apx:induction}. 

Since $M$ and $F$ are both random bins of $X^{nBR}$, Theorem \ref{thm:re} yields that if
\be
R+\tilde{R}<H(X),\label{eq:SR0}
\ee
then $P(m,f)\apx{}p^U(m)p^U(f)$. Also $F_{r,(b)}$ is a random bin number assigned to $(\hY^n_{r,([1:b])},X^n_{r,([2:b+1])})$. Theorem \ref{thm:re} implies that the following constraint is sufficient for the (nearly) independence of  $F_{r,(b)}$ and $(\hY^n_{r,([1:b-1])},X^n_{r,([1:b])},Y^n_{r,(b)})$,
\begin{align}
\tR_r&<
H(\hY_{r,([1:b])},X_{r,([2:b+1])}|\hY_{r,([1:b-1])},X_{r,([1:b])},Y_{r,(b)})\n
        &=H(\hY_{r,(b)}|X_{r,(b)}Y_{r,(b)})+H(X_{r,(b+1)})\n
        &=H(\hY_r|X_rY_r)+H(X_r),\label{eq:SR1}
\end{align}
where we used the independence among blocks and the fact that the pmf of r.v.'s is the same over all the blocks, {that is, $p(x_{(b)},x_{r,(b)},\hy_{r,(b)},y_{r,(b)},y_{(b)},z_{(b)})=p(x,x_{r},\hy_{r},y_{r},y,z)$. Thus we can write 
$H(\hY_{r,(b)}|X_{r,(b)}Y_{r,(b)})+H(X_{r,(b+1)})$ as $H(\hY_r|X_rY_r)+H(X_r)$. This observation will be used in the rest of the proof.}

%
\emph{Part (2b) of the proof: Sufficient conditions that make the Slepian-Wolf decoder succeed}: The next step is {to see that when the Slepian-Wolf decoder  of protocol A can reliably decode the transmitted sequence  $X^{nB}=X^n_{(1:B)}$.} Setting   $X_1=X_{(1:B)}$, $Y=Y_{(1:B)}$, $X_b=(\hY_{r,(1:b-1)},X_{r,(2:b)})$ for $b=2,\cdots,B$ in Lemma \ref{le:2000} gives the following constraints for the success of the decoder:\footnote{{Here we only write the constraints associated to the subsets of $[2:B]$ of the form $[2:i], 2\le i\le B$ and omit the others, because the unwritten constraints are redundant. It is because the random variables $X_b$ are \emph{nested} r.v.'s. Each subset of $[2:B]$ can be written as $\ms=\{ m_1,m_2,\cdots,m_k\}$ where $\{m_j\}_{j=1}^k$ is an increasing sequence. In this case $X_{\ms}=(\hY_{r,(1:m_k-1)},X_{r,(2:m_k)})=X_{[2:m_k]}$ and the corresponding constraint is implied by the constraint associated to $[2:m_k]$.}}
\begin{align}
B\tR&>H(X_{(1:B)}|\hY_{r,(1:B-1)}X_{r,(2:B)}Y_{(1:B)})\n
&=H(X_{(1)}|\hY_{r,(1)}Y_{(1)})+\sum_{b=2}^{B-1}H(X_{(b)}|\hY_{r,(b)}X_{r,(b)}Y_{(b)})+H(X_B|X_{r,B}Y_B)\n
&=H(X|\hY_rY)+(B-2)H(X|\hY_{r}X_{r}Y)+H(X|X_rY)\n
&=BH(X|\hY_{r}X_{r}Y)+C_1,\label{eq:swr1}\\
\mathsf{for} ~~b=1:B-2,\quad B\tR+b\tR&> H\left(X_{(1:B)}\hY_{r,(B-b:B-1)}X_{r,(B-b+1:B)}|Y_{(1:B)},\hY_{r,(1:B-b-1)}X_{r,(2:B-b)}\right)\n
&=H(X|\hY_rY)+(B-b-2)H(X|\hY_rX_rY)+H(X\hY_r|X_rY)\n&\qquad+(b-1)H(X\hY_rX_r|Y)+H(XX_r|Y)\n&
=(B-b)H(X|\hY_rX_rY)+bH(X\hY_rX_r|Y)+C_1,\label{eq:swr2}\\
B\tR+(B-1)\tR_r&>H(X_{(1:B)}\hY_{r,(1:B-1)}X_{r,(2:B)}|Y_{(1:B)})\n
&=H(X|\hY_rX_rY)+(B-1)H(X\hY_rX_r|Y)+C_2,\label{eq:swr3}
\end{align}
where in \eqref{eq:swr1}-\eqref{eq:swr3} where we again use the independence among blocks and the fact that the pmf of r.v.'s is the same over all the blocks. Moreover $C_1$ and $C_2$ are finite constant not depending on $B$ (formed by taking leftover terms all together as a constant). Now \eqref{eq:swr1}-\eqref{eq:swr3} yield that (for sufficiently large $B$) the following constraints are sufficient for the success of the SW decoders:
\begin{align}
\tR&>H(X|X_r\hat{Y}_rY),\label{eq:SW1}
\\
\tR+\tR_r&>H(XX_r\hat{Y}_r|Y).\label{eq:SW2}
\end{align}
Using the approximation of $P$ by $\hat{P}$ and similar argument to the one used in the previous models (for example, equations \eqref{eq:Spmf1.5}-\eqref{eq:sbc-pmf1} for wiretap broadcast channel), we get
\be
\hat{P}(m,f,f_{r,(1:b)},\hat{m},z^n_{(1:B)})\apx{}P(m,f,f_{r,(1:b)},z^n_{(1:B)})\ind\{\hat{m}=m\}.\label{eq:nnc-pmf1}
\ee
Before we consider the secrecy part of problem, we assume that there is no eavesdropper, i.e. $Z=constant$. So we deal with the relay channel. It can be easily seen that the constraints \eqref{eq:SR0}, \eqref{eq:SR1}, \eqref{eq:SW1} and \eqref{eq:SW2} imply the noisy network coding inner bound for the relay channel. In the sequel, we show how one can easily find an extension of noisy network coding inner bound for wiretap relay channel.

\emph{Part (2c) of the proof: Sufficient conditions that make the protocols secure}: We must take care of independence of $M$, and $(Z^n_{([1:B])},F, F_{r,([1:B-1])})$ consisting of the wiretapper's output and the shared randomness. We use Corollary \ref{cor:OSRB} with two different choices for $\mv$ to get two different sufficient conditions for  (nearly) mutual independence  among $M$, $F$ and $(Z^n_{([1:B])}, F_{r,([1:B-1])})$.
 In other words, we find constraints that imply
\be
P(m,f,f_{r,(1:b)},z^n_{(1:B)})\apx{}p^U(m)p^U(f)P(f_{r,(1:B)},,z^n_{(1:B)}).\label{eq:nnc-pmf2}
\ee
Using equations \eqref{eq:nnc-pmf1} and \eqref{eq:nnc-pmf2} and the third part of Lemma \ref{le:total} we have
\be
\hat{P}(m,f,f_{r,(1:b)},\hat{m},z^n_{(1:B)})\apx{}p^U(m)p^U(f)P(f_{r,(1:B)},,z^n_{(1:B)}))\ind\{\hat{m}=m\}.\label{eq:nnc-pmf3}
\ee\begin{itemize}
\item Setting $T=B$, $\mv=\emptyset$, $Z=Z_{(1:B)}$, $X_1=X_{(1:B)}$, $X_b=(\hY_{r,(1:b-1)},X_{r,(2:b)})$ for $b=2,\cdots,B$ in Corollary \ref{cor:OSRB} shows  that the following constraints imply the desired independence,
\begin{align}
B(R+\tR)&<H(X_{(1:B)}|Z_{(1:B)})=BH(X|Z),\label{eq:SS0}\\
\mathsf{for} ~~b=1:B-1,~~
B(R+\tR)+b\tR_r&<H(X_{(1:B)}\hY_{r,(1:b)}X_{r,(2:b+1)}|Z_{(1:B)})\n
&=H(X\hY_r|Z)+(b-1)H(X\hY_rX_r|Z)+H(XX_r|Z)\n&\qquad\qquad+(B-b+1)H(X|Z)\n
&=(B-b)H(X|Z)+bH(X\hY_rX_r|Z)+C_3,\label{eq:osz}
\end{align}
where $C_3$ is a finite constant number (not depending on $B$). Observe that if the following constraint and \eqref{eq:SS0} hold, then for sufficiently large  $B$ the constraint \eqref{eq:osz} is satisfied,
\begin{align}
R+\tR+\tR_r&<H(XX_r\hat{Y}_r|Z).\label{eq:SS1}
\end{align}
\item Setting $T=B$, $\mv=[2:B]$, $Z=Z_{(1:B)}$, $X_1=X_{(1:B)}$, $X_b=(\hY_{r,(1:b-1)},X_{r,(2:b)})$ for $b=2,\cdots,B$ in Corollary \ref{cor:OSRB} yields  the following constraint for having the desired independence,
\be
B(R+\tR)<H(X_{(1:B)}|\hY_{r,(1:B-1)}X_{r,(2:B)}Z_{(1:B)})=(B-2)H(X|\hY_rX_rZ)+C_4,\label{eq:osz1}
\ee
where $C_4$ is a finite constant number (not depending on $B$). Observe that if the following constraint holds, then for sufficiently large  $B$ the constraint \eqref{eq:osz1} is satisfied,
\be
R+\tR<H(X|\hY_rX_rZ),\label{eq:SS2}
\ee
\end{itemize}
\color{black}

\emph{Part (3) of the proof: Eliminating the shared randomness {$(F,F_{r,([1:B-1])})$ without disturbing the secrecy and reliability requirements}: }
 This can be done by applying the same argument  as in the part (3) of the proof of wiretap broadcast channel to \eqref{eq:nnc-pmf3} and thus omitted.

Finally, identifying $p(x^{nB}|m,f)$ as the encoder, $P(\hy^n_{r,(b)},x^n_{r,(b+1)}|f_{r,(b)},\hy^n_{r,([1:b-1])},x^n_{r,([2:b])},y^n_{r,(b)})$ as the relay encoder for block $b=2,\cdots,B$, and the Slepian-Wolf decoder as decoder results in reliable and secure encoders-decoder.

Applying FME  on \eqref{eq:SR0}, \eqref{eq:SR1}, \eqref{eq:SW1}, \eqref{eq:SW2} and \eqref{eq:SS2} results in the first term in the maximization of \eqref{eq:nncbound}. Applying FME  on \eqref{eq:SR0}, \eqref{eq:SR1}, \eqref{eq:SW1}, \eqref{eq:SW2}, \eqref{eq:SS0} and \eqref{eq:SS1} results in the second term in the maximization of \eqref{eq:nncbound}.
\end{proof}

\section{Covering and Packing: Revisited}\label{s:3}
Most of the achievability proofs in NIT are based on two primitive lemmas, namely packing lemma and covering lemma \cite{elgamal}. Thus it would be interesting to see how our probabilistic proofs relate to these lemmas. We show that Theorem 1 implies a certain form of multivariate covering (but not exactly the one mentioned in \cite{elgamal}). The discussion on packing lemma  is similar and hence omitted.
\par\emph{Multivariate covering}: We prove a version of multivariate covering that is similar to Marton coding \cite{elgamal}. Consider r.v.'s ${X_{[1:T]}Z}$. Roughly speaking, we want to prove that under certain conditions on $R_i$'s, there exists a partition of set of typical sequences of $\mx_i^n$ into $2^{nR_i}$ bins of size $2^{nR'_i}=2^{n(H(X_i)-R_i)}$ for $i=1:T$, such that if we choose any of the partitions of $\mx_1^n$, and any of the partitions of $\mx_2^n$, etc, we can find sequences $x_1^n$, $x_2^n$,..., $x_T^n$ in these partitions such that they are jointly typical with each other and with $Z^n$ with high probability, for almost all choice of partitions. The conditions imposed on the rate of the bins, $R'_i$ are given in inequality \eqref{eq:f}. This is a generalization of the mutual information terms showing up in Marton coding and match the ones reported in \cite{elgamal}.

To show this let $\typ[X_{[1:T]}Z]$ be the set of strongly typical sequences w.r.t. $p_{X_{[1:T]}Z}$. Theorem \ref{thm:re} says that if
\begin{align}\label{eq:f}\forall \ms\subseteq[1:T]: ~~\sum_{t\in\ms} R'_t>\sum_{t\in\ms}H(X_t)-H(X_{\ms}|Z),\end{align}
then $P(b_{[1:T]},z^n)\apx{}p^U(b_{[1:T]})p(z^n)$.
One can show that with high probability the number of the typical sequences assigned to \emph{each bin} $b_i\in[1:2^{nR_i}]$ is about $2^{nR'_i}$, for $i=1:T$, provided that $R_i<H(X_i)$ (for example, through the same lines as in the proof of balanced coloring lemma in \cite{ahlswede}). This fact alongside with Theorem \ref{thm:re} implies that there exists a fixed binning with the corresponding pmf $\bar{p}$ such that $\bar{p}(z^n,b_{[1:T]})\apx{}p^U(b_{[1:T]})p(z^n)$ and the number of the typical sequences assigned to each bin $b_i\in[1:2^{nR_i}]$ is about $2^{nR'_i}$, provided that \eqref{eq:f} is satisfied. Let $q(b_{[1:T]},x^n_{[1:T]},z^n)=p^U(b_{[1:T]})p(z^n)\bar{p}(x^n_{[1:T]}|b,z^n)$. Since $\bar{p}(x^n_{[1:T]},z^n)=p(x^n_{[1:T]},z^n)$, we have $\bar{p}(\typ[X_{[1:T]}Z]^c)<\epsilon_n\rightarrow 0$. Markov inequality and $q\apx{}\bar{p}$ imply that $q_{B_{[1:T]}}(\{b_{[1:T]}:q(\typ[X_{[1:T]}Z]^c|b_{[1:T]})>\sqrt{\epsilon_n}\})\rightarrow 0$. Therefore for almost all the choices of $b_{[1:T]}$, the probability of the typical set conditioned on $b_{[1:T]}$ is large, implying a non-zero intersection of the typical set and the product partition set.

\appendix

\section{Proof of Theorem \ref{thm:re}}\label{apx:osrb}

We prove a one-shot version of Theorem \ref{thm:re} via bounding the  fidelity between two pmfs over a same alphabet.
\begin{definition}
For two pmfs $p_X$ and $q_X$, the \emph{fidelity} (or Bhattacharyya coefficient) is defined as:
\be
F(p_X;q_X)=\sum_{x\in\mx}\sqrt{p_X(x)q_X(x)}.
\ee
\end{definition}
Fidelity measures the similarity between two pmfs and has wide applications in quantum information theory. We always have $0\le F(p_X;q_X)\le 1$.
The following well-known lemma gives an upper bound on the total variation distance in terms of fidelity (a similar statement holds for fidelity and trace distance of two arbitrary quantum states).
\begin{lemma}
For two pmf $p_X$ and $q_X$, we have
\[
\tv{p_X-q_X}\le\sqrt{1-F^2(p_X;q_X)}.
\]
\end{lemma}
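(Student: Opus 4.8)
The plan is to express the total variation distance in terms of the square roots $\sqrt{p_X(x)}$ and $\sqrt{q_X(x)}$ and then apply the Cauchy--Schwarz inequality; the quantity $1-F(p_X;q_X)$ is (half) the squared Hellinger distance, and the inequality in question is the standard Hellinger-vs-total-variation bound. First I would write, using the identity $|a^2-b^2| = |a-b|\,(a+b)$ for nonnegative reals with $a=\sqrt{p_X(x)}$ and $b=\sqrt{q_X(x)}$,
\[
\tv{p_X-q_X} = \frac12\sum_{x\in\mx}\bigl|\sqrt{p_X(x)}-\sqrt{q_X(x)}\bigr|\cdot\bigl(\sqrt{p_X(x)}+\sqrt{q_X(x)}\bigr),
\]
where the factor $\tfrac12$ is dictated by the paper's convention that the total variation distance is half the $\ell_1$ distance. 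Applying Cauchy--Schwarz to the two nonnegative factors then gives
\[
\tv{p_X-q_X} \le \frac12\Bigl(\sum_{x\in\mx}\bigl(\sqrt{p_X(x)}-\sqrt{q_X(x)}\bigr)^2\Bigr)^{1/2}\Bigl(\sum_{x\in\mx}\bigl(\sqrt{p_X(x)}+\sqrt{q_X(x)}\bigr)^2\Bigr)^{1/2}.
\]

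Next I would evaluate the two sums by expanding the squares and using $\sum_x p_X(x)=\sum_x q_X(x)=1$ together with the definition $\sum_x\sqrt{p_X(x)q_X(x)}=F(p_X;q_X)$: the first sum equals $2-2F(p_X;q_X)$ and the second equals $2+2F(p_X;q_X)$. Substituting,
\[
\tv{p_X-q_X} \le \frac12\sqrt{2-2F(p_X;q_X)}\,\sqrt{2+2F(p_X;q_X)} = \sqrt{\bigl(1-F(p_X;q_X)\bigr)\bigl(1+F(p_X;q_X)\bigr)} = \sqrt{1-F^2(p_X;q_X)},
\]
which is the asserted bound; since $0\le F(p_X;q_X)\le 1$, the right-hand side is a well-defined nonnegative real number.

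There is essentially no serious obstacle here; the only points needing a word of care are the bookkeeping of the $\tfrac12$ factor (it is exactly what makes the constant come out as $1$ rather than $2$) and, when $\mx$ is countably infinite, the observation that every series appearing above converges absolutely — each is dominated term-by-term by convergent series built from $p_X$ and $q_X$ — so that the rearrangements and the application of Cauchy--Schwarz are justified. If desired, one can also record the reverse-type companion bound $1-F(p_X;q_X)\le\tv{p_X-q_X}$, obtained from $(\sqrt{p_X(x)}-\sqrt{q_X(x)})^2\le|\sqrt{p_X(x)}-\sqrt{q_X(x)}|\,(\sqrt{p_X(x)}+\sqrt{q_X(x)}) = |p_X(x)-q_X(x)|$, but only the upper bound stated in the lemma is needed in the sequel.
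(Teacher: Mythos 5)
Your proof is correct: the factorization $|p_X(x)-q_X(x)|=\bigl|\sqrt{p_X(x)}-\sqrt{q_X(x)}\bigr|\bigl(\sqrt{p_X(x)}+\sqrt{q_X(x)}\bigr)$, Cauchy--Schwarz, and the evaluations $2-2F$ and $2+2F$ of the two sums give exactly $\tv{p_X-q_X}\le\sqrt{1-F^2(p_X;q_X)}$, and your remarks about absolute convergence on a countable alphabet are all that is needed there. Note that the paper itself gives no proof of this lemma — it is invoked as a well-known fact (the classical analogue of the fidelity--trace-distance inequality from quantum information) — so your argument, which is the standard Hellinger-distance derivation, simply supplies the omitted justification and is entirely consistent with how the lemma is used in the paper.
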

Using Jensen's inequality for the concave function $f(x)=\sqrt{1-x^2}$ and the above lemma, we get the following upper bound on the expected total variation between two random pmfs $P_X$ and $Q_X$ via the expected fidelity.
\begin{lemma}\label{le:7}
For two random pmf $P_X$ and $Q_X$, we have
\[
\e\tv{P_X-Q_X}\le\sqrt{1-\left(\e\left[F(P_X;Q_X)\right]\right)^2}.
\]

In particular if for two sequences $P^{(n)}_{X^{(n)}}$ and $Q^{(n)}_{X^{(n)}}$ of random pmfs,  $\e\left[F(P^{(n)}_{X^{(n)}};Q^{(n)}_{X^{(n)}})\right]\rightarrow 1$, then $\e\tv{P^{(n)}_{X^{(n)}}-Q^{(n)}_{X^{(n)}}}\rightarrow 0$.
\end{lemma}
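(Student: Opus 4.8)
The plan is short: I would get the random-pmf inequality by applying the deterministic fidelity--total-variation bound stated just above pointwise on the sample space, and then closing with a single use of Jensen's inequality, exactly as the remark preceding the statement suggests. Concretely, write $F(\omega):=F\big(P_X(\cdot;\omega);Q_X(\cdot;\omega)\big)=\sum_{x\in\mx}\sqrt{P_X(x;\omega)Q_X(x;\omega)}$ and $f(t):=\sqrt{1-t^2}$; the goal is $\e\tv{P_X-Q_X}\le f(\e[F])$.

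First I would fix a realization $\omega\in\Omega$. Then $P_X(\cdot;\omega)$ and $Q_X(\cdot;\omega)$ are genuine pmfs on $\mx$, so the preceding (non-random) lemma gives the pointwise bound
\[
\tv{P_X(\cdot;\omega)-Q_X(\cdot;\omega)}\;\le\;\sqrt{1-F(\omega)^2}\;=\;f\big(F(\omega)\big).
\]
Here $\omega\mapsto F(\omega)$ is measurable, being a countable sum of measurable functions, and it is $[0,1]$-valued since fidelity always lies in $[0,1]$; likewise $\omega\mapsto\tv{P_X(\cdot;\omega)-Q_X(\cdot;\omega)}$ is a bounded measurable random variable, so taking expectations preserves the inequality and yields $\e\tv{P_X-Q_X}\le\e\big[f(F)\big]$. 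Now I would invoke Jensen: the function $f(t)=\sqrt{1-t^2}$ is concave on $[0,1]$ (its second derivative $-(1-t^2)^{-3/2}$ is negative on $(0,1)$), hence $\e\big[f(F)\big]\le f\big(\e[F]\big)$. Combining,
\[
\e\tv{P_X-Q_X}\;\le\;\e\big[f(F)\big]\;\le\;f\big(\e[F]\big)\;=\;\sqrt{1-\big(\e[F(P_X;Q_X)]\big)^2},
\]
which is the asserted inequality; the ``in particular'' clause follows by applying this to the sequence $\big(P^{(n)}_{X^{(n)}},Q^{(n)}_{X^{(n)}}\big)$ and noting that $\e[F]\to 1$ forces $\sqrt{1-(\e[F])^2}\to 0$.

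There is no genuine obstacle here, only two small points to handle carefully. The first is that Jensen must be applied to $f$ on the interval that actually contains the values of the fidelity, namely $[0,1]$, where $f$ is concave (on all of $\Reals$ the substitution would be ill-posed); this is why the inequality points in the useful direction. The second is the routine measurability and expectation--sum interchange coming from the countably infinite alphabet, which is dispatched by monotone convergence / Tonelli. If one prefers to avoid concavity of $f$ itself, the same conclusion factors through $\e\big[\sqrt{1-F^2}\big]\le\sqrt{1-\e[F^2]}\le\sqrt{1-(\e[F])^2}$, using concavity of $u\mapsto\sqrt{1-u}$ on $[0,1]$ for the first inequality and $\e[F^2]\ge(\e[F])^2$ for the second.
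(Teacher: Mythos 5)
Your proof is correct and follows essentially the same route as the paper: apply the deterministic fidelity--total-variation bound pointwise in $\omega$, take expectations, and conclude with Jensen's inequality for the concave function $f(t)=\sqrt{1-t^2}$ on $[0,1]$. The added care about measurability and the alternative two-step Jensen argument are fine but not needed beyond what the paper's one-line proof already does.
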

\begin{definition} A distributed random binning of correlated   sources $X_{[1:T]},{Z}$ consists of  a set of random mappings $\mb_i: \mx\rightarrow [1:\mathsf{M}_i]$, $i\in[1:T]$, in which $\mb_i$ maps each sequence of $\mx_i$ uniformly and independently to the set $[1:\mathsf{M}_i]$. We denote the random variable $\mb_t(X_t)$ by $B_t$. A random distributed  binning induces the following \emph{random pmf} on the set $\mx_{[1:T]}\times\mz\times\prod_{t=1}^T [1:\mathsf{M}_t]$,
\[
P(x_{[1:T]},z,b_{[1:T]})=p_{X_{[1:T]},Z}(x_{[1:T]},z)\prod_{t=1}^T\ind\{\mb_t(x_t)=b_t\}.
\]
\end{definition}

The following theorem provides a lower bound on the expected fidelity between the induced pmf $P(b_{[1:T]},z)$ on the r.v.'s $(B_{[1:T]},Z)$ and the desired pmf $q(b_{[1:T]},z)=p^U(b_{[1:T]})p(z)$.
\begin{theorem}\label{thm:oneshot}
The expected fidelity between the induced pmf $P(b_{[1:T]},z)$  and the desired pmf $q(b_{[1:T]},z)=p^U(b_{[1:T]})p(z)$ is bounded from below by
\be
\e F(P(b_{[1:T]},z);q(b_{[1:T]},z))\ge\e_{X_{[1:T]}Z}\sqrt{\dfrac{1}{1+\sum_{\emptyset\neq\ms\subseteq[1:T]}\mathsf{M}_{\ms}2^{-h(X_{\ms}|Z)}}},
\ee
where $\mathsf{M}_{\ms}=\prod_{v\in\ms}\mathsf{M}_v$ and the conditional information $h(x|y)$ is defined by $h(x|y):=\log \dfrac{1}{p_{X|Y}(x|y)}$.
\end{theorem}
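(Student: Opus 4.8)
The plan is to lower-bound the expected fidelity $\e_{\mb}F\bigl(P(b_{[1:T]},z);q(b_{[1:T]},z)\bigr)$ directly. The one calculational idea that matters is the pair ``$\sqrt{W}=W/\sqrt{W}$ followed by Jensen's inequality for the convex map $t\mapsto t^{-1/2}$''; this is what replaces the crude collision (R\'enyi-$2$) quantities that a second-moment estimate would produce by the Shannon-type quantities $2^{-h(X_{\ms}|Z)}=p_{X_{\ms}|Z}(X_{\ms}|Z)$ appearing in the statement. Abbreviate $x_{[1:T]}$ to $x$ and set $\mathsf{M}_{[1:T]}:=\prod_{t=1}^{T}\mathsf{M}_t$. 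Writing $q(b_{[1:T]},z)=p^{U}(b_{[1:T]})p(z)=p(z)/\mathsf{M}_{[1:T]}$ and $P(b_{[1:T]},z)=p(z)P(b_{[1:T]}|z)$ with $P(b_{[1:T]}|z)=\sum_{x}p(x|z)\prod_t\ind\{\mb_t(x_t)=b_t\}$, one gets $F=\tfrac{1}{\sqrt{\mathsf{M}_{[1:T]}}}\sum_z p(z)\sum_{b_{[1:T]}}\sqrt{P(b_{[1:T]}|z)}$; since the law of the random binning is invariant under relabelling of bins, $\e_{\mb}\sqrt{P(b_{[1:T]}|z)}$ is the same for every $b_{[1:T]}$, so $\e_{\mb}F=\sqrt{\mathsf{M}_{[1:T]}}\,\e_{Z}\,\e_{\mb}\sqrt{P(\mathbf 1|Z)}$ for a fixed bin $\mathbf 1$.

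For the main estimate, fix $z$ and write $\sqrt{P(\mathbf 1|z)}=\sum_{x}p(x|z)\ind\{\mb(x)=\mathbf 1\}/\sqrt{P(\mathbf 1|z)}$ (sum over the support of $p(\cdot|z)$, convention $0/0=0$), where $\mb(x)=\mathbf 1$ abbreviates $\mb_t(x_t)=1$ for all $t$. Taking $\e_{\mb}$ and conditioning on $\{\mb(x)=\mathbf 1\}$ — an event of probability $\mathsf{M}_{[1:T]}^{-1}$ on which $P(\mathbf 1|z)\ge p(x|z)>0$ — Jensen for $t\mapsto t^{-1/2}$ gives $\e_{\mb}\sqrt{P(\mathbf 1|z)}\ge \mathsf{M}_{[1:T]}^{-1}\sum_{x}p(x|z)\bigl(\e_{\mb}[P(\mathbf 1|z)\mid \mb(x)=\mathbf 1]\bigr)^{-1/2}$. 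A short collision count finishes this step: for $x'\neq x$ differing from $x$ exactly on $\ms(x,x'):=\{t:x'_t\neq x_t\}$, uniformity and independence of the $\mb_t$ give $\Pr[\mb(x')=\mathbf 1\mid \mb(x)=\mathbf 1]=\prod_{t\in\ms(x,x')}\mathsf{M}_t^{-1}=\mathsf{M}_{\ms(x,x')}^{-1}$, so grouping the $x'$ by $\ms$ and dropping the constraints $x'_t\neq x_t$ yields $\e_{\mb}[P(\mathbf 1|z)\mid \mb(x)=\mathbf 1]\le p(x|z)+\sum_{\emptyset\neq\ms\subseteq[1:T]}\mathsf{M}_{\ms}^{-1}\,p_{X_{\ms^c}|Z}(x_{\ms^c}|z)$, since $\sum_{x':\,x'_{\ms^c}=x_{\ms^c}}p(x'|z)\le p_{X_{\ms^c}|Z}(x_{\ms^c}|z)$.

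Assembling, and multiplying through by the $\sqrt{\mathsf{M}_{[1:T]}}$ from the reduction step, one obtains $\e_{\mb}F\ge \sum_{x,z}p(x,z)\big/\sqrt{\ \mathsf{M}_{[1:T]}\,p(x|z)+\sum_{\emptyset\neq\ms}\mathsf{M}_{[1:T]}\mathsf{M}_{\ms}^{-1}\,p(x_{\ms^c}|z)\ }$. Now $\mathsf{M}_{[1:T]}\mathsf{M}_{\ms}^{-1}=\mathsf{M}_{\ms^c}$, and as $\ms$ runs over the nonempty subsets of $[1:T]$ the complement $\ms^c$ runs over all proper subsets; together with the leading term $\mathsf{M}_{[1:T]}\,p(x|z)=\mathsf{M}_{[1:T]}\,p_{X_{[1:T]}|Z}(x_{[1:T]}|z)$ (the case $\ms^c=[1:T]$) the radicand becomes $\sum_{\mathcal T\subseteq[1:T]}\mathsf{M}_{\mathcal T}\,p_{X_{\mathcal T}|Z}(x_{\mathcal T}|z)=1+\sum_{\emptyset\neq\mathcal T}\mathsf{M}_{\mathcal T}\,p_{X_{\mathcal T}|Z}(x_{\mathcal T}|z)$, the $\mathcal T=\emptyset$ term being $\mathsf{M}_{\emptyset}p_{X_{\emptyset}|Z}\equiv 1$. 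Since $p_{X_{\mathcal T}|Z}(x_{\mathcal T}|z)=2^{-h(x_{\mathcal T}|z)}$, the right-hand side is exactly $\e_{X_{[1:T]}Z}\sqrt{1/\bigl(1+\sum_{\emptyset\neq\ms\subseteq[1:T]}\mathsf{M}_{\ms}2^{-h(X_{\ms}|Z)}\bigr)}$, which is the assertion.

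The crux — and the only genuinely delicate point — is performing the $\e_{\mb}[\sqrt{\cdot}]$ estimate the right way: one must split off the ``diagonal'' contribution $p(x|z)$ inside $P(\mathbf 1|z)$, rewrite the square root as $W/\sqrt{W}$, and apply Jensen to the \emph{reciprocal} square root; applying concavity to $\sqrt{\cdot}$ itself only reproduces the trivial bound $F\le 1$. This is precisely what keeps the estimate governed by $p_{X_{\ms}|Z}(X_{\ms}|Z)$ rather than by collision probabilities, hence what makes the resulting region the Slepian--Wolf-type $\sum_{t\in\ms}R_t<H(X_{\ms}|Z)$ of Theorem \ref{thm:re} in the i.i.d.\ limit rather than a weaker R\'enyi-$2$ analogue. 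Everything else — the collision count, the $0/0$ convention and support restriction, and the $\ms\leftrightarrow\ms^{c}$ reindexing — is routine bookkeeping.
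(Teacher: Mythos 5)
Your proof is correct and follows essentially the same route as the paper's: the symmetry reduction to a single bin $1_{[1:T]}$, the rewriting $\sqrt{W}=W/\sqrt{W}$, Jensen's inequality for the convex map $t\mapsto t^{-1/2}$ applied conditionally on the binning of $x_{[1:T]}$, the collision count grouped by the set of differing coordinates with the constraints $\bar{x}_t\neq x_t$ relaxed, and the final reindexing over complements $\ms\leftrightarrow\ms^c$. The only cosmetic difference is that you condition on the event $\{\mb(x_{[1:T]})=1_{[1:T]}\}$ rather than on the random variables $\mb(x_{[1:T]})$, which lets you discard the indicator factors $\ind\{\mb(x_{\ms^c})=1_{\ms^c}\}$ one step earlier than the paper does; the resulting bound is identical.
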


\begin{proof}
For the sake of brevity, we use the following simplified notations. We let $\mv=[1:T]$. Also, we let $\ind\{\mb(x_{\mv})=b_{\mv}\}=\prod_{t\in\mv}\ind\{\mb(x_{t})=b_{t}\}$. Also we use $1_{\ms}$ to denote an all-one vector of length $|\ms|$.  Now consider
\begin{align}
\e F(P(b_{\mv},z);q(b_{\mv},z))&=\e\sum_{b_{\mv},z}\sqrt{\sum_{x_{\mv}}p(x_{\mv},z)\ind\{\mb(x_{\mv})=b_{\mv}\}.\frac{1}{\mathsf{M}_{\mv}}p(z)}\\
                                                 &=\e\sum_{z}\sqrt{\mathsf{M}_{\mv}\sum_{x_{\mv}}p(x_{\mv},z)\ind\{\mb(x_{\mv})=1_{\mv}\}.p(z)}\label{eq:jen1}\\
                                                 &=\e\sum_{x_{\mv},z}p(x_{\mv},z)\ind\{\mb(x_{\mv})=1_{\mv}\}\sqrt{\dfrac{\mathsf{M}_{\mv}}{\sum_{\bar{x}_{\mv}}p(\bar{x}_{\mv}|z)\ind\{\mb(\bar{x}_{\mv})=1_{\mv}\}}}\label{eq:jen2}\\
                                                 &=\sum_{x_{\mv},z}p(x_{\mv},z)\e_{\mb(x_{\mv})}\e_{\mb|\mb(x_{\mv})}\ind\{\mb(x_{\mv})=1_{\mv}\}\sqrt{\dfrac{\mathsf{M}_{\mv}}{\sum_{\bar{x}_{\mv}}p(\bar{x}_{\mv}|z)\ind\{\mb(\bar{x}_{\mv})=1_{\mv}\}}}\\
                                                &\ge\sum_{x_{\mv},z}p(x_{\mv},z)\e_{\mb(x_{\mv})}\ind\{\mb(x_{\mv})=1_{\mv}\}\sqrt{\dfrac{\mathsf{M}_{\mv}}{\e_{\mb|\mb(x_{\mv})}\sum_{\bar{x}_{\mv}}p(\bar{x}_{\mv}|z)\ind\{\mb(\bar{x}_{\mv})=1_{\mv}\}}}\label{eq:jen2}\\
                                                &\ge\sum_{x_{\mv},z}p(x_{\mv},z)\e_{\mb(x_{\mv})}\ind\{\mb(x_{\mv})=1_{\mv}\}\sqrt{\dfrac{\mathsf{M}_{\mv}}{\sum_{\ms\subseteq\mv}\mathsf{M}_{\ms}^{-1}p({x}_{\ms^c}|z)\ind\{\mb({x}_{\ms^c})=1_{\ms^c}\}}}\label{eq:jen3}\\
                                                &=\sum_{x_{\mv},z}p(x_{\mv},z)\sqrt{\dfrac{1}{\sum_{\ms\subseteq\mv}\mathsf{M}_{\ms^c}p({x}_{\ms^c}|z)}}\\
                                                &=\e_{X_{\mv}Z}\sqrt{\dfrac{1}{1+\sum_{\emptyset\neq\ms\subseteq\mv}\mathsf{M}_{\ms}2^{-h(X_{\ms}|Z)}}}
\end{align}
where \eqref{eq:jen1} is due to the symmetry and \eqref{eq:jen2} follows from the Jensen inequality for the convex function $f(x)=\dfrac{1}{\sqrt{x}}$ on $\mathbb{R}_+$. To obtain \eqref{eq:jen3} from \eqref{eq:jen2}, we partition the tuples in the set $\mx_{\mv}$ according to its difference with the tuple $x_{\mv}$. Define $\mn_{\ms}:=\{\bar{x}_{\mv}:\bar{x}_{\ms^c}=x_{\ms^c}, \forall v\in\ms:\ \bar{x}_v\neq x_v\}$, {i.e. given a subset $\ms\subset\mv$ and a sequence $x_\mv$ we define $\mn_{\ms}$ as the set of all sequences $\bar{x}_{\mv}$ whose coordinate $\bar{x}_v$ is equal to $x_v$ if and only if $v\notin\ms$}. Then $\mx_{\mv}=\cup_{\ms\subseteq\mv}\mn_{\ms}$ and for each $\bar{x}_{\mv}\in\mn_{\ms}$, we have
\bes
\e_{\mb|\mb(x_{\mv})}\ind(\mb(\bar{x}_{\mv})=1_{\mv})=\e_{\mb(\bar{x}_{\mv})|\mb(x_{\mv})}\ind(\mb(\bar{x}_{\ms})=1_{\ms},\mb(x_{\ms^c})=1_{\ms^c})=\mathsf{M}_{\ms}^{-1}\ind(\mb({x}_{\ms^c})=1_{\ms^c}),
\ees
where we have used the fact that $[\mb(\bar{x}_v):v\in\ms]$ and $\mb(x_{\mv})$ are mutually independent. Substituting this in \eqref{eq:jen2} gives,
\begin{align}
\e_{\mb|\mb(x_{\mv})}\sum_{\bar{x}_{\mv}}p(\bar{x}_{\mv}|z)\ind\{\mb(\bar{x}_{\mv})=1_{\mv}\}&=\sum_{\ms\subseteq\mv}\sum_{\bar{x}_{\mv}\in\mn_{\ms}}\e_{\mb(\bar{x}_{\mv})|\mb(x_{\mv})}p(\bar{x}_{\mv}|z)\ind\{\mb(\bar{x}_{\mv})=1_{\mv}\}\n
&=\sum_{\ms\subseteq\mv}\sum_{\bar{x}_{\mv}\in\mn_{\ms}}p(\bar{x}_{\mv}|z)\mathsf{M}_{\ms}^{-1}\ind(B({x}_{\ms^c})=1_{\ms^c})\n
&\le\sum_{\ms\subseteq\mv}\sum_{\bar{x}_{\ms}}p(\bar{x}_{\ms^c},x_{\ms}|z)\mathsf{M}_{\ms}^{-1}\ind(B({x}_{\ms^c})=1_{\ms^c})\label{eq:jen4}\\
&=\sum_{\ms\subseteq\mv}\mathsf{M}_{\ms}^{-1}p(x_{\ms}|z)\ind(B({x}_{\ms^c})=1_{\ms^c}),
\end{align}
where \eqref{eq:jen4} follows from the definition of $\mn_{\ms}$ by relaxing the constraint $(\bar{x}_v\neq x_v,v\in\ms)$ from its definition.
\end{proof}

We are now ready to prove Theorem \ref{thm:re} as a corollary to Theorem \ref{thm:oneshot}.
\begin{proof}[Proof of Theorem \ref{thm:re}]
By Lemma \ref{le:7}, it suffices to prove that $\e \left[F(P(z^n,b_{[1:T]});p(z^n)p^U(b_{[1:T]}))\right]\rightarrow 1$, as $n\rightarrow\infty$. Using Theorem \ref{thm:oneshot} for $p_{X_{[1:T]}Z}=p_{X^n_{[1:T]}Z^n}$ and $\mathsf{M}_t=2^{nR_t}$, we get
\be
\e \left[F(P(z^n,b_{[1:T]});p(z^n)p^U(b_{[1:T]}))\right]\ge\e_{X^n_{[1:T]}Z^n}\sqrt{\dfrac{1}{1+\sum_{\emptyset\neq\ms\subseteq\mv}2^{nR_{\ms}-h(X^n_{\ms}|Z^n)}}},\label{eq:apx-asym}
\ee
where $R_{\ms}=\sum_{t\in\ms}R_t$. We define the following typical set,
\[\styp:=\left\{(x_{[1:T]}^n,z^n): (x_{\ms}^n,z^n)\in\styp(\ms), \forall \ms\subseteq[1:T]\right\},\]
where $\styp(\ms)$ is defined as follows:
\begin{align}
\styp(\ms):=\left\{(x_{\ms}^n,z^n): \frac{1}{n}h(x_{\ms}^n|z^n)\ge H(X_{\ms}|Z)-\epsilon, \right\},
\end{align}
and $\epsilon$ is an arbitrary positive number. By the weak law of large number, we have $\forall\ms\subseteq[1:T]$, $\lim_{n\rightarrow\infty}p(\styp(\ms))=1$. Hence we get $\lim_{n\rightarrow\infty}p(\styp)=1$. Using this definition, we find the following lower bound on the RHS of \eqref{eq:apx-asym},
\begin{align}
\e_{X^n_{[1:T]}Z^n}\sqrt{\dfrac{1}{1+\sum_{\emptyset\neq\ms\subseteq\mv}2^{R_{\ms}-h(X^n_{\ms}|Z^n)}}}&\ge \e_{X^n_{[1:T]}Z^n}\sqrt{\dfrac{1}{1+\sum_{\emptyset\neq\ms\subseteq\mv}2^{nR_{\ms}-h(X^n_{\ms}|Z^n)}}}\ind\{(x_{[1:T]}^n,z^n)\in\styp\}\n
&\ge \e_{X^n_{[1:T]}Z^n}\sqrt{\dfrac{1}{1+\sum_{\emptyset\neq\ms\subseteq\mv}2^{n(R_{\ms}-H(X_{\ms}|Z)+\epsilon)}}}\ind\{(x_{[1:T]}^n,z^n)\in\styp\}\n
&=p(\styp)\sqrt{\dfrac{1}{1+\sum_{\emptyset\neq\ms\subseteq\mv}2^{n(R_{\ms}-H(X_{\ms}|Z)+\epsilon)}}}\n
&\rightarrow\sqrt{\dfrac{1}{1+\sum_{\emptyset\neq\ms\subseteq\mv}2^{n(R_{\ms}-H(X_{\ms}|Z)+\epsilon)}}}.\label{eq:apx-asym1}
\end{align}
Finally, if for each $\ms\subseteq[1:T]$ we have $R_{\ms}< H(X_{\ms}|Z)-\epsilon$ then \eqref{eq:apx-asym1} tends to one as $n$ goes to infinity. This concludes the proof.
\end{proof}
\begin{remark}\label{rem:re}
The above proof can be easily extended to the case of general correlated sources $p_{X^n_{[1:T]}Z^n}$ \cite{book:han}. The general result for this general correlated sources is the same as the one for i.i.d.\ sources with one exception, the average entropy should be  replace by \underline{spectral inf-entropy}. The proof is similar to above, we only replace average entropy in the definition of $\styp$ by spectral inf-entropy. In this case we again have, $\lim_{n\rightarrow\infty}p(\styp)=1$.
\end{remark}
\section{Proof of Corollary \ref{cor:OSRB}}\label{apx:osrbcor}
Without loss of generality, we can assume $\mv=\emptyset$. We prove this corollary by induction on $T$. For $T=1$ the statement of the theorem is the same as the statement of Theorem \ref{thm:re}. Assume that this corollary holds for any $k<T$. If all the constraints of Theorem \ref{thm:re} are satisfied, then the proof follows from Theorem \ref{thm:re}. Thus, suppose that the constraint $\sum_{t\in\ms}R_t<H(X_{\ms}|Z)$ does not hold for some $\ms\subseteq [1:T]$. Note that by \eqref{eq:1000} $1\notin\ms$. On the other hand for any $\mv\subseteq[2:T]-\ms$, we have $R_1+\sum_{t\in\ms}R_t+\sum_{t\in\mv}R_t<H(X_1X_{\ms}X_{\mv}|Z)$. This and $\sum_{t\in\ms}R_t>H(X_{\ms}|Z)$ yields that for any  $\mv\subseteq[2:T]-\ms$, $R_1+\sum_{t\in\mv}R_t<H(X_1X_{\mv}|X_{\ms}Z)$. By induction assumption, this implies that $B_1$, $B_{[2:T]-\ms}$ and $(X_{\ms}^n,Z^n)$ are nearly independent. More precisely, we have
\[
P(x_{\ms}^n,z^n,b_1,b_{[2:T]-\ms})\apx{}p^U(b_1)P(x_{\ms}^n,z^n,b_{[2:T]-\ms}).
\]
Since $B_{\ms}$ is a function of $X_{\ms}^n$, we can introduce it to the above approximation. We have
\[
P(x_{\ms}^n,z^n,b_1,b_{[2:T]-\ms},b_{\ms})\apx{}p^U(b_1)P(x_{\ms}^n,z^n,b_{[2:T]-\ms},b_{\ms}).
\]
Using the second item in part 1 of \ref{le:total} gives
\[
P(z^n,\underbrace{b_1,b_{[2:T]-\ms},b_{\ms}}_{b_{[1:T]}})\apx{}p^U(b_1)P(z^n,\underbrace{b_{[2:T]-\ms},b_{\ms}}_{b_{[2:T]}}),
\]
which is the desired approximation.
\section{Proof of Lemma \ref{le:0-total}}\label{apx:le-total}
The proof of the first part can be found in \cite{cuff}. Next consider the second part. To prove this, we bound above the expectation $\e_{p_X}\tv{p_{Y|X}-q_{Y|X}}$ as follows:
\begin{align}
\e_{p_X}\tv{p_{Y|X}-q_{Y|X}}&=\sum_{x}p_X(x)\left(\frac{1}{2}\sum_y \left|p_{Y|X}(y|x)-q_{Y|X}(y|x)\right|\right)\n
                                             &\le\frac{1}{2}\sum_{x,y}\left|p_X(x)p_{Y|X}(y|x)-q_{X}(x)q_{Y|X}(y|x)\right|+\frac{1}{2}\sum_{x,y}\left|q_{X}(x)q_{Y|X}(y|x)-p_X(x)q_{Y|X}(y|x)\right|\n
                                             &=\tv{p_{X}p_{Y|X}-q_{X}q_{Y|X}}+\tv{q_{X}q_{Y|X}-p_Xq_{Y|X}}\n
                                             &\stackrel{(a)}{=}\tv{p_{X}p_{Y|X}-q_{X}q_{Y|X}}+\tv{q_{X}-p_X}\n
                                             &\stackrel{(b)}{\le}2\tv{p_{X}p_{Y|X}-q_{X}q_{Y|X}}\n
                                             &\le 2\epsilon,
\end{align}
where in the steps (a) and (b) we use the first part of this lemma. Thus there exists a specified $x\in\mx$ such that $2)$ hold. To show $2')$ we use Markov's inequality
$$p_X\left(\{x\in\mx: \tv{p_{Y|X=x}-q_{Y|X=x}}>\sqrt{\epsilon}\}\right)\leq \frac{\e_{p_X}\tv{p_{Y|X}-q_{Y|X}}}{\sqrt\epsilon}\leq 2\sqrt\epsilon.$$

Finally, consider the third part of the lemma. By the triangular inequality and the first part of the lemma, we have
\begin{align}
\e\tv{P_{X}P_{Y|X}-Q_{X}Q_{Y|X}}&\le \e\tv{P_{X}P_{Y|X}-P_{X}Q_{Y|X}}+\e\tv{P_{X}Q_{Y|X}-Q_{X}Q_{Y|X}}\n
                                                     &=   \e\tv{P_{X}P_{Y|X}-P_{X}Q_{Y|X}}+\e\tv{P_{X}-Q_{X}}\n
                                                     &\le \epsilon+\delta.
\end{align}
 \section{Proof of Lemma \ref{le:distortion}}\label{apx:le-distortion}
 We have $\e_{q_{XY}}d(X,Y)=\sum_{x,y}q_{XY}(x,y)d(x,y)\le \sum_{x,y}p_{XY}(x,y)d(x,y)+\sum_{x,y}\left|q_{XY}(x,y)-p_{XY}(x,y)\right|d(x,y)\le D+d_{max}\sum_{x,y}\left|q_{XY}(x,y)-p_{XY}(x,y)\right| \le D+\epsilon d_{\max}$.
\section{Completing Proof of Theorem \ref{thm:WR}}\label{apx:induction}
In this appendix we prove that the following two approximations are sufficient to approximate the pmf in \eqref{eq:SRpmf0} by \eqref{eq:SRpmf2},
\begin{align}
P(m,f)&\apx{}p^U(m)p^U(f)\label{eq:nncapx0}\\
P(f_{r,(k)},\hy^n_{r,([1:k-1])},x^n_{r,([1:k])},y^n_{r,(k)})&\apx{}p^U(f_{r,(k)})P(\hy^n_{r,([1:k-1])},x^n_{r,([1:k])},y^n_{r,(k)}),~~ b\in[1:B-1].\label{eq:nncapx1}
\end{align}
We prove this by induction on the number of blocks. That is, we show that the following approximation holds for each $b=0,1,\cdots,B$ by induction on $b$.
\begin{align}
\hat{P}(x^{nB},x_{r,([1:b+1])}^n,y^{n}_{r,([1:b])},y^{n}_{([1:b])},&z^{n}_{([1:b])},\hy_{r,([1:b])}^{n},m,f,f_{r,([1:b])})\n&\apx{}P(x^{nB},x_{r,([1:b+1])}^n,y^{n}_{r,([1:b])},y^{n}_{([1:b])},z^{n}_{([1:b])},\hy_{r,([1:b])}^{n},m,f,f_{r,([1:b])}).\label{eq:induction}
\end{align}
It is obvious that  the case $b=B$ is the desired approximation. First, consider the base induction $b=0$. In this case, the approximation is reduced to  $P(x^{nB},x_{r,(1)}^n,m,f)\apx{}\hat{P}(x^{nB},x_{r,(1)}^n,m,f)$, which is satisfied by the assumption $P(m,f)\apx{}p^U(m)p^U(f)$ and the first part of Lemma \ref{le:total}. Now suppose that the induction assumption holds for  $b=k-1$. We prove the induction assumption for $b=k$. Consider
 \begin{align}
 \hat{P}(x^{nB},x_{r,([1:k])}^n,&y^{n}_{r,([1:k])},y^{n}_{([1:k])},z^{n}_{([1:k])},\hy_{r,([1:k-1])}^{n},m,f,f_{r,([1:k])})\n&=\hat{P}(x^{nB},x_{r,([1:k])}^n,y^{n}_{r,([1:k-1])},y^{n}_{([1:k-1])},z^{n}_{([1:k-1])},\hy_{r,([1:k-1])}^{n},m,f,f_{r,([1:k-1])}))\n
 &~~~~\qquad\times p(y_{r,(k)}^n,y^n_{(k)},z^n_{(k)}|x^n_{(k)},x^n_{r,(k)})p^U(f_{r,(k)})
\label{eq:ind0}\\
&\apx{}P(x^{nB},x_{r,([1:k])}^n,y^{n}_{r,([1:k-1])},y^{n}_{([1:k-1])},z^{n}_{([1:k-1])},\hy_{r,([1:k-1])}^{n},m,f,f_{r,([1:k-1])}))\n
 &~~~~\qquad\times p(y_{r,(k)}^n,y^n_{(k)},z^n_{(k)}|x^n_{(k)},x^n_{r,(k)})p^U(f_{r,(k)})\label{eq:ind1}\\
 &=P(x^{nB},x_{r,([1:k])}^n,y^{n}_{r,([1:k])},y^{n}_{([1:k])},z^{n}_{([1:k])},\hy_{r,([1:k-1])}^{n},m,f,f_{r,([1:k-1])}))p^U(f_{r,(k)})\label{eq:ind2}\\
&=p^U(f_{r,(k)})P(\hy^n_{r,([1:k-1])},x^n_{r,([1:k])},y^n_{r,(k)})\n&~~~{P}(x^{nB},y^{n}_{r,([1:k-1])},y^{n}_{([1:k])},z^{n}_{([1:k])},m,f,f_{r,([1:k-1])}|\hy^n_{r,([1:k-1])},x^n_{r,([1:k])},y^n_{r,(k)})\n
&\apx{}P(f_{r,(k)},\hy^n_{r,([1:k-1])},x^n_{r,([1:k])},y^n_{r,(k)})\n&~~~{P}(x^{nB},y^{n}_{r,([1:k-1])},y^{n}_{([1:k])},z^{n}_{([1:k])},m,f,f_{r,([1:k-1])}|\hy^n_{r,([1:k-1])},x^n_{r,([1:k])},y^n_{r,(k)})\label{eq:ind3}\\
&= {P}(x^{nB},x_{r,([1:k])}^n,y^{n}_{r,([1:k])},y^{n}_{([1:k])},z^{n}_{([1:k])},\hy_{r,([1:k-1])}^{n},m,f,f_{r,([1:k])}),\label{eq:ind4}
 \end{align}
where equation \eqref{eq:ind0} is due to pmf factorization \eqref{eq:SRpmf2}, equation \eqref{eq:ind1} follows from induction assumption and the first part of Lemma \ref{le:total}, equation \eqref{eq:ind2} is due to pmf factorization \eqref{eq:SRpmf0}, equation \eqref{eq:ind3} follows from the approximation \eqref{eq:nncapx0},
and equation \eqref{eq:ind4} is due to the Markov chain $$F_{r,(b)}-\left(\hY^n_{r,([1:k-1])},X^n_{r,([1:k])},Y^n_{r,(k)}\right)-\left(X^{nB},Y^{n}_{r,([1:k-1])},Y^{n}_{([1:k])},Z^{n}_{([1:k])},M,F,F_{r,([1:k-1])}\right),$$ which is satisfied by \eqref{eq:SRpmf0}. Finally, the desired approximation \eqref{eq:induction} for $b=k$ is implied by the pmf factorizations \eqref{eq:SRpmf0} and \eqref{eq:SRpmf2}, the approximation \eqref{eq:ind4} and the first part of Lemma \ref{le:total}. This completes the induction proof.


\begin{thebibliography}{10}
\bibitem{elgamal}
A.~ El Gamal and Y.-H.~Kim,
\newblock{``Network information theory,"}
Cambridge, U.K.: Cambridge University Press, 2012.


\bibitem{sw}
D.~Slepian and J.~K.~Wolf,
\newblock{``Noiseless coding of correlated information sources,"}
\newblock{\em IEEE Trans. IT.}, vol. 19, no. 4, pp. 471--480, 1973.

\bibitem{csiszar}
I. Csiszar and J.~Korner,
\newblock{``Information theory: coding theorems for discrete memoryless systems,"}
\newblock{\em Akademiai Kiado}, 1997.


\bibitem{renner}
J.~M.~Renes and R.~Renner,
\newblock{``Noisy channel coding via privacy amplification
and information reconciliation,"}
\newblock{\em IEEE Trans. IT.}, vol. 57, no. 11, pp. 7377--7385, 2011.

\bibitem{bloch}
M.~Bloch,
\newblock{``Channel Intrinsic Randomness,"}
\newblock{in \em Proc. IEEE Int. Symp. Inform. Theory (ISIT)}, 2010, pp.2607-2611.

\bibitem{cuff}
P.~Cuff,
\newblock{``Communication in networks for coordinating behavior,"}
\emph{Ph.D dissertation}, Stanford Univ., CA. Jul. 2009.

\bibitem{cuff-trans}
--,
\newblock{``Distributed channel synthesis,"}
\newblock{\em IEEE Trans. IT.}, vol. 59, no. 11, pp. 7071--7096, 2013.
\bibitem{wcm}
A. D. Wyner,
``The common information of two dependent random variables."{\em IEEE Trans. Inf. Theory,} 21(2), 163–179, 1975.

\bibitem{han-verdu}
T. S. Han, S. Verdu, ``Approximation theory of output statistics", {\em IEEE Trans. Inf. Theory}, 39(3), 752-772, 1993.

\bibitem{Csiszar}
I.~Csiszar and J.~Korner.
\newblock{`` Broadcast channels with conﬁdential messages"}.
\newblock{\em IEEE Trans. IT.}, vol. 24, no. 3, pp. 339--348, 1978.

\bibitem{Maurer}
U.~M.~Maurer and S.~Wolf.
\newblock{``Information-Theoretic Key Agreement:
From Weak to Strong Secrecy for Free."}
in \emph{Advances in Cryptology-Eurocrypt 2000}, Lecture Notes in Computer Science. B. Preneel,
2000, p. 351.

\bibitem{hayashi}
M. ~Hayashi.
\newblock{ ``General nonasymptotic and asymptotic formulas in
channel resolvability and identiﬁcation capacity and their application
to the wiretap channel."}
\newblock{\em IEEE Trans. IT.}, vol. 52, no. 4, pp. 1562--1575, 2006.

\bibitem{bloch1}
M.~Bloch and J. N. Laneman, ``String secrecy from channel resolvability,"\newblock{\em IEEE Trans. IT.}, vol. 59, no. 12, pp. 8077--8098, 2013.

\bibitem{Steinberg}
Y.~Steinberg,
\newblock{``Resolvability theory for the multiple-access channel,"}
\newblock{\em IEEE Trans. IT.}, vol. 44, no. 2, pp. 472--487, 1998.


\bibitem{me-itw10}
M.~H.~Yassaee and M.~R.~Aref,
\newblock{``Multiple access wiretap channels with strong secrecy,"}
\newblock{in \em Proc.  Inform. Theory workshop (ITW)}, Dublin,  2010.

\bibitem{bloch2}
A.~J.~Pierrot, M.~Bloch, ``Strongly secure communications over the two way wiretap channel," {\em IEEE Trans. on Information Forensics and Security}, vol. 6, no. 3, pp. 595--605, 2011.


\bibitem{jun:channel}
J. Muramatsu  and S. Miyake,
\newblock{``Hash property and coding theorems for sparse matrices and maximal-likelihood coding,"}
\newblock{\em IEEE Trans. IT.}, vol. 56, no. 5, pp. 2143--2167, 2010.





\bibitem{jun:wt12}
--,
\newblock{``Construction of codes for the wiretap channel and the secret key agreement from correlated source outputs based on the hash property,"}
\newblock{\em IEEE Trans. IT.}, vol. 58, no. 2, pp. 671--692, 2012.

\bibitem{jun:bc}
--,
\newblock{``Construction of Slepian-Wolf source code
and broadcast channel code
based on hash property,"} arXiv:1006.5271.

\bibitem{shannon59}
C. E. Shannon,
\newblock{``Coding theorems for a discrete source with a fidelity criterion,"}
\newblock{\em in Proc. IRE Nat. Conv. Rec.,}  pp. 142–163, Mar. 1959.

\bibitem{cover}
T.~M.~Cover and M.~Chiang,
\newblock{``Duality between channel capacity and rate distortion with two-sided state information,"}
\newblock{\em IEEE Trans. IT.}, vol. 48, no. 6, pp. 1629--1638, 2002.

\bibitem{pradhan}
 S. S.~Pradhan, J.~Chou, and K.~Ramchandran,
 \newblock{``Duality between source and channel coding and its extension to the side information case,"}
\newblock{\em IEEE Trans. IT.}, vol. 49, no. 5, pp. 1181--1203, 2003.

\bibitem{verdu-gupta}
A.~Gupta and S.~Verd\`{u},
\newblock{``Operational duality between lossy compression and channel coding,"}
\newblock{\em IEEE Trans. IT.}, vol. 57, no. 6, pp. 3171--3179, 2011.

\bibitem{csiszar:96}
I. Csiszar, ``Almost Independence and Secrecy Capacity," \newblock{\em Problems of Information Transmission,} vol. 32, no. 1, pp. 40--47, 1996.

\bibitem{book:han}
T.~S.~Han,
\newblock{``Information-spectrum methods in information theory,"}
\newblock{\em Springer}, 2003.



\bibitem{miner}
P.~Minero and Y.-H.~Kim,
\newblock{``Correlated sources over broadcast channels,"}
\newblock{in \em Proc. IEEE Int. Symp. Inform. Theory (ISIT)}, 2009, pp. 2780-2784.

\bibitem{chia}
Y.-K.~Chia and A.~El Gamal,
``3-receiver broadcast channels with common and confidential messages,"
\newblock{\em IEEE Trans. IT.}, vol. 58, no. 5, pp. 2748--2765, 2012.

\bibitem{costa}
T. S. Han and M. M. H. Costa,
\newblock{``Broadcast channels with arbitrarily correlated sources,"}
\newblock{\em IEEE Trans. IT.}, vol. 33, no. 5, pp. 641-- 650, 1987.

\bibitem{nair}
C. Nair and A. El Gamal
\newblock{``The Capacity Region of a Class of 3-Receiver Broadcast Channels with Degraded Message Sets,"}
\newblock{\em IEEE Trans. IT.}, vol. 55, no. 10, pp. 4479--4493, 2009.

\bibitem{hybrid}
P.~Minero, S.~H.~Lim, and Y-H.~Kim,
\newblock{``Hybrid coding: an interface for joint source–channel coding and network communication,"} arXiv:1306.0530.


\bibitem{NNC}
S.~H.~Lim, Y.-H.~Kim, A.~El~Gamal, S.-Y.~Chung, \newblock{``Noisy Network Coding,"}
\newblock{\em IEEE Trans. IT.}, vol. 57, no. 5, pp. 3132--3152, 2011.


\bibitem{h-elgamal}
L.~Lai and H.~El~Gamal,
\newblock{``The relay-eavesdropper channel: cooperation for secrecy,"}
\newblock{\em IEEE Trans. IT.}, vol. 54, no.10, pp. 4005--4019, 2008.

\bibitem{perron}
E.~Perron,
\newblock{``Information-theoretic secrecy for wireless networks,"}
\emph{Ph.D. Thesis}, \'{E}cole Polytechnique F\'{e}d\'{e}rale de Lausanne,
Lausanne, Swiss, Sep. 2009.



\bibitem{ahlswede}
R.~Ahlswede and I. Csiszar,
``Common randomness in Information Theory and Cryptography, Part II: CR capacity,"
\newblock{\em IEEE Trans. IT.}, vol. 44, no. 1, pp. 225--240, 1998.

\bibitem{me2}
M.~H.~Yassaee, A.~Gohari and M.~R.~Aref,
\newblock{``Channel Simulation via interactive Communications,"} \newblock{in \em Proc. IEEE Int. Symp. Inform. Theory (ISIT)}, 2012, pp. 3053--3057, also available in
{\em arXiv:1203.32}.
\bibitem{me3}
F.~Haddadpour, M.~H.~Yassaee, A.~Gohari and M.~R.~Aref,
\newblock{``Coordination via a relay,"} \newblock{in \em Proc. IEEE Int. Symp. Inform. Theory (ISIT)}, 2012, pp. 3048--3052.

\bibitem{farzin13}
F.~Haddadpour, M.~H.~Yassaee, M.~R.~Aref and A.~Gohari,
\newblock{``When is it possible to simulate a DMC channel from another?,"}
\newblock{in \em Proc.  Inform. Theory workshop (ITW)}, 2013.


\end{thebibliography}
\end{document}